\documentclass[letterpaper,11pt,reqno]{amsart}

\makeatletter

\usepackage{amssymb}
\usepackage{latexsym}
\usepackage{amsbsy}
\usepackage{subcaption}
\captionsetup[subfigure]{labelfont=rm}
\usepackage{amsfonts}
\usepackage{float}
\usepackage{hyperref}
\usepackage{graphicx}
\usepackage{enumerate}
\usepackage{enumitem}
\usepackage{enumitem}
\usepackage{mathrsfs}
\usepackage{url}
\usepackage[dvipsnames]{xcolor}
\usepackage{subcaption}
\usepackage{tikz}

\def\marginpar#1{\ignorespaces}

\textheight=670pt \textwidth=490pt \oddsidemargin=2pt \evensidemargin=2pt \topmargin=-20pt
\headheight=6pt
\parindent=0pt
\parskip=2pt

\newtheorem{theorem}{Theorem}[section]
\newtheorem{lemma}[theorem]{Lemma}
\newtheorem{proposition}[theorem]{Proposition}

\newtheorem{definition}[theorem]{Definition}

\numberwithin{equation}{section}
\definecolor{ao}{rgb}{0.0, 0.5, 0.0}
\makeatother
\begin{document}
\title[Finite fuel game]{A class of stochastic games and moving free boundary problems}
\author[Xin Guo]{{Xin} Guo}
\address{Department of Industrial Engineer and Operations Research, UC Berkeley. 
} \email{xinguo@berkeley.edu}

\author[Wenpin Tang]{{Wenpin} Tang}
\address{Department of Industrial Engineer and Operations Research, Columbia University.  
} \email{wt2319@columbia.edu}

\author[Renyuan Xu]{{Renyuan} Xu}
\address{Department of Industrial \& Systems Engineering, University of Southern California. }
\email{renyuanx@usc.edu}
\begin{abstract}
  In this paper we propose and analyze a class of 
  $N$-player stochastic games that include finite fuel stochastic games as a
  special case.  We first derive sufficient conditions for the Nash
  equilibrium (NE) in the form of a verification theorem. The associated Quasi-Variational-Inequalities
  include an essential game component regarding the interactions among
  players, which may be interpreted as the analytical representation of the conditional
  optimality for NEs.  The derivation of NEs involves
  solving first a multi-dimensional free boundary problem and then a
  Skorokhod problem.
  Finally, we present an intriguing connection between these NE strategies and  {\it controlled}
  rank-dependent stochastic differential equations.
\end{abstract}

\maketitle

\section{Introduction}

\quad Recently there are renewed interests in
$N$-player non-zero-sum stochastic games, inspired by the rapid growth in the
 theory of Mean Field Games
(MFGs) led by the pioneering work of  \cite{HMC06, LL06a, LL06b, LL07}.  
In this paper, we formulate and analyze a class of
stochastic $N$-player games that originated from the classic finite
fuel problem.  There are many reasons to consider this type of games.
Firstly, the finite fuel problem \cite{BC67, 
BSW, Karatzas83} is one of the landmarks in stochastic
control theory, therefore mathematically  a game formulation is natural.  Secondly, 
 in addition to the interest of stochastic control theory \cite{AB2006,BR2006, CMR1985,SS1991,SS1989}, its simple
yet insightful solution structures have had a wide range of
applications including economics and finance \cite{AS1998,  CFR2013, DR1990,LZ2008}, operations research and management \cite{GKTY2011,KRUK2000,KWON2020}, and queuing theory \cite{KT1992}.    Thirdly,  prior success in analyzing its stochastic game counterpart has been restricted to the special case of two-player games \cite{DF2016, HM2014,HSZ2015, KL2011,  KZ2015, Mannucci2004} or without the fuel constraint \cite{dianetti2020nonzero,GX2018}.  

\quad In this paper, we 
will analyze a class of  $N$-player stochastic games that include the finite fuel stochastic game as a special case.
The class of stochastic games presented in this paper goes as
follows. There are $N$ players whose dynamics ${\pmb{X}}_t=(X^1_t,
\cdots, X^N_t)$ are governed by the following $N$-dimensional diffusion
process:
\begin{equation}
dX^i_t ={b_i}(\pmb{X}_{t-})dt+ \pmb{\sigma_i}(\pmb{X}_{t-}) d\pmb{B}_t + d\xi^{i+}_{t} - d\xi^{i-}_{t}, \quad X^i_{0-} = x^i,  \quad  (i=1, \cdots, N), 
\end{equation}
where $\pmb{B} :=(B^1, \cdots, B^N)$ is a standard $N$-dimensional
Brownian motion in a filtered probability space $(\Omega, \mathcal{F},
\{\mathcal{F}_t\}_{t \ge 0}, \mathbb{P})$, with drift $\pmb{b}:= (b_1,
\cdots, b_N)$ and covariance matrix $\pmb{\sigma}:=(\pmb{\sigma_1},
\cdots, \pmb{\sigma_N})$ satisfying appropriate regularity conditions.
Player $i$'s control  $(\xi^{i,+},\xi^{i,-})$ is of finite variation.
Each player has access to some or all of $M$ types of resources.
Players interact through their objective functions $h^i(X^1_t, \cdots,
X^N_t)$, as well as their shared resources which are the ``fuels'' of
their controls.  The accessibility of these resources to the players and
how these resources are consumed by their respective players are
governed by a matrix $\pmb{A}:=(a_{ij})_{i,j}\in
\mathbb{R}^{N \times M}$. For instance,  when $M=1$ and $\pmb{A}=[1,1,\cdots,1]^T\in
\mathbb{R}^{N\times1 }$, this game ($\pmb{C_p}$)
corresponds to the $N$-player finite fuel game where the $N$ players
share a fixed amount of the same resource.  When $M=N$ and
$\pmb{A}=\pmb{I_N}$, this is an $N$-player game ($\pmb{C_d}$) where each
player has her individual fixed amount of resource.  In general, this
matrix $\pmb{A}$ describes the network structure of the $N$-player
game.

\quad The goal for player $i$ in the game is  to
minimize
\begin{equation*}
\mathbb{E} \int_0^{\infty} e^{-\alpha t} h^i(X^1_t, \cdots, X^N_t) dt, 
\end{equation*}
over appropriate admissible game strategies, which are specified in Section
\ref{section:setup}.   Note that this $N$-player game cannot be simply analyzed with
an MFG approach as the network structure would collapse if an
aggregation approach was applied.

\quad We will analyze the NEs of this stochastic game.  We first
derive sufficient conditions for the NE policy in the form of a
verification theorem (Theorem \ref{thm:verification}), which reveals
an essential game element regarding the interactions among players.
This is the Hamilton--Jacobi--Bellman (HJB) representation of the
conditional optimality for NE in a stochastic game.  To
understand the structural properties of the NEs, we proceed further to
analyze this stochastic game in terms of the game values, the NE
strategies, and the controlled dynamics.  Mathematically, the analysis
involves solving first a multi-dimensional free boundary problem and
then a Skorokhod problem with a {\em moving} boundary.  The boundary
is ``moving'' in that it moves in response to both  changes of the
system and  controls of other players.  The analytical
solution is derived by first exploring the two special games
$\pmb{C_p}$ and $\pmb{C_d}$.  Analyzing these two types of games
provides key insights into the solution structure of the general
game. Finally, we reformulate the NE strategies in the form of
controlled rank-dependent stochastic differential equations (SDEs),
and compare game values between games $\pmb{C_p}$ and $\pmb{C_d}$.

\smallskip
\paragraph{\bf Main contributions.}

(i) In the verification theorem for $N$-player games, we obtain the form of the HJB equations for general stochastic games with
singular controls.  Unlike all previous analysis that focused on
two-player games, we show that in addition to the standard HJBs that
correspond to stochastic control problems, there is an essential term
that is unique to stochastic games.  This term represents the
interactions among players, especially the ones who are active and
those who are waiting. This critical term was hidden in two-player
stochastic games and was previously (mis)understood as a regularity
condition.

(ii) The structural difference between games and control problems is
further revealed in the explicit solution to the NEs for $N$-player
games.  
In a control problem,
a free boundary depends on the state of the system; in stochastic
games, however, the ``face'' of the boundary moves based on the action
of herself and interaction among players in the game (Figure \ref{strategy_comparison}).
Note that this  free boundary for stochastic games with an infinite time horizon {\it moves} in a different sense from the one in \cite{CMR1985} for finite time control problems where the boundary is time dependent. Rather it  moves due to changes of the system and the competition in the game.

(iii) This difference is further highlighted in the framework of
controlled rank-dependent SDEs.  To the best of our knowledges, this
is the first time a stochastic game is explicitly connected with
rank-dependent SDEs in a more general form. This new form of
rank-dependent SDEs presents a fresh class of yet-to-be studied SDEs
(Section \ref{section:discussion}).

(iv) We recast the  controlled dynamics of the game solution
 in the framework of {\it controlled} rank-dependent
SDEs.  Compared with  the well-known rank-dependent SDEs, rank-dependent SDEs
with an additional control component are new.  We
establish the existence of the solution by directly constructing a
reflected diffusion process.  (See Section \ref{section:discussion}
for further discussions.)

(v) Finally, stochastic games considered in this paper are resource
allocation games. Resource allocation problems have a wide range of
applications including { inventory management, resource allocation,} cloud computing, smart power grid control, and
multimedia wireless networks \cite{GLSSB2018, GNT2006,LNPST2003, SMSW2012}.  However, the existing literature has been
unsuccessful in analyzing the resource allocation problem in the
setting of stochastic games.  Besides the technical contributions, our
analysis provides a useful economic insight: in a stochastic game of
resource allocations, sharing has lower cost than dividing and pooling
yields the lowest cost for each player.

\smallskip
{\paragraph{\bf Related work.}

There are a number of papers on non-zero-sum two-player games with singular controls.  By treating one player as a controller and the other as a stopper, Karatzas and Li~\cite{KL2011} 
analyze the existence of an NE for the game using a BSDE approach.
Hernandez-Hernandez, Simon, and Zervos~\cite{HSZ2015} study
the smoothness of the value function and show
that the optimal strategy may not be unique when the controller enjoys a first-move advantage.
Kwon and Zhang~\cite{KZ2015} 
  investigate a game of irreversible investment with singular controls and strategic exit.  
  They characterize  a class of market perfect equilibria and  identify a set of conditions under which the outcome of the game may be unique despite the multiplicity of the equilibria. 
  De Angelis and Ferrari~\cite{DF2016}
  establish the connection between singular controls and optimal stopping times for a non-zero-sum two-player game.   Mannucci~\cite{Mannucci2004} and Hamadene and Mu~\cite{HM2014} consider the fuel follower problem in a finite-time horizon with a bounded velocity, and establish via different techniques the existence of an NE of the two-player game. Very recently,  \cite{GX2018} compare the $N$-player game versus the MFG for the fuel follower problem.
  All  these works are  without the fuel constraint
and  are essentially built on one-dimensional stochastic control problems.
Furthermore, except for \cite{GX2018}, all of these papers are
restricted to the case of $N=2$.  
To the best of our knowledge,  our work is the first to complete the
mathematical analysis on an $N$-player stochastic game based on an
original two-dimensional control problem.}


\quad In our work the controlled dynamics 
are recast in the framework of {\it controlled} rank-dependent
SDEs. Rank-dependent SDEs without controls arise in the ``Up the
River'' problem \cite{Aldous} and in stochastic portfolio
theory \cite{F}, including the well-studied {\em Atlas model} \cite{BFK, IPBKF}.

\smallskip
\paragraph{\bf Notations and organization.}

Throughout the paper, we denote vectors/matrices by bold case letters,
e.g., $\pmb{x}$ and $\pmb{X}$. The transpose of a real vector $\pmb{x}$
is denoted as $\pmb{x}^T$. For a vector $\pmb{x}$, $\|\pmb{x}\|$
denotes its $l_2$ norm. For a matrix $\pmb{X}$, $\|\pmb{X}\|$ denotes
its spectral norm.

\quad The paper is organized as follows.
Section \ref{section:setup} presents the mathematical formulation of
the $N$-player game. Section \ref{section:verification} provides
a verification theorem for sufficient conditions of the NE of the game and the existence of Skorokhod problem for NE strategies.
Section \ref{section:gamepooling} studies game $\pmb{C_p}$ and
Section \ref{section:gamedivide} studies game $\pmb{C_d}$. With the
insight from these two games, Section \ref{section:general} analyzes
the general $N$-player game $\pmb{C}$.  Section \ref{section:relation}
compares games $\pmb{C_p}$, $\pmb{C_d}$ and $\pmb{C}$, discusses the
game values and their economic implications, and unifies their
corresponding controlled dynamics in the framework of the controlled
rank-dependent SDEs.
\section{Problem Setup}
\label{section:setup}


{\bf Controlled dynamics.}
Let  $(X^i_t)_{t\ge 0} $ be the position of player $i, 1 \le i \le N$. 
In the absence of
controls,   ${\pmb{X}}_t=(X^1_t, \cdots, X^N_t)$ is governed by the stochastic differential
equation (SDE):
\begin{equation} 
\label{Eq. diffusion}
d{\pmb{X}}_t=\pmb{b}({\pmb{X}}_t)
dt+\pmb{\sigma}({\pmb{X}}_{t})   d\pmb{B}_t, \quad  {\pmb{X}}_{0-}=(x^1, \cdots, x^N),
\end{equation} 
where $\pmb{B}: =(B^1, \cdots, B^N)$ is a standard $N$-dimensional Brownian motion in a filtered probability space $(\Omega,
\mathcal{F}, \{\mathcal{F}_t\}_{t \ge 0}, \mathbb{P})$, with the drift  $\pmb{b}(\cdot):=(b_1(\cdot),\cdots,b_N(\cdot))$ and the covariance matrix $\pmb{\sigma}(\cdot):=(\sigma_{ij}(\cdot))_{1 \le i , j \le N}$. 
{
As will be explained later in Section \ref{sc33}, we consider a weak formulation of the stochastic game.
To ensure the existence and the uniqueness of the SDE, $\pmb{b}(\cdot)$ and $\pmb{\sigma}(\cdot)$ are assumed to satisfy the condition:
 

\begin{enumerate}[font=\bfseries,leftmargin=2cm]
\item[H1.]
$\pmb{b}(\cdot)$ and $\pmb{\sigma}(\cdot)$ are bounded and continuous, and 
$\pmb{\sigma}(\cdot)$ is uniformly elliptic, i.e., there exists 
$\alpha > 0$ such that $
\xi^T \pmb{\sigma}(\pmb{x}) {\pmb{\sigma}^{\top}(\pmb{x})} \xi \ge \alpha |\xi|^2$,  for all $\pmb{x} \in \mathbb{R}^N, \, \xi \in \mathbb{R}^N.
$
\end{enumerate}

Assumption {\bf H1} ensures the existence of a {\it weak solution} to \eqref{Eq. diffusion} \cite{SV2007}.
}
Here and throughout the rest of the paper,  the infinitesimal generator $\mathcal {L}$ is 
\begin{equation}
\label{eqn:generator}
\mathcal{L} : = \sum_i b_i(\pmb{x}) \frac{\partial}{\partial x^i} + \frac{1}{2} \sum_{i,j} (\pmb{\sigma}(\pmb{x}) \pmb{\sigma}(\pmb{x})^T)_{i,j} \frac{\partial^2}{\partial x^i \partial x^j},
\end{equation}
where $\pmb{\sigma}(\pmb{x}) \pmb{\sigma}(\pmb{x})^T$ is assumed to be positive-definite for every $\pmb{x}\in \mathbb{R}^N$.

\quad If a control is applied to $X^i_t$,  then $X^i_t$ evolves as
\begin{equation}
\label{Eq:Xi}
dX^i_t = b_i(\pmb{X}_{t-})dt+ \pmb{\sigma}_i(\pmb{X}_{t-})  d\pmb{B}_t + d\xi^{i+}_{t} - d\xi^{i-}_{t}, \quad X^i_{0-} = x^i,
\end{equation}
where $\pmb{\sigma}_i$ is the $i^{th}$ row of the  covariance matrix $\pmb{\sigma}$.
Here the control  $(\xi^{i+}, \xi^{i-})$ is  a pair of  non-decreasing and c\`adl\`ag processes. In other words, $(\xi^{i+}, \xi^{i-})$ is the minimum decomposition of the finite variation process $\xi^i$ such that $\xi^i:=\xi^{i+} - \xi^{i-}$.

\vskip 3 pt

{\bf Game objective.}
The game is for player $i$  to minimize, for all $(\xi^{i+}, \xi^{i-})$ in  an appropriate admissible control set, over an infinite time horizon, the following objective function,
\begin{equation}
\label{eqn:objective}
\mathbb{E} \int_0^{\infty} e^{-\alpha t} h^i(X^1_t, \cdots, X^N_t) dt.
\end{equation}
Here $\alpha > 0$ is a constant discount factor.
In this game, players interact through their respective objective functions $h^i(\pmb{x}): \mathbb{R}^N \rightarrow \mathbb{R}^+$.

\begin{enumerate}[font=\bfseries,leftmargin=2cm]
\item[H2.] Each $h^i (\pmb{x})$  is  
twice differentiable, with  $k \le ||\nabla^2 h^{i }(\pmb{x})|| \le K$ for some $K > k >0$. 
\end{enumerate}
For example, $h^i(\pmb{x})=h(x^i-\frac{\sum_{j=1}^N x^j}{N})$ with $h(\cdot) \ge 0$ is a distance function between the position of player $i$ and the center of all players.  

\quad Note that in the objective function \eqref{eqn:objective}, there is no cost of control. 
With this formulation, the explicit solution structure of the NE for game \eqref{eqn:objective} is neat and insightful.
It is entirely possible to consider an N-player game with additional cost of control. For instance, one might study  the game formulation of  \cite{Karatzas83} with a proportional cost of control. We conjecture that the solution structure would be similar although the analysis will be more involved. This will be an interesting problem for future analysis. 

\vskip 3 pt
{\bf Admissible control policies.}
{Denote $\check{\xi}^i_t$
  as the cumulative amount of controls/resources consumed by player $i$ up to  time $t$. When ${\xi}^i_t$ is of finite variation, then there is a unique decomposition such that  ${\xi}^i_t:= \xi^{i+}_t- \xi^{i-}_t$, hence $\check{\xi}^i_t:= \xi^{i+}_t+ \xi^{i-}_t$. Here $\xi^{i+}$ and $\xi^{i-}$ are  non-decreasing and c\`adl\`ag processes which can be further decomposed in a differential form,
\begin{eqnarray}\label{eqn: differential_decomposition}
d \xi^{i\pm}_t = d (\xi^{i\pm}_t)^c + \Delta \xi^{i\pm}_t,
\end{eqnarray}
where $d (\xi^{i\pm}_t)^c $ is the continuous component and $\Delta \xi^{i\pm}_t:= \xi^{i\pm}_t -  \xi^{i\pm}_{t-}$ is the jump component of $d \xi^{i\pm}_t$. {Equivalently, we can write $\xi^{i\pm}_t = (\xi^{i\pm}_t)^c+ \sum_{s \leq t}\Delta \xi^{i\pm}_s.$
}

\quad Meanwhile, we consider a {\em weak formulation} of the stochastic game. (See  \cite[Chapter 2, Section 4.2]{YZ99} and  \cite[Section 5]{GR06}
for more discussions on weak formulations of  stochastic control problems).
That is, 
$(\pmb{B}_t, \, t \ge 0)$ is an $N$-dimensional Brownian motion with some filtration $(\mathcal{F}_t, \, t \ge 0)$, 
and the admissible control set $\mathcal{S}_N(\pmb{x},\pmb{y})$ for the $N$-player game  is 
\begin{eqnarray}
\label{eq:admset}
\begin{aligned}
\mathcal{S}_N(\pmb{x},\pmb{y}) : = \Bigg\{\pmb{\xi}\,:&\, \,\xi^i \in \mathcal{U}_N^{i} \mbox{ for } 1 \le i \le N, \,\,   \sum_{i = 1}^N \int_{0}^{\infty} \frac{a_{ij}Y_{t-}^j}{\sum_{k=1}^M a_{ik}Y_{t-}^k} d\check{\xi}^i_t \le y^j, 1 \le j \le M, \,\,  \\
&\,\,\mathbb{P}\left(\Delta \xi^i_t  \Delta \xi^k_t  \ne  0 \right) = 0 \mbox{ for all } t \ge 0 \mbox{ and } i \ne k \Bigg\},
\end{aligned}
\end{eqnarray}
where 
\begin{eqnarray*}
\mathcal{U}^{i}_N:= \Big\{ (\xi^{+}, \xi^{-}): \xi^{+} \mbox{ and } \xi^{-} \mbox{ are } { \mathcal{F}_t} \mbox{-progressively measurable, c\`adl\`ag, non-decreasing}, \\
\mathbb{E}\left[\int_{0}^{\infty}e^{-\alpha t}d\xi^{\pm}_t\right]<\infty \mbox{ and } \xi^{+}_{0-} = \xi^{-}_{0-} = 0\Big\},
\end{eqnarray*}
and 
\begin{equation}
\label{eq:XM}
Y_t^j = y^j - \sum_{i = 1}^N \int_{0}^t \frac{a_{ij}Y_{s-}^j}{\sum_{k=1}^M a_{ik}Y_{s-}^k}d \check{\xi}^i_s  \in \mathbb{R}_{+} \quad \mbox{and} \quad Y^j_{0-} = y^j,
\end{equation}
with $a_{ij}=0$ or $1$ for $1\le i\le N$ and $1\le j\le M$, $\sum_{j=1}^M a_{ij} >0$ for all $i=1, \cdots, N$, and $\sum_{i=1}^N a_{ij} >0$ for all $j = 1, \cdots, M$. 

}

\quad Here is the intuition for the admissible control set $\mathcal{S}_N(\pmb{x},\pmb{y})$. 
In this game, each player $i$ will make  decisions based on 
 the current positions of all players and the available resources. 
 In addition to  this adaptedness constraint,   the admissible control set $\mathcal{S}_N(\pmb{x},\pmb{y})$ specifies the resource allocation policy for each player.  
For $M$ different types of resources,  define $\pmb{A}:=(a_{ij})_{i,j}\in \mathbb{R}^{N \times M}$  to be the \textit{adjacent matrix}
with $a_{ij}= 0 $ or $1$. Then  $\pmb{A}$ describes the relationship between the players and the types of available resources, with $a_{ij}=1$ meaning that resource of type $j$  is available to  player $i$, and $a_{ij}=0$ meaning that resource of type $j$ is inaccessible to player $i$.  
The condition $\sum_{j=1}^M a_{ij} >0$ for all $i=1, \cdots , N$ implies that each player $i$ has access to at least one resource,
and the condition $\sum_{i=1}^N a_{ij} >0$ for all $j = 1, \cdots, M$ indicates that each resource $j$ is available to at least one player.
When player $i$ would like to exercise control, she will consume resources proportionally to all the resources available to her. She will stop consuming once all the available resources hit level zero.
This results in the form of the integrand in the expression of \eqref{eq:XM}. Note that the denominator is always no smaller than the numerator hence the integrand is well-defined with the convention $\frac{0}{0}=0$. 

\quad Take an example of $N=4$, $M=6$, with the matrix $\pmb{A}$ defined as in Figure \ref{allocation_constraint}.
The resource allocation policy is illustrated in Figure \ref{fig:allocation_policy}, with the amount of available resource  $y^1$ and $y^2$ of type one and two respectively. When player one wishes to apply controls of amount $\Delta$, say $\Delta\leq y^1+y^2$,  she will consume resources randomly from type one and two. 
So player one will take $\Delta \frac{y^1}{y^1+y^2}$ from resource one and $\Delta \frac{y^2}{y^1+y^2}$ from resource two. Finally,  the condition $\mathbb{P}(\Delta \xi^i_t \Delta \xi^k_t \ne  0) = 0$ for all $t \ge 0$ and $i \ne k$ excludes the possibility of simultaneous jumps of any two out of $N$ players, which facilitates designing feasible control policies when controls involve jumps. 
This condition is not a  restriction,  and instead should be interpreted as a {\it regularization}. See also \cite{BCG2019,GX2018,KZ2015}. 
When
there are multiple players who would like to jump at the same time, one can simply design a proper {\it order}, for instance by indexing the players and their jump orders,  so that they will move {\it sequentially}. 
\begin{figure}[H] 
\centering
\begin{subfigure}{.40\textwidth}
\[\pmb{A} =
\begin{bmatrix}
    1,1,0,0,0,0 \\
   0,0,1,0,1,0 \\
   0,0,0,0,0,1\\
  0,0,0,1,0,0
\end{bmatrix}\]
\end{subfigure}
\begin{subfigure}{.30\textwidth}
  \includegraphics[width=1.0\linewidth]{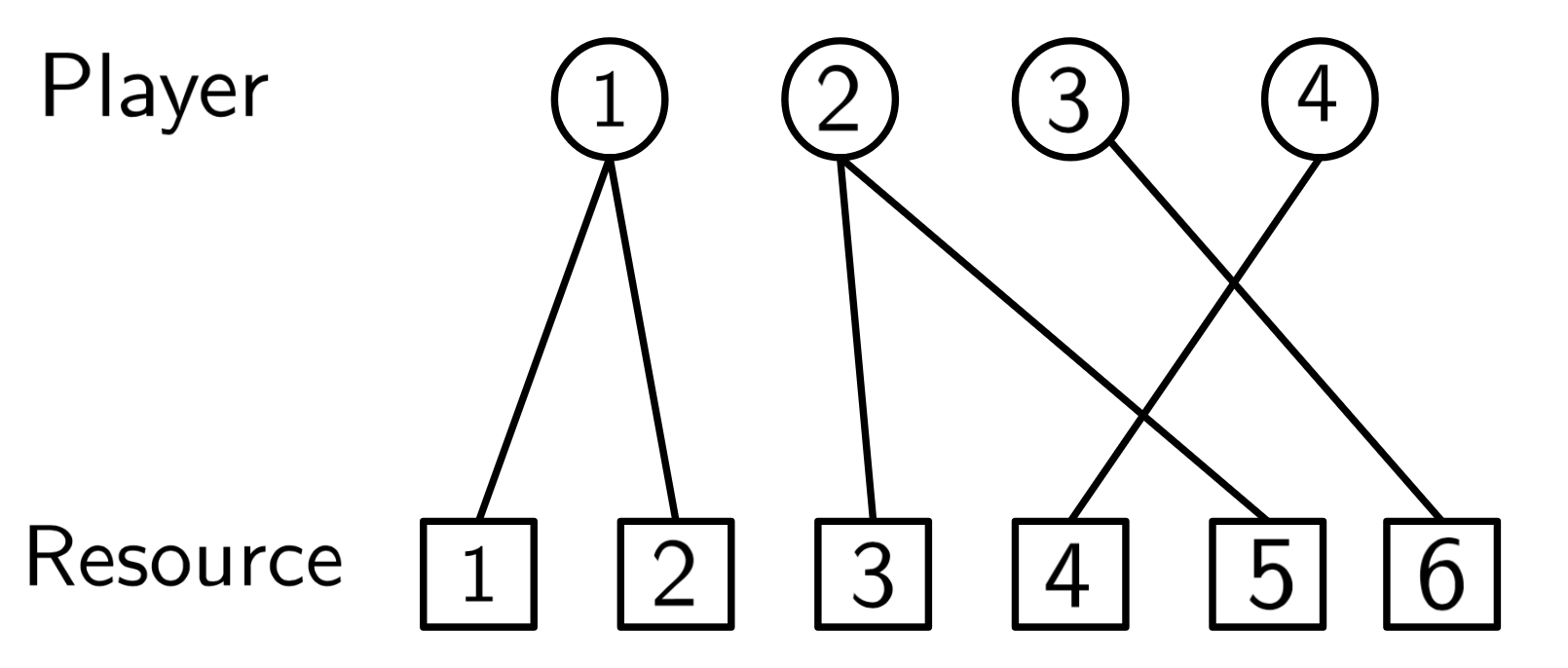}
  \subcaption{Relationship.}\label{fig:allocation_constraint}
\end{subfigure}
\begin{subfigure}{.25\textwidth}
  \includegraphics[width=1.0\linewidth]{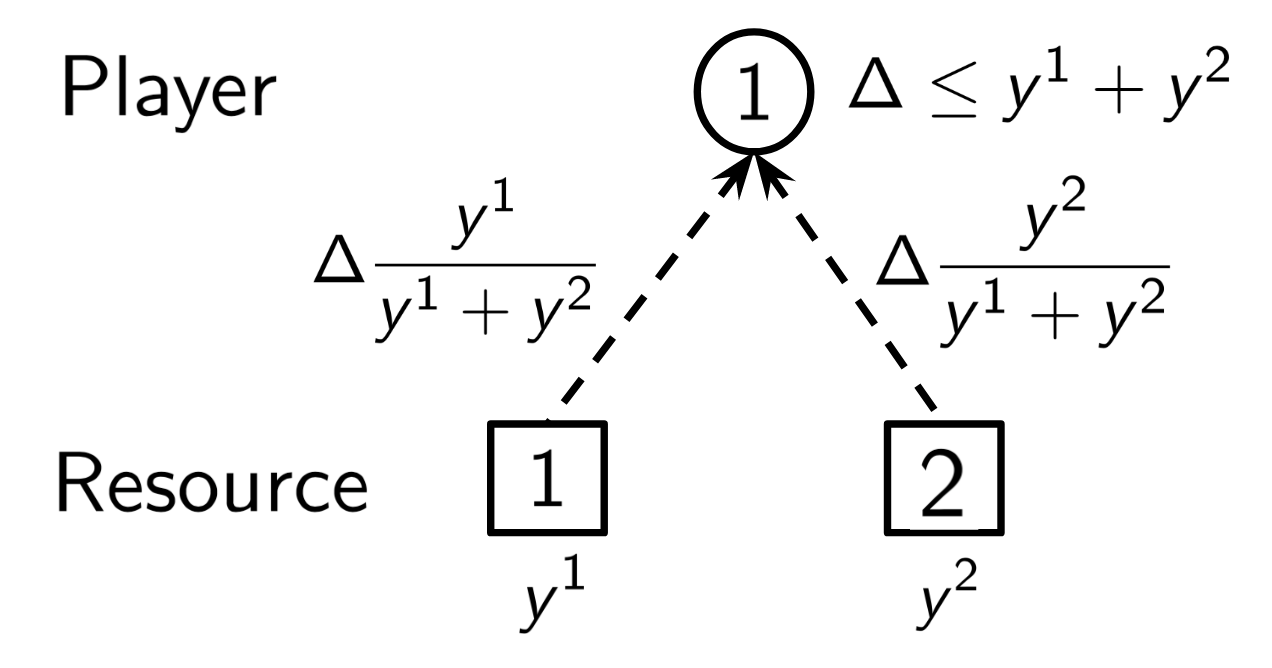}
  \subcaption{Resource allocation policy.}\label{fig:allocation_policy}
\end{subfigure}
\caption{Example of adjacent matrix $\pmb{A}$, relationship between the players and resources when $N=4$ and $M=6$.}
\label{allocation_constraint}
\end{figure}



{\bf Game formulation {and game criterion}.}
Let $\pmb{\xi}:=(\xi^1, \cdots, \xi^N)$ be the controls from the players. Let $\pmb{x}:= (x^1, \cdots, x^N)$ and  $\pmb{y}:=(y^1,\cdots,y^M)$.  
Then the  stochastic game is for each player $i$ to  minimize 
\begin{equation}
\label{eq:JA_game}
J^i(\pmb{x}, \pmb{y}; \pmb{\xi}):= \mathbb{E} \int_0^{\infty} e^{-\alpha t} h^i(\pmb{X}_t) dt,
\end{equation}
subject to the dynamics in \eqref{Eq:Xi} and \eqref{eq:XM} with the constraint in  \eqref{eq:admset}.
There are two special games of particular interest.
One is a game where all players pool their resources such that 
\begin{equation}
\label{pool}
{\sum_{i = 1}^N  \check{\xi}^i_{\infty}\leq y<\infty.}
 \end{equation}  
 
When $N=1$, this is a single player game corresponding to the finite fuel control problem which is well studied in \cite{BSW, Karatzas83}. 
We call this game a pooling game $\pmb{C_p}$. Clearly in terms of the adjacent matrix $\pmb{A}$, this corresponds to
$M=1$, and $\pmb{A}=[1,1,\cdots,1]^T\in \mathbb{R}^{N\times1 }$.
Another is a game where players divide the resource up front such that  
\begin{equation}
\label{dividing}
{\check{\xi}^i_{\infty}\leq y^i,}
\end{equation}
where $y^i$ is  the total amount of controls that player $i$ can exercise.
This game is called $\pmb{C_d}$, 
with $M=N$, and $\pmb{A}=\pmb{I_N}$.
Finally, we refer the game with a general matrix $\pmb{A}$ as game  $\pmb{C}$.

\quad We will analyze the $N$-player game under the criterion of  NE. 
Recall the definition of NE of $N$-player games. 
\begin{definition}\label{def:nash}
A tuple of admissible controls $\pmb{\xi}^{*}:=(\xi^{1*}, \cdots \xi^{N*})$ is a  NE of the $N$-player game (\ref{eq:JA_game}), if for each $\xi^i\in \mathcal{U}_N^i$ such that $(\pmb{\xi}^{-i*}, \xi^i)\in \mathcal{S}_N(\pmb{x},\pmb{y})$,
\begin{equation*}
J^i \left(\pmb{x},\pmb{y}; \pmb{\xi}^{*} \right) \le J^i \left(\pmb{x},\pmb{y}; \left(\pmb{\xi}^{-i*}, \xi^i \right) \right),
\end{equation*}
where $\pmb{\xi}^{-i*}=(\xi^{1*},\cdots,\xi^{i-1*},\xi^{i+1*},\cdots,\xi^{N*})$ and $(\pmb{\xi}^{-i*},\xi^{i})=(\xi^{1*},\cdots,\xi^{i-1*},\xi^{i},\xi^{i+1*},\cdots,\xi^{N*})$.
Controls that give  NEs are called the Nash Equilibrium Points (NEPs). The associated value function $J^i \left(\pmb{x},\pmb{y}; \pmb{\xi}^{*} \right)$ $(i=1,2,\cdots,N)$ is called the game value {for player $i$}.
\end{definition}

\section{ NE Game Solution: Verification Theorem and Skorokhod Problem}
\label{section:verification}

\quad In this section, we present general strategies to get the NE solution.
First we derive heuristically the quasi-variational inequalities (QVIs) for the value function (Section \ref{sc31}), which is then used for deriving sufficient conditions of an NEP via a verification theorem (Section \ref{sc32}).
We emphasize that both the QVIs  in Section \ref{sc31} and the verification theorem in Section \ref{sc32} hold for general diffusion processes given in \eqref{Eq:Xi}.
For explicitness, we assume further that 
\begin{enumerate}[font=\bfseries,leftmargin=2cm]
\item[H1$'$.] $b_i=0,\qquad i=1,2,\cdots,N,  \qquad \text{and} \qquad\pmb{\sigma}=\pmb{I}_N.$
\end{enumerate}
Moreover, we assume that $h^i(\pmb{x}):= h\left({x}^i-\frac{\sum_{j=1}^N x^j}{N}\right),$ such that
\begin{enumerate}[font=\bfseries,leftmargin=2cm]
\item[H2$'$.] $h$ is symmetric, $h(0) \ge 0$, $h^{\prime \prime}$ is non-increasing  on $\mathbb{R}_{+}$ and $k \leq h^{\prime \prime} \leq K$ for some $0 <k <K$.
\end{enumerate}
 These additional conditions are only used to facilitate the construction of  the  NEP, as well as solving the corresponding Skorokhod problem  presented in Section \ref{sc33}. {

One basic example for $h$ under assumption {\bf H2$'$} is a quadratic function  $h(x) = ax^2 + b$ with $a\in [k,K]$ and $b\ge 0$. Our assumption also holds for a more general class of functions. Take $h$ an even function such that $h'' = f$, where $f$ is an even function, non-increasing on $\mathbb{R}_{+}$ and bounded between $k$ and $K$.
There are many such functions $f$.
A particular example is $f = a$ (constant), which will give $h(x) = ax^2 + b$ (quadratic function). Another possible example is $f = b + c\,\exp (-d x^2)$ with $c>0$ and $d>0$. In the original finite fuel problem \cite{BSW}, the authors treated the quadratic cost $h(x) = x^2$. 
Later Karatzas \cite{Karatzas83} noticed that the results can be extended to any cost function which satisfies Assumption {\bf H2$'$}.  }

{\subsection{Quasi-variational Inequalities}
\label{sc31}
We first derive heuristically the associated QVIs   {of game value under the notion of NE (see Definition \ref{def:nash})} for game (\ref{eq:JA_game}). {The key idea is to utilize the conditional optimality condition introduced in Definition \ref{def:nash}. Namely, player $i$ solves a single agent optimal control problem with optimal solution $\xi^{i*}$ when other agents are applying $\pmb{\xi}^{-i*}$.}  {To start, we define the following  partition of $\mathbb{R}^N\times\mathbb{R}^M_{+}$. Denote $\mathcal{A}_i \subseteq \mathbb{R}^{N} \times \mathbb{R}^M_{+}$ as the $i^{th}$ player's action region and $\mathcal{W}_i: = (\mathbb{R}^{N} \times \mathbb{R}^M_{+}) \setminus \mathcal{A}_i$
as her waiting region.  Let $\mathcal{A}^{-i}:=\cup_{j \ne i} \mathcal{A}_j$ and $\mathcal{W}_{-i}:=\cap_{j \ne i} \mathcal{W}_j$.
Then players' actions are as follows: player $i$ controls if and only if the process $(\pmb{X}_t,\pmb{Y}_t)$ enters  $\mathcal{A}_i$. 
This partition is {usually defined through the quasi-variational inequalities and is} also part of the solution to be derived.
Next,} define the intervene operator $\Gamma$ as
\begin{eqnarray}\label{gamma}
\Gamma_j v^i(\pmb{x},\pmb{y})=\sum_{k=1}^M \frac{a_{jk}y^k}{\sum_{s=1}^M a_{js}y^s}v^i_{y^k}(\pmb{x},\pmb{y}),
\end{eqnarray}
for $(\pmb{x},\pmb{y})\in \mathbb{R}^N\times \mathbb{R}^M_{+}$ and $i, j=1,2,\cdots, N$. Here $v^i_{y^k}:=\frac{\partial v^i}{\partial y^k}$ ($i=1,2,\cdots,N$ and $k=1,2,\cdots,M$). 
Suppose player $j$ takes a possibly suboptimal action $\Delta \xi^{j,+}>0$, then by the resource allocation policy \eqref{eq:XM}, for player $i$,
\begin{eqnarray}\label{ine_2}
v^i(\pmb{x},\pmb{y}) \leq v^i\left(\pmb{x}^{-j}, x^j + {\Delta} \xi^{j,+} , \pmb{y}- \left(\frac{a_{j1 }y^1}{\sum_{k=1}^M a_{jk} y^k},\cdots,\frac{a_{jM }y^M}{\sum_{k=1}^M a_{jk} y^k}\right)\Delta \xi^{j,+} \right).
\end{eqnarray}
By letting ${\Delta} \xi^{j,+}\rightarrow 0$, we have 
 \begin{eqnarray}\label{ine_3}
 0 \leq -\Gamma_j v^i (\pmb{x},\pmb{y})+v^i_{x^j}(\pmb{x},\pmb{y}). 
\end{eqnarray}

Next, we provide the heuristics  for deriving the QVIs. 
Let $\Delta \xi^i:=\Delta \xi^i(\pmb{x},\pmb{y})$ be the control of player $i$ with joint state position $(\pmb{x},\pmb{y})$.
When $ (\pmb{x},\pmb{y}) \in \mathcal{W}_{-i}$, we have $\Delta \xi^{j} = 0$ for $j \neq i$.  
Thus the game for player $i$ becomes a classical control problem with three choices: $\Delta \xi^{i}=0$, $\Delta \xi^{i,+}>0$, and $\Delta \xi^{i,-}>0$. The first case $\Delta \xi^{i}=0$ implies, by simple stochastic calculus,   $-\alpha v^i +h^i \left(\pmb{x} \right) + { \mathcal{L}v^i }\ge 0 $.
By a similar argument as in \eqref{ine_3}, the second case $\Delta \xi^{i,+}> 0$  corresponds to $-\Gamma_i v^i+v^i_{x^i}\ge 0$ and 
the third case $\Delta \xi^{i,-}> 0$ corresponds to $-\Gamma_i v^i-v^i_{x^i}\ge 0$. 
Since one of the three choices will be optimal, one of the inequalities will be an equation. That is,  for $ (\pmb{x},\pmb{y}) \in \mathcal{W}_{-i}$, 
\begin{equation}
\label{eq:HJBA}
\min \left\{-\alpha v^i +h^i \left(\pmb{x} \right) + { \mathcal{L}v^i }, -\Gamma_i v^i+v^i_{x^i}, -\Gamma_i v^i-v^i_{x^i}\right\} = 0.
\end{equation}

\quad When $(\pmb{x},\pmb{y})\in \mathcal{A}_j$, 
 player $j$ will control with the amount of control being $(\Delta \xi^{j,+},\Delta \xi^{j,-})\neq 0$. Therefore,
\begin{eqnarray}
&&v^j(\pmb{x},\pmb{y}) \leq v^j\left(\pmb{x}^{-j}, x^j + {\Delta} \xi^{j,+}  , \pmb{y}- \left(\frac{a_{j1 }y^1}{\sum_{k=1}^M a_{jk} y^k},\cdots,\frac{a_{jM }y^M}{\sum_{k=1}^M a_{jk} y^k}\right)\Delta \xi^{j,+}\right),\label{ln1} \\
&&v^j(\pmb{x},\pmb{y}) \leq v^j\left(\pmb{x}^{-j}, x^j  - {\Delta} \xi^{j,-}, \pmb{y}- \left(\frac{a_{j1 }y^1}{\sum_{k=1}^M a_{jk} y^k},\cdots,\frac{a_{jM }y^M}{\sum_{k=1}^M a_{jk} y^k}\right) \Delta \xi^{j,-}\right),\label{ln2}
\end{eqnarray}
and one of the inequalities in \eqref{ln1}-\eqref{ln2} will be an equality.
This leads to the following condition
\begin{equation}
\label{eq:HJB-jk}
\min \left\{ -\Gamma_j v^j+v^j_{x^j}, -\Gamma_j v^j-v^j_{x^j}\right\} = 0.
\end{equation}
For player $i\neq j$, we should have
{$v^i(\pmb{x},\pmb{y}) = v^i\left(\pmb{x}^{-j}, x^j + {\Delta} \xi^{j,+}, \pmb{y}- \left(\frac{a_{j1 }y^1}{\sum_{k=1}^M a_{jk} y^k},\cdots,\frac{a_{jM }y^M}{\sum_{k=1}^M a_{jk} y^k}\right)\Delta \xi^{j,+} \right)$ when $\Delta \xi^{j,+} >0$ is optimal for player $j$, and \\
$v^i(\pmb{x},\pmb{y}) = v^i\left(\pmb{x}^{-j}, x^j - {\Delta} \xi^{j,-}, \pmb{y}- \left(\frac{a_{j1 }y^1}{\sum_{k=1}^M a_{jk} y^k},\cdots,\frac{a_{jM }y^M}{\sum_{k=1}^M a_{jk} y^k}\right) \Delta \xi^{j,-}\right)$
when $\Delta \xi^{j,-} >0$  is optimal for player $j$. 
This holds  {due to the ``no simultaneous jump'' condition \eqref{eq:admset}}. Intuitively, this implies that player $i$ has no incentive to  {jump} when player $j$ {jumps}. Thus, 
\begin{eqnarray}
\label{eq:HJB-j}
\begin{cases}
 -\Gamma_j v^i + v^i_{x^j}&=0, \,\,\mbox{on} \,\, \{\left.(\pmb{x},\pmb{y})\in \mathbb{R}^N \times  \mathbb{R}^M_{+} \,\,\right\vert\,\,-\Gamma_j v^j  + v^j_{x^j}=0\},\\
  -\Gamma_j v^i - v^i_{x^j}&=0, \,\,\mbox{on} \,\, \{\left.(\pmb{x},\pmb{y})\in \mathbb{R}^N \times  \mathbb{R}^M_{+} \,\,\right\vert\,\,-\Gamma_j v^j  - v^j_{x^j}=0\}.
 \end{cases}
 \end{eqnarray}}
 
 Note that by letting $\Delta \xi^{i,\pm}\rightarrow 0$, equations \eqref{eq:HJBA},\eqref{eq:HJB-jk} and \eqref{eq:HJB-j} describe the behavior in $\overline{\mathcal{W}}_{i}$ and near boundary $\partial{\mathcal{W}}_{i}$. Moreover, we can show that \eqref{eq:HJBA},\eqref{eq:HJB-jk} and \eqref{eq:HJB-j} are consistent with the jump behaviors in $\mathcal{A}_i$.
 To see this, $-\sum_{j=1}^M \frac{a_{ij}y^j}{\sum_{k=1}^M a_{ik}y^k}v^i_{y^j}\pm v^i_{x^i}=0$ has a linear solution $v^i(\pmb{x},\pmb{y})=a\left(\pm x_i + \sum_{j=1}^M a_{ij}y^j\right)+b$ for some $a,b\in \mathbb{R}$. And it is easy to check that if $\sum_{k=1}^{M}a_{ik}y^k \geq \Delta>0$,
\[
\frac{a_{ij}y^j-\frac{a_{ij}y^j}{\sum_{k=1}^{M}a_{ik}y^k}\Delta}{\sum_{k=1}^{M}a_{ik}y^k-\Delta} = \frac{a_{ij}y^j}{\sum_{k=1}^{M}a_{ik}y^k},
\]
which means that the allocation policy  (jump direction) outside the waiting region is linear. Hence the {the non-infinitesimal jump also} satisfies the HJB equation \eqref{eq:HJBA} in $\mathcal{A}_i$. The consistency property also holds for \eqref{eq:HJB-j}.
In summary, we have the following QVIs:
{
\begin{subequations}
\label{QVI}
    \begin{align}
&\min \left\{-\alpha v^i +h^i \left(\pmb{x} \right) + { \mathcal{L}v^i }, -\Gamma_iv^i+v^i_{x^i}, -\Gamma_iv^i-v^i_{x^i}\right\} = 0,\nonumber\\
&\hspace{120pt} \mbox{on} \cap_{j \neq i}\left\{\left\{ -\Gamma_j v^j+v^j_{x^j}>0\right\}\cap \left\{ -\Gamma_j v^j-v^j_{x^j}>0\right\}\right\},\label{QVI-1} \\
& -\Gamma_j v^i + v^i_{x^j}=0, \hspace{33pt}\,\,\mbox{on} \,\, \{-\Gamma_j v^j + v^j_{x^j}=0\},\label{QVI-3}\\
& -\Gamma_j v^i - v^i_{x^j}=0, \hspace{33pt}\,\,\mbox{on} \,\, \{-\Gamma_j v^j - v^j_{x^j}=0\}.\label{QVI-2}
\end{align}
\end{subequations}}

The above conditions are consistent with the conditional optimality of NE for each player and describe interactions between the player in control and those who are not; these conditions ensure that all players control optimally and push sequentially the underlying dynamics until reaching the common waiting region.
}


\subsection{Verification Theorem}
\label{sc32}
Next we present a verification theorem which gives sufficient conditions of an NEP.
Given functions $v^i$ (with sufficient regularity),
we define the action and waiting regions ($\mathcal{A}_i$ and $\mathcal{W}_i$) in terms of $v^i$ $(i=1,2,\cdots,N)$ as the following:
\begin{eqnarray}\label{new_region}
\mathcal{A}_i=\mathcal{A}_i^{+}\cup \mathcal{A}_i^{-},
\end{eqnarray}
where
$\mathcal{A}_i^{+}:=\{(\pmb{x},\pmb{y})\in\mathbb{R}^{N} \times \mathbb{R}^M_{+}\,\,\vert\,\, -\Gamma_iv^i-v_{x^i}^i=0\}$ and $\mathcal{A}_i^{-}:=\{(\pmb{x},\pmb{y})\in\mathbb{R}^{N} \times \mathbb{R}^M_{+}\,\,\vert\,\, -\Gamma_iv^i+v_{x^i}^i=0\}$. Moreover, $\mathcal{W}_i=(\mathbb{R}^{N} \times \mathbb{R}^M_{+}) \setminus \mathcal{A}_i$ and $\mathcal{W}_{-i}=\cap_{j \ne i} \mathcal{W}_j$.

\begin{theorem}[Verification theorem]
\label{thm:verification} Assume \textbf{H1}-\textbf{H2} hold 
and further assume $\mathcal{A}_j \cap \mathcal{A}_i = \emptyset$ for all $i \neq j$ where {$\mathcal{A}_i, \mathcal{W}_i$ and  $\mathcal{W}_{-i}$ are defined according to \eqref{new_region}.}
For each $i = 1, \cdots, N$, suppose that the $i^{th}$ player's strategy $\xi^{i*} \in \mathcal{U}^i_N$ satisfies the following conditions
\begin{enumerate}[label=(\roman*), itemsep=3pt]
\item
$\pmb{\xi}^{*} := (\xi^{1*}, \cdots, \xi^{N*}) \in \mathcal{S}_N(\pmb{x},\pmb{y})$.
\item 
{$v^i(\cdot)$ satisfies the QVIs \eqref{QVI}}.
\item For any $\xi^i \in \mathcal{U}^i_N$ such that $(\pmb{\xi}^{-i*}, \xi^i) \in \mathcal{S}_N(\pmb{x},\pmb{y})$, 
$
\mathbb{P}((\pmb{X}^{-i*}_t, X^i_t, \pmb{Y}_t) \in \overline{\mathcal{W}_{-i}}) = 1 \quad  \mbox{for all } t \ge 0,
$
where $(\pmb{X}^{-i*}_t, X^i_t, \pmb{Y}_t)$ is under $(\pmb{\xi}^{-i*}, \xi^i)$.
\item
$v^i(\pmb{x},\pmb{y}) \in \mathcal{C}^2 (\overline{\mathcal{W}_{-i}} )$ and {$v^i$ is convex } for all $(\pmb{x}, \pmb{y} ) \in \overline{\mathcal{W}_{-i}}$,
\item {$\mathbb{E}\left[\int_0^T e^{-2\alpha t}\left(v^i_{x^j}(\pmb{X}^{-i*}_t, X^i_t, \pmb{Y}_t)\right)^2dt\right]<\infty$ for all $T>0$ where $(\pmb{X}^{-i*}_t, X^i_t, \pmb{Y}_t)$ is under $(\pmb{\xi}^{-i*}, \xi^i)  \in \mathcal{S}_N(\pmb{x},\pmb{y})$ such that (iii) holds.
}

\item For any $(\pmb{X}^{-i*}_t, X^i_t, \pmb{Y}_t)$  under $(\pmb{\xi}^{-i*}, \xi^i)  \in \mathcal{S}_N(\pmb{x},\pmb{y})$ such that (iii) holds,
$v^{i}(\pmb{x},\pmb{y})$ satisfies the transversality condition
\begin{equation}
\label{eq:trans}
\underset{T \rightarrow \infty}{\lim \sup} \, e^{-\alpha T} \mathbb{E}\left[v^i\left(\pmb{X}^{-i*}_t, X^i_t, \pmb{Y}_t \right)\right] = 0.
\end{equation}
\item {For $j \neq i$, $t \ge 0$, and $(\pmb{X}^{-i*}_t, X^i_t, \pmb{Y}_t)$ under $(\pmb{\xi}^{-i*}, \xi^i)$,}
\begin{eqnarray}\label{local_time_condition}
\check{\xi}_t^{j*} = \int_{[0,t]} 1_{\{(\pmb{X}^{-i*}_{s-}, X^i_{s-}, \pmb{Y}_{s-}) \in \mathcal{A}_{j}\}}d\check{\xi}_s^{j*},
\end{eqnarray}
  {and in addition, for $(\pmb{X}^{*}_t, \pmb{Y}^{*}_t)$ under $\pmb{\xi}^*$,
  \begin{eqnarray}\label{local_time_condition2}
\check{\xi}_t^{i*} = \int_{[0,t]} 1_{\{(\pmb{X}^{-i*}_{s-}, \pmb{Y}^*_{s-}) \in \mathcal{A}_{i}\}}d\check{\xi}_s^{i*}.
\end{eqnarray}}
\end{enumerate}
{Then $\pmb{\xi}^{*}$ is an NEP with value function $v^i$ a solution to \eqref{QVI}. That is, \begin{equation*}
v^i(\pmb{x},\pmb{y}) \le J^i(\pmb{x},\pmb{y};(\pmb{\xi}^{-i*}, \xi^i)),
\end{equation*}
for all $\xi \in \mathcal{U}_N^i$ such that $(\pmb{\xi}^{-i*},\xi^i)\in \mathcal{S}_N$, and $v^i(\pmb{x},\pmb{y} ; \pmb{\xi}^{*}) = J^i(\pmb{x},\pmb{y};(\pmb{\xi}^{-i*}, \xi^{i*}))$.} 
\end{theorem} 

\begin{proof}
It suffices to prove that for all $(\pmb{\xi}^{-i*}, \xi^i) \in \mathcal{S}_N(\pmb{x},\pmb{y})$, and for each $i = 1, \cdots, N$,
\begin{equation*}
J^i(\pmb{x},\pmb{y}; \pmb{\xi}^{*}) \le J^i(\pmb{x},\pmb{y};(\pmb{\xi}^{-i*}, \xi^i)).
\end{equation*}

\quad Recall \eqref{Eq. diffusion} and \eqref{eq:XM}.
From condition $(iii)$, under control $(\pmb{\xi}^{-i*}, \xi^i) \in \mathcal{S}_N(\pmb{x},\pmb{y})$, $(\pmb{X}^{-i*}_t, X^i_t, \pmb{Y}_t) \in \overline{\mathcal{W}_{-i}}$ a.s..  
Applying It\^{o}-Meyer's formula \cite[Theorem 21]{Meyer76} to $e^{-\alpha t} v^i(\pmb{X}_t^{-i*}, X_t^i, \pmb{Y}_t)$ yields
\begin{align*}
\quad &\mathbb{E}[e^{-\alpha T} v^i(\pmb{X}_T^{-i*}, X_T^i, \pmb{Y}_T)] - v^i(\pmb{x}, \pmb{y}) \\
       = ~& \mathbb{E} \int_0^T e^{-\alpha t} \left({ \mathcal{L} v^i} - \alpha v^i \right) dt + \mathbb{E}\int_0^T e^{-\alpha t} \sum_{j = 1}^N v^i_{x^j} dB_t^j  + \sum_{j=1,j\neq i}^N \mathbb{E} \int_{[0,T)} e^{-\alpha t} (v^i_{x^j} d\xi_t^{j*,+} - v^i_{x^j} d\xi_t^{j*,-})\\
            & - \sum_{j=1,j \neq i}^N \mathbb{E} \int_{[0,T)} e^{-\alpha t} \Gamma_j v^i(\pmb{X}_{t-}^{-i*},X_{t-}^i,\pmb{Y}_{t-})\left(  d\xi_t^{j*,+} +d\xi_t^{j*,-}\right)+\mathbb{E} \int_{[0,T)} e^{-\alpha t} (v^i_{x^i} d\xi_t^{i,+} - v^i_{x^i} d\xi_t^{i,-}) \\
            & - \mathbb{E} \int_{[0,T)} e^{-\alpha t} \Gamma_iv^i(\pmb{X}_{t-}^{-i*},X_{t-}^i,\pmb{Y}_{t-})\left(  d\xi_t^{i,+} + d\xi_t^{i,-}\right) + \mathbb{E}\sum_{0 \le t <T} e^{-\alpha t} \left(\Delta v^i -\sum_{j=1}^N v^i_{x^j} \Delta X^j_t - \sum_{k=1}^M v^i_{y^k} \Delta Y^k_t\right),
\end{align*}
where $\Gamma_i$ and $\Gamma_j$ are defined in \eqref{gamma}. Here $\Delta v^i := v^{i}(\pmb{X}_{t}^{-i*},X_{t}^i,\pmb{Y}_{t})-v^{i}(\pmb{X}_{t-}^{-i*},X_{t-}^i,\pmb{Y}_{t-})$, $v^{i}_{x^j}:=v^{i}_{x^j}(\pmb{X}_{t-}^{-i*},X_{t-}^i,\pmb{Y}_{t-})$,  $v^{i}_{y^k}:=v^{i}_{y^k}(\pmb{X}_{t-}^{-i*},X_{t-}^i,\pmb{Y}_{t-})$, {$\Delta X_t^{j*}:=X^{j*}_t-X^{j*}_{t-}$, $\Delta X_t^{i}:=X^i_t-X^{i}_{t-}$, and $\Delta Y_t^k:=Y^k_t-Y^k_{t-}$} on the RHS of above equation for $1\leq i,j \leq N$ and $1\leq k \leq M$. By \cite[Theorem3.2.1]{arnold1974stochastic}, condition
$(v)$ implies that the it\^o integral $\int_0^T e^{-\alpha t} \sum_{j=1}^N v_{x^j}^i dB_t^j$ is a martingale. Hence $\mathbb{E} \left[\int_0^T e^{-\alpha t} \sum_{j=1}^N v_{x^j}^i dB_t^j\right]=0$.
The convexity condition in $(iv)$ implies $\mathbb{E}\sum_{0 \le t <T} e^{-\alpha t} (\Delta v^i -\sum_{k\neq i}^N v^i_{x^k} \Delta X^{k*}_t- v^i_{x^i} \Delta X^{i}_t - \sum_{j=1}^M v^i_{y^j} \Delta Y^j_t) \geq 0$.
Next we have
\begin{eqnarray*}
&&\mathbb{E} \int_{[0,T)} e^{-\alpha t} (v^i_{x^i} d\xi_t^{i,+} - v^i_{x^i} d\xi_t^{i,-}) - \mathbb{E} \int_{[0,T)} e^{-\alpha t}\Gamma_iv^i(\pmb{X}_{t-}^{-i*},X_{t-}^i,\pmb{Y}_{t-}) \left(  d\xi_t^{i,+} + d\xi_t^{i,-}\right)\\
&=& \mathbb{E} \int_{[0,T)} e^{-\alpha t}\left[v^i_{x^i}(\pmb{X}_{t-}^{-i*},X_{t-}^i,\pmb{Y}_{t-})-\Gamma_i v^i(\pmb{X}_{t-}^{-i*},X_{t-}^i,\pmb{Y}_{t-})\right]d\xi_t^{i,+} \\
&+& \mathbb{E} \int_{[0,T)} e^{-\alpha t}\left[-v^i_{x^i}(\pmb{X}_{t-}^{-i*},X_{t-}^i,\pmb{Y}_{t-})-\Gamma_iv^i(\pmb{X}_{t-}^{-i*},X_{t-}^i,\pmb{Y}_{t-})\right]d\xi_t^{i,-} \geq 0.
\end{eqnarray*}
{The last inequality holds due to conditions $(ii)$ and $(iv)$. More precisely, $v^i(\pmb{x})$ satisfies the HJB equation \eqref{QVI-1} in $\mathcal{W}_{-i}$. Along with $(iv)$, we have the following with probability one,
\begin{eqnarray*}
v^i_{x^i}(\pmb{X}_{t-}^{-i*},X_{t-}^{i},\pmb{Y}_{t-})-\Gamma_iv^i(\pmb{X}_{t-}^{-i*},X_{t-}^{i},\pmb{Y}_{t-})\ge 0, \\
 -v^i_{x^i}(\pmb{X}_{t-}^{-i*},X_{t-}^{i},\pmb{Y}_{t-})-\Gamma_iv^i(\pmb{X}_{t-}^{-i*},X_{t-}^{i},\pmb{Y}_{t-})\ge 0.
 \end{eqnarray*}
}
For each $j\neq i$, almost surely, we have $d\xi^{j*}_t \neq 0$ only when $(\pmb{X}_t,\pmb{Y}_t) \in \partial \mathcal{W}_{-i}\cap\partial \mathcal{A}_j$. Along with the condition $(ii)$ and \eqref{QVI-3}-\eqref{QVI-2},
\begin{eqnarray*}
&&\mathbb{E} \int_{[0,T)} e^{-\alpha t} (v^i_{x^j}(\pmb{X}_{t-}^{-i*},X_{t-}^i,\pmb{Y}_{t-}) d\xi_t^{j*,+} - v^i_{x^j}(\pmb{X}_{t-}^{-i*},X_{t-}^i,\pmb{Y}_t) d\xi_t^{j*,-}) \\
&&\qquad\qquad\qquad- \mathbb{E} \int_{[0,T)} e^{-\alpha t} \Gamma_j v^i(\pmb{X}_{t-}^{-i*},X_{t-}^i,\pmb{Y}_t)\left(  d\xi_t^{j*,+} + d\xi_t^{j*,-}\right)\\
&=&\mathbb{E} \int_{[0,T)} e^{-\alpha t}\left[v_{x^j}^i-\Gamma_jv^i \right](\pmb{X}_{t-}^{-i*},X_{t-}^i,\pmb{Y}_t)d\xi_t^{j*,+}+\left[-v_{x^j}^i-\Gamma_jv^i \right](\pmb{X}_{t-}^{-i*},X_{t-}^i,\pmb{Y}_t)d\xi_t^{j*,-}=0.
\end{eqnarray*}

Condition $(ii)$ also implies $\mathcal{L}v^i-\alpha v^i \geq -h$.
Combining all of the above, 
\begin{eqnarray}
\label{ineq:ver}
e^{-\alpha T }\mathbb{E}v^i(\pmb{X}_T^{-i*}, X_T^i, \pmb{Y}_T) + \mathbb{E} \int_0^T e^{-\alpha t} h \left(\pmb{X}_t^{-i*},X_t^i\right) dt \ge v^i(\pmb{x},\pmb{y}).
\end{eqnarray}
By letting $T \rightarrow \infty$, the inequality \eqref{ineq:ver} and condition $(vi)$ lead to the desirable inequality. 

{The equality in \eqref{ineq:ver} holds for $\xi^{i}=\xi^{i*}$ by {\eqref{local_time_condition2}},
and
$\mathbb{P}\left((\pmb{X}^{*}_t, \pmb{Y}^*_t) \in \overline{\cap_{i=1}^N\mathcal{W}_{i}}\right) = 1 \quad  \mbox{for all } t \ge 0$ and {the ``no simultaneous jump'' condition in the admissible set \eqref{eq:admset}},
where $(\pmb{X}_t^*,\pmb{Y}_t^*)$ is the dynamics under $\pmb{\xi}^*$.}
\end{proof}


\quad 
Suppose the game value $v^i$ ($i=1,2,\cdots,N$) that satisfies the verification theorem (Theorem \ref{thm:verification}) are given, the next step is to construct the corresponding NE strategies. This is by solving a Skorokhod problem, discussed in the next subsection.

\subsection{Skorokhod Problem}
\label{sc33}

Here we present necessary tools to construct the NE strategies under the additional Assumptions \textbf{H1$'$}-\textbf{H2$'$}.
The key to the analysis is the weak construction of a reflected Brownian motion in a general domain, due to Kang and Williams \cite{KW2007}.
To proceed further, we need a few vocabularies.

\quad Let $G = \cap_{i \in \mathcal{I}} G_i$ be a nonempty domain in $\mathbb{R}^{n+m}$, where $\mathcal{I}$ is a nonempty finite index set and for each $i \in \mathcal{I}$, $G_i$ is a nonempty domain in $\mathbb{R}^{n+m}$.  For simplicity, we assume that $\mathcal{I}=\{1,2,\cdots,I\}$, with $|\mathcal{I}|=I$. For each $i \in \mathcal{I}$, let $\pmb{n}_i: \mathbb{R}^{n+m} \rightarrow \mathbb{R}^{n+m}$ be the unit normal vector field on  $\partial G_i$ that points into $G_i$. And denote $\pmb{r}_i(\cdot):\mathbb{R}^{n+m} \rightarrow \mathbb{R}^{n+m}$ as the reflection direction on $\partial G_i$.
Fix $\pmb{b} \in \mathbb{R}^{n}$ and $\pmb{\sigma} \in \mathbb{R}^{n\times n}$ as the {constant} drift and covariance of the diffusion process without reflection. Let $\nu$ denote a probability measure on $(\overline{G},\mathcal{B}(\overline{G}))$, where $\mathcal{B}(\overline{G})$ is the Borel $\sigma$-algebra on $\overline{G}$.

\quad A Skorokhod problem is to find a reflected diffusion process in $\overline{G}$ such that the initial distribution follows $\nu$, the diffusion parameters are $(\pmb{b},\pmb{\sigma})$, and the reflection direction is $\pmb{r}_i$ on face $\partial G_i$. For each reflection direction $\pmb{r}_i$ ($i \in \mathcal{I}$), denote $\pmb{r}_i^{+}:=({r}_{i,1},\cdots, {r}_{i,n})$ as the vector of the first $n$ components of $\pmb{r}_i$ and denote $\pmb{r}_i^{-}:=({r}_{i ,n+1},\cdots, {r}_{i,n+m})$ as the vector of the next $m$ components of $\pmb{r}_i$. Note that $r_{i,k}^{-} = r_{i,k+n}$ by the usual index rule $(k=1,\cdots,m)$.
Specific to the stochastic game, the following definition is a straightforward modification of \cite[Definition 2.1]{KW2007}.

\begin{definition}\label{CSRBM} A constrained semimartingale reflected Brownian motion (SRBM) associated with the data $(G,\pmb{b},\pmb{\sigma}, \{\pmb{r}_i\}_{i=1}^I,\nu)$ is an $\{\mathcal{F}_t\}$-adapted, $n$-dimensional process $\pmb{X}$ defined on some filtered probability space $(\Omega,\mathcal{F},\{\mathcal{F}_t\},\mathbb{P})$ such that:

\begin{enumerate}[font=\bfseries,leftmargin=2cm]
\item[(i)] $\mathbb{P}$-a.s., $\pmb{X}_t = \pmb{W}_t + \sum_{i \in \mathcal{I}} \int_{[0,t)} \pmb{r}_i^{+}(\pmb{X}_s,\pmb{Y}_s) d \eta^i_s$ for all $t \geq 0$,
\item[(ii)] under $\mathbb{P}$, $ \pmb{W}_t$ is an $n$-dimensional $\mathcal{F}_t$-Brownian motion with drift vector $\pmb{b}$, covariance matrix $\pmb{\sigma}$ and initial distribution $\nu$,
\item[(iii)] $d Y_t^j =\sum_{i \in \mathcal{I}}\int_{[0,t)}\pmb{r}_{i,j}^{-}(\pmb{X}_t,\pmb{Y}_t)d\eta^i_t$ and  $Y^j_t \geq 0$ for $j=1,2,\cdots,m$,
\item[(iv)] for each $i\in \mathcal{I}$, $\eta^i$ is a one-dimensional process such that $\mathbb{P}$-a.s.,
\begin{itemize}
\item[(a)] $\eta^i$ is continuous and nondecreasing with $\eta_0^i=0$,
\item[(b)] $\eta^i_t = \int_{(0,t]}1_{\{(\pmb{X}_s, \pmb{Y}_s) \in \partial G_i \cap \partial G\}}d\eta^i_s$ for all $t\geq 0$,
\end{itemize}
\item[(v)] $\mathbb{P}$-a.s., $(\pmb{X}_t,\pmb{Y}_t)$ has continuous paths and $(\pmb{X}_t,\pmb{Y}_t) \in \overline{G}$ for all $t \geq 0$,
\end{enumerate}
\end{definition}
Here $\pmb{X}_t$ is the controlled diffusion process and $\pmb{Y}_t$ is the resource levels. The domain $G$ restricts the dynamics of both $\pmb{X}_t$ and $\pmb{Y}_t$. 

\quad For each $(\pmb{x},\pmb{y})\in \mathbb{R}^{n+m}$, let $\mathcal{I}(\pmb{x},\pmb{y})=\{i \in \mathcal{I} \,:\, (\pmb{x},\pmb{y})\in \partial G_i\}$.
Let $U_{\epsilon}(S)$ denote the closed set $\{(\pmb{x},\pmb{y})\in \mathbb{R}^{n+m}: dist((\pmb{x},\pmb{y}),S) \leq \epsilon\}$ for any $\epsilon >0$ and $S \subset \mathbb{R}^{n+m}$. If $S=\emptyset$, set $U_{\epsilon}(S)=\emptyset$ for any $\epsilon>0$.
We list the following assumptions on domain $G$ and reflection directions $\{\pmb{r}_i, i\in \mathcal{I}\}$:

\begin{enumerate}[font=\bfseries,leftmargin=2cm]
\item [A1.] G is the nonempty domain in $\mathbb{R}^{n+m}$ such that
\begin{eqnarray}
G = \cap_{i \in \mathcal{I}} G_i,
\end{eqnarray}
where for each $i \in \mathcal{I}$, $G_i$ is a nonempty domain in $\mathbb{R}^{n+m}$, $G_i \neq \mathbb{R}^{m+n}$ and the boundary $\partial G_i$ is $\mathcal{C}^1$. 
\item[A2.] For each $\epsilon \in (0,1)$ there exists $R(\epsilon)>0$ such that for each $i \in \mathcal{I}$, $(\pmb{x},\pmb{y}) \in \partial G_i \cap \partial G$ and $(\pmb{x}^{\prime},\pmb{y}^{\prime}) \in \overline{G}$ satisfying $\|(\pmb{x},\pmb{y})-(\pmb{x}^{\prime},\pmb{y}^{\prime})\| <R(\epsilon)$, we have
\begin{eqnarray*}
\left\langle \pmb{n}_i(\pmb{x},\pmb{y}),(\pmb{x}^{\prime},\pmb{y}^{\prime})-(\pmb{x},\pmb{y}) \right\rangle \geq -\epsilon\|(\pmb{x},\pmb{y})-(\pmb{x}^{\prime},\pmb{y}^{\prime})\|.
\end{eqnarray*}
\item[A3.] The function $D: [0,\infty) \rightarrow [0,\infty]$ is  such that $D(0)=0$ and 
\begin{eqnarray*} 
D(\epsilon) = \sup_{\mathcal{I}_0 \in \mathcal{I},\mathcal{I}_0 \neq \emptyset} \sup \left\{\text{dist}\left((\pmb{x},\pmb{y}), \cap_{i \in \mathcal{I}_0} (\partial G_i \cap \partial G)\right): (\pmb{x},\pmb{y})\in \cap_{i \in \mathcal{I}_0} U_{\epsilon} (\partial G_i \cap \partial G)\right\},
\end{eqnarray*}
for $\epsilon >0$ satisfies $D(\epsilon) \rightarrow 0$ as $\epsilon \rightarrow 0$. 
\item[A4.] There is a constant $L>0$ such that for each $i \in \mathcal{I}$, $\pmb{r}_i(\cdot)$ is a uniformly Lipschitz continuous function from $\mathbb{R}^{n+m}$ into $\mathbb{R}^{n+m}$ with Lipschitz constant $L$ and $\|\pmb{r}_i(\pmb{x},\pmb{y})\|=1$ for each $(\pmb{x},\pmb{y}) \in \mathbb{R}^{n+m}$. 
\item[A5.] There is a constant $a \in (0,1)$, and vector valued function $\pmb{c}(\cdot)=(c_1(\cdot),\cdots,c_I(\cdot))$ and $\pmb{d}(\cdot)=(d_1(\cdot),\cdots,d_I(\cdot))$ from $\partial G$ into $\mathbb{R}_{+}^{I}$ such that for each $(\pmb{x},\pmb{y}) \in \partial G$,
\begin{enumerate}[font=\bfseries,leftmargin=2cm]
\item[(i)] $\sum_{i\in \mathcal{I}(\pmb{x},\pmb{y})} c_i(\pmb{x},\pmb{y})=1$,
$
\min_{k \in \mathcal{I}(\pmb{x},\pmb{y})} \left\langle \sum_{i\in \mathcal{I}(\pmb{x},\pmb{y})}c_i(\pmb{x},\pmb{y}) \pmb{n}_i(\pmb{x},\pmb{y}),\pmb{r}_k(\pmb{x},\pmb{y})\right\rangle \geq a,
$
\item[(ii)] $\sum_{i\in \mathcal{I}(\pmb{x},\pmb{y})} d_i(\pmb{x},\pmb{y})=1$,
$
\min_{k \in \mathcal{I}(\pmb{x},\pmb{y})} \left\langle \sum_{i\in \mathcal{I}(\pmb{x},\pmb{y})}d_i(\pmb{x},\pmb{y}) \pmb{r}_i(\pmb{x},\pmb{y}),\pmb{n}_k(\pmb{x},\pmb{y})\right\rangle \geq a.
$
\end{enumerate}
\end{enumerate}

\begin{theorem}\label{SRBM}
Given Assumptions {\bf{A1}}-{\bf{A5}}, 
there exists a constrained SRBM associated with the data $(G,\pmb{b},\pmb{\sigma},\{\pmb{r}_i, i \in \mathcal{I}\},\nu)$.
\end{theorem}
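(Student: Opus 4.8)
The plan is to follow the construction of Kang and Williams \cite{KW2007} for semimartingale reflecting Brownian motions in domains with piecewise smooth boundaries, adapting it to the coupled structure of Definition \ref{CSRBM} in which the full state $(\pmb{X}_t,\pmb{Y}_t)$ is constrained to $\overline{G}\subset\mathbb{R}^{n+m}$ while only the first $n$ coordinates carry Brownian noise and $\pmb{Y}_t$ is driven purely by the reflection terms. The argument runs in four stages: (a) build a family of approximating reflected processes in which the hard state constraint is replaced by a penalization; (b) establish uniform tightness of these processes together with their reflection (``local time'') terms $\eta^{i,\delta}$; (c) extract a weak limit; and (d) identify the limit as a constrained SRBM by verifying conditions (i)--(v).

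For stage (a), let $\psi_i$ be a smooth defining function for $G_i$ with $G_i=\{\psi_i>0\}$ (available since $\partial G_i$ is $\mathcal{C}^1$ by \textbf{A1}), so that the inward normal is $\pmb{n}_i=\nabla\psi_i/\|\nabla\psi_i\|$ on $\partial G_i$. Writing $\pmb{W}$ for the $n$-dimensional $(\pmb{b},\pmb{\sigma})$-Brownian motion with initial law $\nu$, consider for each $\delta>0$ the penalized system
\begin{align*}
\pmb{X}^\delta_t &= \pmb{X}_0 + \pmb{W}_t + \sum_{i\in\mathcal{I}}\int_0^t \pmb{r}_i^{+}(\pmb{X}^\delta_s,\pmb{Y}^\delta_s)\,d\eta^{i,\delta}_s,\\
\pmb{Y}^\delta_t &= \pmb{Y}_0 + \sum_{i\in\mathcal{I}}\int_0^t \pmb{r}_i^{-}(\pmb{X}^\delta_s,\pmb{Y}^\delta_s)\,d\eta^{i,\delta}_s,
\end{align*}
with penalized reflection intensity $\eta^{i,\delta}_t = \frac{1}{\delta}\int_0^t \big(\psi_i(\pmb{X}^\delta_s,\pmb{Y}^\delta_s)\big)^-\,ds$. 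Existence and uniqueness of $(\pmb{X}^\delta,\pmb{Y}^\delta)$ follow from the global Lipschitz property of the $\pmb{r}_i$ (\textbf{A4}) by standard SDE theory; the penalty forces the state toward $\overline{G}$ at rate $\delta^{-1}$ along the inward-pointing directions $\pmb{r}_i$, and the block structure $(\pmb{r}_i^{+},\pmb{r}_i^{-})$ ensures the last $m$ coordinates move only through the reflection sum, matching Definition \ref{CSRBM}(iii).

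Stage (b) is the crux, and is where I expect the main obstacle. I would prove a uniform oscillation estimate bounding $\eta^{i,\delta}$ and the excursions of $(\pmb{X}^\delta,\pmb{Y}^\delta)$ outside $\overline{G}$ independently of $\delta$. The essential input is the completely-$\mathcal{S}$ condition \textbf{A5}: part (i) provides at each boundary point a convex combination $\sum_i c_i\pmb{n}_i$ of inward normals making angle bounded below by $a$ with every reflection direction, so that testing the dynamics against this combination exhibits a Lyapunov-type functional that each unit of reflection strictly increases; the dual part (ii) is what bounds the total reflection mass in terms of the oscillation of the free driving process. The regularity conditions \textbf{A2} and \textbf{A3} keep these estimates valid near the corners $\cap_{i\in\mathcal{I}_0}(\partial G_i\cap\partial G)$ where several faces meet. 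The genuinely new difficulty relative to \cite{KW2007} is that $\pmb{r}_i(\pmb{X}^\delta,\pmb{Y}^\delta)$ depends on the resource coordinates $\pmb{Y}^\delta$, which are themselves synthesized from the reflection terms; the oscillation estimate must therefore be run for the \emph{full} coupled state while simultaneously verifying that $\pmb{Y}^\delta$ remains in a compact subset of $\mathbb{R}^m_{+}$, so that the reflection geometry underlying \textbf{A5} does not degenerate. This self-referential bound, rather than any single computation, is the heart of the proof.

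Granting (b), tightness of $\{(\pmb{X}^\delta,\pmb{Y}^\delta,\{\eta^{i,\delta}\}_i)\}$ in continuous path space follows from the uniform variation bounds together with the Arzel\`a--Ascoli/Aldous criteria, and Prokhorov's theorem yields a weakly convergent subsequence with limit $(\pmb{X},\pmb{Y},\{\eta^i\}_i)$; by Skorokhod representation I may assume almost sure convergence. For stage (d) I pass to the limit in the penalized system: a martingale characterization identifies the Brownian part as the prescribed $(\pmb{b},\pmb{\sigma})$-Brownian motion with initial law $\nu$, giving (i)--(ii); the last $m$ coordinates yield (iii); and each $\eta^i$ inherits continuity and monotonicity, giving (iv)(a)--(b). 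Since the penalty vanishes on $\overline{G}$, the excursion bound of stage (b) forces $\int_0^\infty \mathbf{1}_{\{(\pmb{X}_s,\pmb{Y}_s)\notin\overline{G}\}}\,d\eta^i_s\to 0$, which delivers the state constraint (v) and the support property (iv)(c) that $\eta^i$ increases only on $\partial G_i\cap\partial G$. The identification steps are routine weak-limit arguments of the type in \cite{KW2007}; the whole weight of the proof rests on the coupled oscillation estimate of stage (b).
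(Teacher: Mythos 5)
Your route is genuinely different from the paper's: you penalize the state constraint, whereas the paper adapts the construction of Kang and Williams \cite{KW2007} directly, building approximations $(\pmb{X}^{\delta},\pmb{Y}^{\delta},\pmb{\eta}^{\delta})$ that run the free Brownian motion until it reaches $\partial G$ and then apply a small displacement of size at most $\delta$ in the combined direction $\sum_{i\in\mathcal{I}(\cdot)}d_i\pmb{r}_i$ supplied by \textbf{A5}(ii), so that the approximating state never leaves $\overline{G}$ and the reflection terms grow only in a $2\delta$-neighbourhood of the faces. $C$-tightness and the identification of the limit are then obtained by verifying Assumption 4.1 of \cite{KW2007} and invoking their invariance principle (Proposition 4.1), which already contains the oscillation estimate under \textbf{A1}--\textbf{A5}.

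The problem is that your stage (b) --- the uniform oscillation bound for the penalized system --- is described rather than proved, and it is exactly the step that cannot be borrowed from \cite{KW2007}: their invariance principle applies to approximations that stay in (a shrinking neighbourhood of) $\overline{G}$ with reflection supported near $\partial G_i\cap\partial G$, and a penalized process satisfies neither a priori. Worse, the hypotheses you invoke are formulated only on $\overline{G}$ and $\partial G$: \textbf{A2} compares boundary points with points of $\overline{G}$, \textbf{A3} controls distances to intersections of faces, and \textbf{A5} provides the completely-$\mathcal{S}$ vectors $\pmb{c},\pmb{d}$ only at points of $\partial G$. To run your Lyapunov argument along a penalized trajectory you must first know that the excursions outside $\overline{G}$ are uniformly small --- but that is precisely what the oscillation estimate is supposed to deliver, so the argument is circular at its self-identified crux. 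There are also secondary unresolved points: the coefficient $\frac{1}{\delta}\bigl(\psi_i\bigr)^{-}\pmb{r}_i^{\pm}$ is globally Lipschitz only for a careful choice of $\psi_i$ on these unbounded domains, and you must show $\pmb{Y}^{\delta}$ stays in a compact subset of $\mathbb{R}^{m}_{+}$ so that the state-dependent reflection field does not degenerate; you flag both but resolve neither. To close the gap you would either have to prove the penalization estimate from scratch for state-dependent oblique reflection in a piecewise-$\mathcal{C}^1$ unbounded domain, or switch to an approximation scheme, such as the paper's push-back construction, that is already covered by the Kang--Williams invariance principle.
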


The proof of Theorem \ref{SRBM} is easily adapted from \cite[Theorem 5.1]{KW2007}, where one constructs a sequence of approximation (random walks) to the {constrained} SRBM and use the invariance principle to establish the weak convergence. 



\section{Nash Equilibrium for  Game $\pmb{C_p}$}
\label{section:gamepooling}

\quad This section analyzes  the NE of  game {$\pmb{C_p}$}. Section \ref{section:HJBpooling} derives the solution to the HJB equations. Section \ref{s33} constructs the controlled process from the HJB solution.
Section \ref{s34} derives the NE for the game $\pmb{C_p}$.
Recall that in  game $\pmb{C_p}$,  $\pmb{A}=[1,1,\cdots,1]^T \in \mathbb{R}^{N \times 1}$, and the unique resource
\begin{equation}
\label{eq:YA}
Y_t = y - \sum_{i = 1}^N \check{\xi}^i_t \quad \mbox{and} \quad Y_{0-} = y .
\end{equation}

\subsection{Solving HJB equations}
\label{s31}
\label{section:HJBpooling}

Define
\begin{equation}
\label{eq:tildex}
\widetilde{x}^i := x^i - \frac{\sum_{j \ne i} x^j}{N-1} \quad \mbox{for } 1 \le i \le N,
\end{equation}
to be the relative position from $x^i$ to the center of $(x^j)_{j \ne i}$.
For game $\pmb{C_p}$, if $\mathcal{A}_i \cap \mathcal{A}_j = \emptyset $, the HJB system simplifies to
 \begin{eqnarray*}
    \textit{(HJB-$C_p$)} \left\{
                \begin{array}{ll}
                \displaystyle \min  \left\{-\alpha v^i +h \left(\frac{N-1}{N} \widetilde{x}^i \right) + \frac{1}{2} \sum_{j=1}^N v^i_{x^jx^j}, -v^i_{y}+v^i_{x^i}, -v^i_{y}-v^i_{x^i}\right\} = 0, \\ [3 pt]
          \hspace{261pt}   \mbox{for } (\pmb{x},y) \in \mathcal{W}_{-i},  \\ [3 pt]
 \displaystyle  {-v^i_{y}-v^i_{x^j} = 0, 
           \hspace{190pt}   \mbox{for } (\pmb{x},y) \in \mathcal{A}^+_{j}, j \ne i, }\\  [3 pt]
 \displaystyle  { -v^i_{y}+v^i_{x^j}= 0, 
           \hspace{190pt}   \mbox{for } (\pmb{x},y) \in \mathcal{A}^-_{j}, j \ne i.}\end{array}
              \right.
\end{eqnarray*}
\quad Now we look for a threshold function $f_N: \mathbb{R}_{+} \rightarrow \mathbb{R}$ such that 
\begin{eqnarray}\label{eq:f_condition}
&f_N \in \mathcal{C}^1(\mathbb{R}_{+},\mathbb{R}),  \quad f^{\prime}_N(x)<0 \mbox{ for } x>0,\quad  \lim_{x\downarrow 0}f_N(x)=\infty,\nonumber\\
& \mbox{ and there exists a unique } x_0>0 \mbox{ such that } f_N(x_0) = 0.
\end{eqnarray}
\noindent It is easy to see that for such $f_N(x)$ satisfying condition \eqref{eq:f_condition},  $z-f_N(z) = \tilde{x}^i-y$ has a unique positive root when $\tilde{x}^i \ge f_N^{-1}(y)$, denoted as $x_{+}^i$. {We consider an even extension of $f_N(x)$ to $(-\infty,0)$ by defining $\tilde{f}_N(x) = f_N(-x)$ for $x<0$.
Then by symmetry,
 $z+\tilde{f}_N(z) = \tilde{x}^i+y$  has a unique negative root
 when $\tilde{x}^i \le -f_N^{-1}(y)$, denoted as  $x_{-}^i$.} See Figure \ref{fig:Cp_initial} for an illustration. In particular, we have $f_N({x}^i_{+}) \ge 0$ when $y \ge x_0+\tilde{x}^i$ and $\tilde{x}^i \ge 0$. Similarly $\tilde{f}_N({x}^i_{-}) \ge 0$ holds when $y \ge -x_0-\tilde{x}^i$ and $\tilde{x}^i\leq 0$. Such an $f_N$ is constructed later in \eqref{eq:fNder} and condition \eqref{eq:f_condition} is verified in Lemma \ref{lemma:f}.

 Then the action region $\mathcal{A}_i$ and the waiting region $\mathcal{W}_i$ of the $i^{th}$ player are specified as
\begin{equation}
\label{eq:AW}
\mathcal{A}^+_i := E_i^{+}  \cap Q_i,\quad \mathcal{A}^-_i := E_i^{-} \cap Q_i \quad,\mathcal{A}_i = \mathcal{A}^+_i\cup \mathcal{A}^-_i,\quad \mbox{and} \quad \mathcal{W}_i:=(\mathbb{R}^{N} \times \mathbb{R}_{+}) \setminus \mathcal{A}_i,
\end{equation}
where
\begin{equation}
\label{eq:threshold}
E_i^{+} : = \left\{(\pmb{x}, y) \in \mathbb{R}^{N} \times \mathbb{R}^*_{+}: \widetilde{x}^i \ge f_N^{-1}(y) \right\}, \, E_i^{-} : = \left\{(\pmb{x}, y) \in \mathbb{R}^{N} \times \mathbb{R}^*_{+}: \widetilde{x}^i \le - f_N^{-1}(y) \right\},
\end{equation}
with  {
\begin{eqnarray}
E_{i,1}^{+} &:=& \left\{(\pmb{x}, y) \in E_{i}^{+} : y \ge \tilde{x}^i+x_0 \right\},\quad E_{i,2}^{+} := \left\{(\pmb{x}, y) \in E_{i}^{+} : y <  \tilde{x}^i+x_0 \right\},\label{eq:threshold21}\\
E_{i,1}^{-} &:=& \left\{(\pmb{x}, y) \in E_{i}^{-} : y \ge - \tilde{x}^i-x_0 \right\},\quad E_{i,2}^{-} := \left\{(\pmb{x}, y) \in E_{i}^{+} : y < - \tilde{x}^i-x_0 \right\},\label{eq:threshold22}
\end{eqnarray}
and  $\{Q_i\}_{i=1}^N$  disjoint and convex partitions of $\mathbb{R}^{N}\times \mathbb{R}_+$
{ such that $Q_i \cap Q_j = (E_i^+\cup E_i^-)\cap (E_j^+\cup E_j^-)\cap \partial \mathcal{W}_{NE}$ for $i\neq j$,}
$\cup_{i=1}^N Q_i = \mathbb{R}^{N}\times \mathbb{R}_+$ and $\alpha \pmb{p} + (1-\alpha) \pmb{q} \in Q_j$ for all $\alpha \in [0,1]$ if $ \pmb{p}\in Q_j$ and $\pmb{q}\in Q_j$ for some $j=1,2,\cdots,N$. {Condition $Q_i \cap Q_j = (E_i^+\cup E_i^-)\cap (E_j^+\cup E_j^-)\cap \partial \mathcal{W}_{NE}$ for $i\neq j$ implies that player $i$ and player $j$ can not jump simultaneous but may apply continuous control (on the boundary of the common waiting region) at the same time.} 
We can define the following mapping
\begin{eqnarray}\label{eq:mapping_Pi}
\Pi(\pmb{x},y) = 
\begin{cases}
 \left(\big(\pmb{x}^{-i},x_{+}^i+\frac{\sum_{k \neq i}x^k}{N-1}\big),f_N(x_+^i)\right), \quad  & {\rm if } \quad (\pmb{x},y)\in Q_i \cap E_{i,1}^{+},\\
 \left((\pmb{x}^{-i},x^i-y),0\right),
  & {\rm if } \quad (\pmb{x},y)\in Q_i \cap E_{i,2}^{+},\\
  \left(\big(\pmb{x}^{-i},\frac{\sum_{k \neq i}x^k}{N-1}+x_{-}^i\big),\,\tilde{f}_N(x_-^i)\right), & {\rm if } \quad (\pmb{x},y)\in Q_i \cap E_{i,1}^{-},\\
\left((\pmb{x}^{-i},x^i+y),0\right), & {\rm if } \quad (\pmb{x},y)\in Q_i \cap E_{i,2}^{-}.
\end{cases}
\end{eqnarray}
 Mapping $\Pi(\cdot)$ is well-defined on $\cup_i \mathcal{A}_i$ since $\{Q_i\}_{i=1}^N$ are disjoint. 
 Note that,  $\Pi(\cdot)$  translates $(\pmb{x},y)$ to the boundary of ${E}_{i,1}^{+}$, i.e., $ \partial {E}_{i,1}^{+}:=\{(\pmb{x},y)\in \mathbb{R}^N \times \mathbb{R}_{+}\,\,:\,\,y={f_N}\left(\tilde{x}^i\right), 0< x \le x_0\}$ when $(\pmb{x},y)\in Q_i \cap E_{i,1}^{+}$, and translates $(\pmb{x},y)$  to the ``zero resource'' plane $\{(\pmb{x},y)\in \mathbb{R}^N \times \mathbb{R}_{+}\,\,:\,\,y=0\}$ when $(\pmb{x},y)\in Q_i \cap E_{i,2}^{+}$, both along the direction $(0,0,\cdots,-1,0,\cdots,-1)\in \mathbb{R}^{N+1}$ nonzero $i$-th and $(N+1)$-th components. 
 Let
\begin{align}
\label{eq:WNE2}
\mathcal{W}_{NE} :&= \{(\pmb{x},y) \in \mathbb{R}^{N+1}: |\widetilde{x}^i| < f_N^{-1}(y) \mbox{ with } y >0, \, 1 \le i \le N\} {  \cup \{(\pmb{x},y) \in \mathbb{R}^{N}\times \mathbb{R}_{+}\,:\,y=0\} } \\
& = \cap_{i = 1}^N \left(E_i^{-} \cup E_i^{+}\right)^c,   \nonumber 
                                   \end{align}

be the common non-action region and  assume that partitions $\{Q_i\}_{i=1}^N$ satisfies the following assumption:
\begin{enumerate}[font=\bfseries,leftmargin=2cm]
\item[H3-${\bf C_p}$.] For any $(\pmb{x},y) \in \cup_i \mathcal{A}_i$, \qquad $\Pi (\pmb{x},y)  \in \overline{\mathcal{W}_{NE}}.$
\end{enumerate}
 Condition {\bf H3-${\bf C_p}$}  implies that if $(\pmb{x},y)\in \mathcal{A}_i$, then the dynamics will be in region $\overline{\mathcal{W}_{NE}}$ after player $i$'s control. For the special case of $N=2$,  we can take $Q_1 = \{(x_1,x_2,y)\in \mathbb{R}^2\times \mathbb{R}_{+}| x_1-x_2 \ge 0\}$ and $Q_2 = \{(x_1,x_2,y)\in \mathbb{R}^2\times \mathbb{R}_{+}| x_2-x_1 >0\}$. Thus Assumption {\bf H3-${\bf C_p}$} is easily satisfied. The verification is deferred to Appendix B. 
}



\quad We seek a solution $v^i(\pmb{x},y) \in \mathcal{C}^2(\overline{\mathcal{W}_{-i}})$ such that if $|\widetilde{x}^i | < f_N^{-1}(y)$, it is 
of the form, \begin{equation}
\label{eq:candidateA}
v^i(\pmb{x},y) =  p_N(\widetilde{x}^i) + A_N(y) \cosh\left(\widetilde{x}^i \sqrt{\frac{2(N-1) \alpha}{N}} \right),
\end{equation}
where 
\begin{equation}
\label{eq:pN}
p_N(x): = \mathbb{E}\int_0^{\infty} e^{-\alpha t} h \left(\frac{N-1}{N} x+ \sqrt{\frac{N-1}{N}} B_t \right)dt,
\end{equation}
with $B_t$ being a one-dimensional Brownian motion.
Note that $p_N(\widetilde{x}^i)$ is a solution to $-\alpha v^i +h (\frac{N-1}{N} \widetilde{x}^i) + \frac{1}{2} \sum_{j=1}^N v^i_{x^jx^j} = 0$,
which corresponds to the waiting region, 
and
$\cosh \left(\sqrt{\frac{2(N-1) \alpha}{N} }\widetilde{x}^i\right)$ is a solution to $-\alpha v^i + \frac{1}{2} \sum_{j=1}^N v^i_{x^jx^j} = 0$.
If there is no resource, then $v^i(\pmb{x}, y) = p_N(\widetilde{x}^i)$, so $A_N(0) = 0$. 
{ 
The following lemma summarizes basic properties of $p_N$, which can be verified by straightforward calculations.
The proof is hence omitted.
\begin{lemma}\label{lemma:pN} Under Assumption  \textbf{H1$'$}-\textbf{H2$'$}, $p_N(x)$ defined in \eqref{eq:pN} satisfies:
\begin{eqnarray}\label{eq:p_property}
& p_N^{\prime}(x) \ge 0 \,\,\mbox{ and }\,\, p_N^{\prime\prime\prime}(x) \leq 0  \mbox{ for } x\ge0;\,\,\,\, p_N(x)=p_N(-x) \,\,\mbox{ and }\,\, \frac{k}{\alpha}\leq p_N^{\prime\prime}(x) \leq \frac{K}{\alpha}\,\,  \mbox{ for } \,\,x\in \mathbb{R}.
\end{eqnarray}
\end{lemma}
}

\quad The {\it smooth-fit principle} states that, along the  boundary $y=f_N(\tilde{x}^i)$ between the continuation set $\mathcal{W}_{-i}$ and the action set $\mathcal{A}_i$,  $v^i$ has certain regularity  properties across the hyperplane. Now applying the smooth-fit principle, we get $v^i_{x^i x^i}=v^i_{yy}=-v^i_{x^i y}$ at the boundary $y = f_N(\widetilde{x}^i)$ with $\widetilde{x}^i > 0$. This follows from $v^i_{x^i}+v^i_{y}=0$ and we expect $v^i \in \mathcal{C}^2(\mathcal{W}_{-i})$. 
{ To see this, we differentiate the form \eqref{eq:candidateA} twice, and the conditions $v^i_{x^i}+v^i_{y}=0$ and $v^i_{x^i x^i}+v^i_{x^i y} = 0$ at the boundary $y = f_N(\widetilde{x}^i)$ lead to
}
 \begin{eqnarray}\label{smoothfit} {\small
 \left\{
                \begin{array}{ll}
                \displaystyle A_N^{\prime}(f_N{(x)}) = - p_N^{\prime}{(x)} \cosh  \left(x \sqrt{\frac{2(N-1) \alpha}{N} } \right) + p_N^{\prime\prime} {(x)}\sqrt{\frac{N}{2(N-1) \alpha}} \sinh \left( x \sqrt{\frac{2(N-1) \alpha}{N} } \right)\Bigg|_{x = f_N^{-1}(y)}, \\
                \displaystyle A_N(f_N{(x)}) =  p_N^{\prime}{(x)} \sqrt{\frac{N}{2(N-1) \alpha}} \sinh  \left( x \sqrt{\frac{2(N-1) \alpha}{N} } \right) -  p_N^{\prime\prime} {(x)}\frac{N}{2(N-1) \alpha} \cosh \left( x \sqrt{\frac{2(N-1) \alpha}{N} } \right)\Bigg|_{x = f_N^{-1}(y)}.
                \end{array}
\right.}
\end{eqnarray}
As a consequence,
\begin{equation}
\label{eq:fNder}
f'_N(x) = \frac{p_N^{\prime}{(x)} -  \frac{N}{2(N-1) \alpha} p_N^{\prime \prime \prime}{(x)} }{p_N^{\prime \prime} {(x)}\sqrt{\frac{N}{2(N-1) \alpha}} \tanh \left(x \sqrt{\frac{2(N-1) \alpha}{N} } \right)-p_N^{\prime}{(x)}},
\end{equation}
and
\begin{equation}
\label{eq:AN}
A_N(y) = p_N^{\prime} {(x)}\sqrt{\frac{N}{2(N-1) \alpha}} \sinh  \left( x \sqrt{\frac{2(N-1) \alpha}{N} } \right) -  p_N^{\prime \prime}{(x)} \frac{N}{2(N-1) \alpha} \cosh \left( x \sqrt{\frac{2(N-1) \alpha}{N} } \right) \Bigg|_{x = f_N^{-1}(y)}.
\end{equation}
{
\begin{lemma}\label{lemma:f} Under Assumptions \textbf{H1$'$}-\textbf{H2$'$}, $f_N$ defined in \eqref{eq:fNder} satisfies condition \eqref{eq:f_condition}. Moreover, the curve $y = f_N(x)$ intersects $\{x > 0\}$ at $x_0$ such that $A_N(f_N(x_0)) = 0$ and $x_0$ is the unique positive root of
\begin{equation}
\label{eq:intercept}
 \sqrt{\frac{2(N-1) \alpha}{N}} \tanh\left(z \sqrt{\frac{2(N-1) \alpha}{N}}  \right) = \frac{p_N^{\prime \prime}(z)}{p_N^{\prime}(z)}.
\end{equation}
\end{lemma}

\begin{proof}
First we prove that $f_N$ is decreasing on $\mathbb{R}_{+}$. Recall the expression of $f'_N$ from \eqref{eq:fNder}, and
we claim that $f'_N(z)<0$ when $z \geq 0$ and $\lim_{z \downarrow 0}f_N^{\prime}(z) =-\infty$. 
To see this, 
$p_N^{\prime}(z) -  \frac{N}{2(N-1) \alpha} p_N^{\prime\prime\prime}(z) \geq 0$ for $z\geq 0$ by Lemma \ref{lemma:pN}. 
Denote $q(z) = p_N^{\prime\prime}(z) \sqrt{\frac{N}{2(N-1) \alpha}} \tanh \left(z \sqrt{\frac{2(N-1) \alpha}{N} } \right)-p_N^{\prime}(z)$. It is easy to see that $q(0)=0$. Moreover, $q^{\prime}(z) = p_N^{\prime\prime\prime}(z)\sqrt{\frac{N}{2(N-1) \alpha}} \tanh \left(z \sqrt{\frac{2(N-1) \alpha}{N} } \right)+p_N^{\prime\prime}(z) \frac{1}{\cosh^2 \left(z \sqrt{\frac{2(N-1) \alpha}{N} } \right)}-p_N^{\prime\prime}(z) < 0$ for $z > 0$ and $q^{\prime}(z)=0$ for $z=0$. This is because $p_N^{\prime\prime\prime}(z)\leq 0$ ($z \geq 0$) by Lemma \ref{lemma:pN}, $\cosh(z) \geq 1$ ($z \geq 0$), and $\cosh(z) = 1$ if and only if $z=0$. 
{Let  $s(x) = p'_N(x) - \frac{N}{2(N-1) \alpha} p''_N(x)$. So 
$f'_N(x) = s(x)/q(x)$.
It is clear that for $x > 0$, $f'_N(x) < 0$ (since $s(x) > 0$ and $q(x) < 0$). Now we consider the asymptotics of $s(x)$ and $q(x)$ as $x \to 0^{+}$. By Taylor's expansion,
$$q(x) = p''_N(0) \sqrt{\frac{N}{2(N-1)\alpha}} x \sqrt{\frac{2(N-1)\alpha}{N}} + o(x) - p''_N(0)x + o(x) = o(x).$$
Since $p''_N(x) < 0$ for $x > 0$, we have $s(x) \ge p'_N(x) = p''_N(0) x + o(x)$.
Therefore, $f'_N(x) = s(x)/q(x) \to -\infty$ as $x \to 0^{+}$. This implies that $f_N(x) \to \infty$ as $x \to 0^+$.} Similarly, $z + \tilde{f}_N(z) = \widetilde{x}^i + y$ has a unique negative root since $f_N(-x) = f_N(x)$.

We then prove the unique positive root of \eqref{eq:intercept}. 
Define $r(z) = \frac{p_N^{\prime\prime}(z)}{p_N^{\prime}(z)}$ where $p_N(x)$ is defined in \eqref{eq:pN}.
Note that 
$r(0) = \frac{p_N^{\prime\prime}(0)}{p_N^{\prime}(0)}=\frac{\mathbb{E}\left[\int_0^{\infty}e^{-\alpha t}h^{\prime \prime}\left(\sqrt{\frac{N-1}{N}}B_t\right) dt\right]}{\mathbb{E}\left[\int_0^{\infty}e^{-\alpha t}h^{\prime }\left(\sqrt{\frac{N-1}{N}}B_t \right)dt\right]}.$
By Assumption \textbf{H2${'}$}, $p_N^{\prime}(0)=0$, $\frac{k}{\alpha}<p_N^{\prime\prime}(0)<\frac{K}{\alpha}$, and
$
r^{\prime}(z) = \frac{p_N^{\prime\prime\prime}(z)p_N^{\prime}(z)-(p_N^{\prime\prime}(z))^2}{(p_N^{\prime}(z))^2}.
$
Along with Lemma \ref{lemma:pN}, we have $r(0)=\infty$ and $r^{\prime}(z)\leq 0$.
Furthermore, since $k \leq h^{\prime\prime}\leq K$ and $h^{\prime} \geq kx +c$ for some constant $c$, we have $\lim_{x \rightarrow \infty}r(x)=0$. Moreover, define $ f(x)= \sqrt{\frac{2(N-1) \alpha}{N}} \tanh\left(x \sqrt{\frac{2(N-1) \alpha}{N}}  \right)$, then it is easy to check that $f(0)=0$, $f^{\prime}(x)>0$ for $x \geq 0$, and $\lim_{x \rightarrow\infty}f(x)= \sqrt{\frac{2(N-1) \alpha}{N}} $. Therefore, $f(x)=r(x)$ has a unique positive solution.
\end{proof}
}


\subsection{Controlled dynamics}
\label{s33}
Given the candidate game value to $(\mbox{HJB-$C_p$})$, we derive the corresponding NEP by showing the existence of a weak solution $(\pmb{X}_t,{Y}_t)$ to a Skorokhod problem with an unbounded domain, where the boundary of the domain depends on both the diffusion term $\pmb{X}_t$ and the degenerate term $\pmb{Y}_t$. 

{Recall the region $\mathcal{W}_{NE}$ defined in \eqref{eq:WNE2} and note that $\mathcal{W}_{NE}$ is  unbounded  in $\mathbb{R}^{N+1}$ with $2N$ boundaries. For $i=1,2,\cdots,N$, define the $2N$ faces of $\mathcal{W}_{NE}$ as
\begin{eqnarray*}
F_i = \{(\pmb{x},y) \in \partial \mathcal{W}_{NE}\,\,\,\vert\,\,\,  (\pmb{x},y) \in \partial E_i^{+}\},\quad
F_{i+N} = \{(\pmb{x},y) \in \partial \mathcal{W}_{NE}\,\,\,\vert \,\,\, (\pmb{x},y) \in \partial E_i^{-}\}.
\end{eqnarray*}
Then the normal direction of each face is given by ($i=1,2,\cdots,N$)
\begin{eqnarray*}
\pmb{n}_i &=&c_i \left(-\frac{1}{N-1}, \cdots, -\frac{1}{N-1}, 1, -\frac{1}{N-1}, \cdots, -\frac{1}{N-1}, (f_{N}^{-1})^{\prime}(y) \right),\\
\pmb{n}_{i+N} &=& c_{i+N} \left(\frac{1}{N-1}, \cdots, \frac{1}{N-1}, -1, \frac{1}{N-1}, \cdots, \frac{1}{N-1}, (f_{N}^{-1})^{\prime}(y) \right),
\end{eqnarray*}
with the $i^{th}$ component to be $\pm 1$.  
$c_i$, $c_{N+i}$ are normalizing constants such that $\|\pmb{n}_i\|=\|\pmb{n}_{N+i}\|=1$.}

Denote the reflection direction on each face as
\begin{eqnarray*}
\pmb{r}_i = c_i^{\prime} \left(0,\cdots,-1,\cdots, 0, -1\right),\quad
\pmb{r}_{N+i} = c^{\prime}_{N+i}\left(0,\cdots,1,\cdots, 0,-1\right),
\end{eqnarray*}
with the $i^{th}$ component to be $\pm 1$.
$c_i^{\prime}$, $c^{\prime}_{N+i}$ are normalizing constants such that $\|\pmb{r}_i\|=\|\pmb{r}_{N+i}\|=1$. NE strategy is defined as follows.

{\bf Case 1:} $(\pmb{X}_{0-},Y_{0-}) = (\pmb{x},y) \in  \overline{\mathcal{W}_{NE}}$. One can check that $\mathcal{W}_{NE}$ defined in \eqref{eq:WNE2} and $\{\pmb{r}_i\}_{i=1}^{2N}$  defined above satisfies assumptions \textbf{A1}-\textbf{A5}. (See Appendix A for the satisfiability of \textbf{A1}-\textbf{A5}). According to Theorem \ref{SRBM}, there exists a weak solution to the Skorokhod problem with data $\left(\mathcal{W}_{NE}, \{\pmb{r}_i\}_{i=1}^{2N},\pmb{b},\pmb{\sigma},\pmb{x}\in \overline{\mathcal{W}_{NE}}\right)$.

{\bf Case 2:} $(\pmb{X}_{0-},Y_{0-}) = (\pmb{x},y) \notin  \overline{\mathcal{W}_{NE}}$, that is, there exists $i \in \{1, \cdots, N\}$ such that $(\pmb{X}_{0-}, Y_{0-}) \in \mathcal{A}_i$. (1) { If $(\pmb{x},y) \in \mathcal{A}_i^{+}\cap E_{i,1}^+$, then $\tilde{x}^i \ge f_N^{-1}(y)$ and $y \ge \tilde{x}^i+x_0$. In this case, player $i$ will move immediately from $X_{0-}^i=x^i$ to $X_{0}^i=x^i_{+}+ \frac{\sum_{k \neq i}x^k}{N-1}$ at time $0$, where $x^i_{+}$ is the unique positive root such that $z - f_N(z) = \tilde{x}^i - y$. This will reduce the initial resource from $Y_{0-}=y$ to $Y_{0} = f_N(x^i_{+})\ge 0$.  $ f_N(x^i_{+})\ge 0$ holds since $y \ge x_0+\tilde{x}_i$ when $(\pmb{x},y)\in {E}_{i,1}^+$. Other players' dynamics remain unchanged, i.e., $X_{0-}^j = X_{0}^j=x^j$ for $j\neq i $ and $1 \leq j \leq N$. By Assumption \textbf{H3-$\bf C_{p}$}, we have $(\pmb{X}_0,Y_{0})=\left(\big(\pmb{x}^{-i},\frac{\sum_{k \neq i}x^k}{N-1}+x_{+}^i\big),f_N(x_+^i)\right) = \Pi(\pmb{X}_{0-},Y_{0-}) \in  \overline{\mathcal{W}_{NE}}$.
(2)  If $(\pmb{x},y) \in \mathcal{A}_i^{+}\cap E_{i,2}^+$, then $\tilde{x}^i \ge f_N^{-1}(y)$ and $y < \tilde{x}^i+x_0$. In this case, player $i$ will move immediately from $X_{0-}^i=x^i$ to $X_{0}^i=x^i-y$ and the initial resource $Y_{0-}=y$ is decreased to $Y_{0} = 0$ at time $0$. 
Other players' dynamics remain unchanged, i.e., $X_{0}^j = X_{0-}^j=x^j$ for $j\neq i $ and $1 \leq j \leq N$. By Assumption \textbf{H3-$\bf C_{p}$}, we have $(\pmb{X}_0,Y_{0})=\left((\pmb{x}^{-i},x^i-y),0\right)= \Pi(\pmb{X}_{0-},Y_{0-})\in  \overline{\mathcal{W}_{NE}}$.
(3) Similarly,  if $(\pmb{x},y) \in \mathcal{A}_i^{-}\cap  E_{i,1}^-$, then $\tilde{x}^i \le - f_N^{-1}(y)$ and $y \ge -\tilde{x}^i-x_0$. And player $i$ will move immediately from $X_{0-}^i=x^i$ to $X_{0}^i=x^i_{-}+ \frac{\sum_{k \neq i}x^k}{N-1}$ at time $0$, where $x^i_{-}$ is the unique negative root such that $z+\tilde{f}_N(z) = \tilde{x}^i+y$, and  $Y_{0-}=y$ is now $Y_{0} = \tilde{f}_N(x^i_{-}) \ge 0$. 
Other players' dynamics remain unchanged, i.e., $X_{0}^j = X_{0-}^j=x^j$ for $j \neq i$ and $1 \leq j \leq N$. By Assumption \textbf{H3-$\bf C_{p}$}, we have $(\pmb{X}_0,Y_{0})=\left(\big(\pmb{x}^{-i},\frac{\sum_{k \neq i}x^k}{N-1}+x_{-}^i\big),
\tilde{f}_N(x_-^i)\right)= \Pi(\pmb{X}_{0-},Y_{0-})\in  \overline{\mathcal{W}_{NE}}.$ (4)
If $(\pmb{x},y) \in \mathcal{A}_i^{-}\cap E_{i,2}^-$, then $\tilde{x}^i \le -f_N^{-1}(y)$ and $y <-\tilde{x}^i-x_0$. In this case, player $i$ will move immediately from $X_{0-}^i=x^i$ to $X_{0}^i=x^i+y$  and this will change $Y_{0-}=y$ to $Y_{0} = 0$ at time $0$.  Other players' dynamics remain unchanged, i.e., $X_{0-}^j = X_{0}^j=x^j$ for $j \neq i$ and $1 \leq j \leq N$. By Assumption \textbf{H3-$\bf C_{p}$}, we have $(\pmb{X}_0,Y_{0})=\left(\big(\pmb{x}^{-i},x^i+y\big),0\right)=\Pi(\pmb{X}_{0-},Y_{0-}) \in  \overline{\mathcal{W}_{NE}}.$ 
}
 \begin{figure}
    \centering 
  \includegraphics[width=0.9\linewidth]{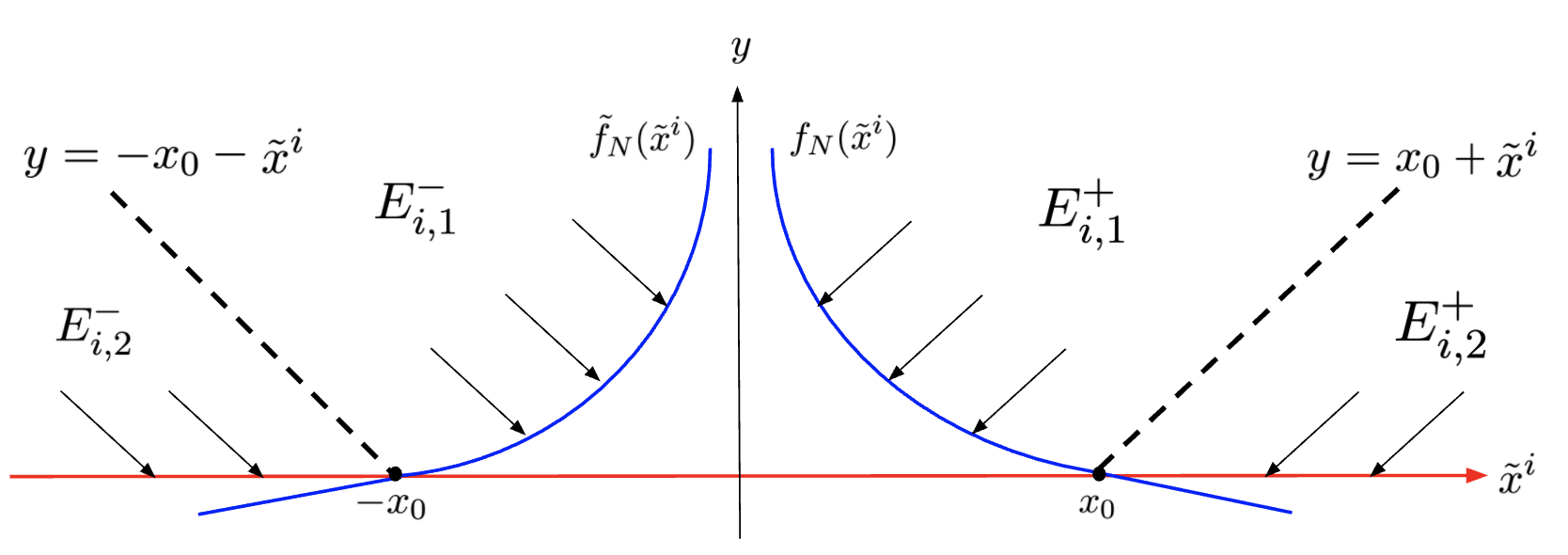}  
  
\caption{\label{fig:Cp_initial}Demonstration of the initial control when  $(\pmb{X}_{0-},Y_{0-}) = (\pmb{x},y) \notin  \overline{\mathcal{W}_{NE}}$.}
\end{figure}
\subsection{NE for the $N$-player game}
\label{s34}
Combining the results in Sections \ref{s31} and \ref{s33}, and based on the verification theorem developed in Section
\ref{section:verification}, we have the following theorem of the NE for the $N$-player game \eqref{eq:JA_game} with constraint \eqref{eq:YA}. 

\begin{theorem}[NE for the $N$-player game {$\pmb{C_p}$}]
\label{thm:NA}
Assume \textbf{H1$'$}-\textbf{H2$'$} {and \textbf{H3-$\bf C_{p}$}}. Define $u^i \in\mathbb{R}^N \times \mathbb{R}_{+} \rightarrow \mathbb{R}$ by {
\begin{equation}
\label{eq:valueN-aux}
u^i(\pmb{x},y) = 
\left\{ \begin{array}{cll}
 p_N(\widetilde{x}^i) + A_N(y) \cosh\left( \widetilde{x}^i \sqrt{\frac{2(N-1) \alpha}{N} }\right) & \mbox{if } |\tilde{x}^i| \le f^{-1}_N(y),  \mbox{ and }y=0,\\  [3 pt]
 u^i\left(\big(\pmb{x}^{-i}, x^i_{+} + \frac{\sum_{k \neq i}x^k}{N-1}\big), f_N(x^i_{+})\right)  & \mbox{if }  (\pmb{x}, y) \in  E_{i,1}^{+}, \\ [3pt]
  u^i\left((\pmb{x}^{-i}, x^i-y), 0\right)  & \mbox{if }  (\pmb{x}, y) \in  E_{i,2}^{+}, \\ [3pt]
 u^i\left(\pmb{x}^{-i}, \frac{\sum_{k \neq i}x^k}{N-1} + x^i_{-}, \tilde{f}_N(x^i_{-})\right)  & \mbox{if } (\pmb{x}, y) \in   E_{i,1}^{-}, \\
   u^i\left((\pmb{x}^{-i}, x^i+y), 0\right)   & \mbox{if } (\pmb{x}, y) \in   E_{i,2}^{-}, 
\end{array}\right.
\end{equation}
and define
$v^i: \mathbb{R}^N \times \mathbb{R}_{+} \rightarrow \mathbb{R}$ as
\begin{equation}
\label{eq:valueN}
v^i(\pmb{x},y) = 
\left\{ \begin{array}{cll}
u^i(\pmb{x},y) & \mbox{if } (\pmb{x}, y) \in \overline{\mathcal{W}_{-i}} , \\ [3 pt]
 v^i\left(\pmb{x}^{-j}, x^j_{+} + \frac{\sum_{k \neq j}x^k}{N-1}, f_N(x^j_{+})\right) & \mbox{if } (\pmb{x}, y) \in \mathcal{A}_j^+ \cap E_{j,1}^{+} \mbox{ for } j \ne i, \\ [3 pt]
 v^i\left(\pmb{x}^{-j}, x^j -y, 0\right) & \mbox{if } (\pmb{x}, y) \in \mathcal{A}_j^+ \cap E_{j,2}^{+} \mbox{ for } j \ne i, \\ [3 pt]
v^i\left(\pmb{x}^{-j}, \frac{\sum_{k \neq j}x^k}{N-1} + x^j_{-}, \tilde{f}_N(x^j_{-})\right) & \mbox{if } (\pmb{x}, y) \in \mathcal{A}_j^- \cap E_{j,1}^{-} \mbox{ for } j \ne i, 
\\ [3 pt]
 v^i\left(\pmb{x}^{-j}, x^j +y, 0\right) & \mbox{if } (\pmb{x}, y) \in \mathcal{A}_j^- \cap E_{j,2}^{-} \mbox{ for } j \ne i, 
\end{array}\right.
\end{equation}}
where 
\begin{itemize}[itemsep = 3 pt]
\item
$\mathcal{A}_i$ and $\mathcal{W}_i$ are given in \eqref{eq:AW}, and $E^{\pm}_{i,1}$ and $E^{\pm}_{i,2}$ are given in \eqref{eq:threshold21}- \eqref{eq:threshold22} with $f_N(\cdot)$ defined by \eqref{eq:fNder}-\eqref{eq:intercept}, {and $\tilde{f}_N (x)= f_N(-x)$ for $x<0$.}
\item
$\widetilde{x}^i$ is defined by \eqref{eq:tildex}, and $A_N(\cdot)$ is defined by \eqref{eq:AN}.
\item
$x^i_{+}$ is the unique positive root of $z - f_N(z) = \widetilde{x}^i - y$ when $\tilde{x}^i \ge f_N^{-1}(y)$, and $x^i_{-}$ is the unique negative root of $z + \tilde{f}_N(z) = \widetilde{x}^i + y$ when $\tilde{x}^i <- f_N^{-1}(y)$.
\end{itemize}
Then $v^i$ is the game value associated with an NEP $\pmb{\xi}^{*} = (\xi^{1*}, \cdots, \xi^{N*})$. That is,
$
v^{i}(\pmb{x}, y) = J_{C_p}^i(\pmb{x},y; \pmb{\xi}^{*}).
$
Moreover, 
the controlled process $(\pmb{X}^{*}, Y^{*})$ under $\pmb{\xi}^{*}$ is given in Section \ref{s33}.
\end{theorem}


\begin{proof} 
First,  $u^i(\pmb{x},y)\in \mathcal{C}^2(\mathbb{R}^N\times \mathbb{R}_{+})$ by construction: { the $\mathcal{C}^2$ regularity near $y = 0$ follows from \eqref{eq:AN}, and 
the facts that $f_N^{-1}(y) \to x_0$ as $y \to 0$ and $A_N(f_N(x_0)) = 0$. 
} { 
To see that $z - f_N(z) = \widetilde{x}^i - y$ has a unique positive root, it suffices to prove that $f_N$ is decreasing on $\mathbb{R}_{+}$.
This fact is shown in Lemma \ref{lemma:f}.}
Now let us check conditions (i)-(vii) in Theorem \ref{thm:verification}.
\begin{itemize}
\item[(i)] Based on the analysis in Section \ref{s33}, when $(\pmb{x},y)\in \overline{\mathcal{W}_{NE}}$, the NE strategy is a solution to the Skorokhod problem specified in Case 2, which is a continuous process. When $(\pmb{x},y)\notin \mathcal{W}_{NE}$, the initial push specified in Case 1 satisfies the ``no simultaneous jump'' condition. {Note when the fuel is used up, the dynamics $\pmb{X}_t$ will become uncontrolled and move freely without control.} 
\item[(ii)] {
Now we check condition (ii) in the verification theorem, i.e.,  $v^i$ defined in \eqref{eq:valueN} satisfying the QVI \eqref{QVI}. It consists of the following three steps.  { The idea is to apply the Implicit Function Theorem and the calculation follows the lemma in \cite[p.58]{BSW}.}

\noindent{\bf Step 1} is to verify that $v^i$ defined in \eqref{eq:valueN} satisfies 
\begin{eqnarray}\label{diffusion_term}
-\alpha v^i +h\left(\frac{N-1}{N}\tilde{x}^i\right) +\frac{1}{2}\sum_{j=1}^N v^i_{x^jx^j} \geq 0
\end{eqnarray}
 for $(\pmb{x},y)\in \overline{\mathcal{W}_{-i}}$ and that the inequality is strict for  $(\pmb{x},y)\in \mathcal{A}_{i}$ and the equality holds in $\overline{\mathcal{W}_{NE}}$.

Since $p_N(\tilde{x}^i)$ is a solution to $-\alpha v^i +h\left(\frac{N-1}{N}\tilde{x}^i\right) +\frac{1}{2}\sum_{j=1}^N v^i_{x^jx^j} = 0$ and $\cosh\left(\sqrt{\frac{2(N-1)\alpha}{N}}\tilde{x}^i\right)$ is a solution to $-\alpha v^i +\frac{1}{2}\sum_{j=1}^N v^i_{x^jx^j} = 0$, $p_N(\widetilde{x}^i) + A_N(y) \cosh\left( \widetilde{x}^i \sqrt{\frac{2(N-1) \alpha}{N} }\right)$ satisfies  $-\alpha v^i +h\left(\frac{N-1}{N}\tilde{x}^i\right) +\frac{1}{2}\sum_{j=1}^N v^i_{x^jx^j} = 0$. Therefore (\ref{diffusion_term}) holds for $(\pmb{x},y)\in\overline{\mathcal{W}_{NE}}$ with equality.

Denote $\pmb{p}=(\pmb{w},z)$ with $\pmb{w}\in \mathbb{R}^N$ and $z \in \mathbb{R}_{+}$. When $\pmb{p}\in \mathcal{A}_i^{+}\cap {E}_{i,1}^{+}$, we have $v^i(\pmb{p})=v^i(\pmb{q})$ where $\pmb{q}:=\left(\pmb{w}^{-i},w_{+}^i+\frac{\sum_{k \neq i}w^k}{N-1},f_N(w_+^i)\right)=\Pi(\pmb{p})$  translates  $\pmb{p}$ to the boundary of ${E}_i^{+}$, i.e., $\partial {E}_i^{+}:=\{(\pmb{x},y)\,\,|\,\,y=f_N^{-1}\left(\tilde{x}^i\right)\}$  along the direction $(0,0,\cdots,-1,0,\cdots,-1)\in \mathbb{R}^{N+1}$ with all components  zero except the $i$-th and $(N+1)$-th components being $-1$. {Note that when $ \pmb{p}=(\pmb{w},z)\in  {E}_{i,1}^{+}$, we have  $z \ge \tilde{w}_i+x_0$ and $f_N(w_+^i) \ge 0$. (See Figure \ref{fig:Cp_initial}). By the Implicit Function Theorem,}  
$
v^i_{x^ix^i}(\pmb{p}) = \frac{v^i_{x^ix^i}(\pmb{q})+f^{\prime}_N(w_{+}^i)v^i_{x^i y}(\pmb{q})}{1-f^{\prime}_N(w_{+}^i)} = v^i_{x^ix^i}(\pmb{q}),
$
the last equality holds since $v^i_{x^i x^i} = -v^i_{x^i y}$ on $y = f_N(\tilde{x}^i)$. {To see this more clearly, Denote 
$\pmb{p} :=(\pmb{w},z)$ with $\pmb{w}\in \mathbb{R}^N$ and $y \in \mathbb{R}_{+}$ such that $\pmb{p}\in \mathcal{A}_i \cap E_{i,1}^{+}$. And also denote $\pmb{q} := (\pmb{w}^{-i},w^i-\theta,z-\theta) $ such that $z-\theta = f_N(\tilde{w}_i-\theta)$. Then we have ${v}^i(\pmb{p}) ={v}^i(\pmb{q})$ by the definition of $v^i$.
Taking the derivative of  $z-\theta = f_N(\tilde{w}_i-\theta)$ with respect to $w_i$ leads to $-\frac{\partial \theta}{\partial w_i} = f^{\prime}_N (\tilde{w}_i-\theta) \left(1-\frac{\partial \theta}{\partial w_i}\right)$, and hence $\frac{\partial \theta}{\partial w_i} = -\frac{f_N^{\prime}(\tilde{w}_i-\theta)}{1-f_N^{\prime}(\tilde{w}_i-\theta)}$. Then 
\begin{eqnarray*}
v^i_{x^i}(\pmb{p}) &=& \frac{\partial v^i}{\partial w_i}(\pmb{w}^{-i},w^i-\theta,z-\theta)\\
&=& \left(1-\frac{\partial \theta}{\partial w_i}\right)v^i_{x^i}(\pmb{w}^{-i},w^i-\theta,z-\theta) - v^i_{y}(\pmb{w}^{-i},w^i-\theta,z-\theta)\frac{\partial \theta}{\partial w_i}\\
&=& \left(1-\frac{\partial \theta}{\partial w_i}\right)v^i_{x^i}(\pmb{w}^{-i},w^i-\theta,z-\theta) + v^i_{x^i}(\pmb{w}^{-i},w^i-\theta,y-\theta)\frac{\partial \theta}{\partial w_i}\\
&=&v^i_{x^i}(\pmb{w}^{-i},w^i-\theta,z-\theta).
\end{eqnarray*}
The second last equation holds since $v_{x^i}^i+v_{y}^i = 0$ on $\overline{\mathcal{W}}_i\cap \overline{\mathcal{A}}_i^{+}$. Similarly, 
\begin{eqnarray}
v^i_{x^ix^i}(\pmb{p}) &=& \frac{\partial v_i^i}{\partial w_i}(\pmb{w}^{-i},w^i-\theta,z-\theta)\nonumber\\
&=& \left(1-\frac{\partial \theta}{\partial x_i}\right)v^i_{x^i x^i}(\pmb{w}^{-i},w^i-\theta,z-\theta) - v^i_{x^i y}(\pmb{w}^{-i},w^i-\theta,z-\theta)\frac{\partial \theta}{\partial w_i}\nonumber\\
&=& \left(1-\frac{\partial \theta}{\partial w_i}\right)v^i_{x^ix^i}(\pmb{w}^{-i},w^i-\theta,z-\theta) + v^i_{x^i x^i}(\pmb{w}^{-i},w^i-\theta,z-\theta)\frac{\partial \theta}{\partial x_i}\nonumber\\
&=&v^i_{x^i x^i}(\pmb{w}^{-i},w^i-\theta,z-\theta).\label{eq:derivative2}
\end{eqnarray}
The second last equation holds since $v_{x^i x^i}^i+v_{x^i y}^i = 0$ on $\overline{\mathcal{W}}_i\cap \overline{\mathcal{A}}_i^{+}$.  }

{Similarly, we have} $v^i_{x^j x^j}(\pmb{p}) =v^i_{x^j x^j}(\pmb{q})$  for $j\neq i$. {To prove this, take the derivative of  $z-\theta = f_N(\tilde{w}_i-\theta)$ with respect to $w_j$ for  $j \neq i$ and $j\le N$, we have $-\frac{\partial \theta}{\partial w_j} = f^{\prime}_N (\tilde{w}_i-\theta) \left(-\frac{1}{N-1}-\frac{\partial \theta}{\partial w_i}\right)$, and hence $\frac{\partial \theta}{\partial w_j} = \frac{1}{N-1}\frac{f_N^{\prime}(\tilde{w}_i-\theta)}{1-f_N^{\prime}(\tilde{w}_j-\theta)}$. 
Therefore,
\begin{eqnarray*}
v^i_{x_j}(\pmb{p}) &=& \frac{\partial v^i}{\partial w_j}(\pmb{w}^{-i},w^i-\theta,z-\theta)\\
&=& -v^i_{x^i}(\pmb{w}^{-i},w^i-\theta,z-\theta) \frac{\partial \theta}{\partial w_j}- v^i_{y}(\pmb{w}^{-i},w^i-\theta,z-\theta)\frac{\partial \theta}{\partial w_j}+ v_{x^j}^i(\pmb{w}^{-i},w^i-\theta,z-\theta) \\
&=&v^i_{x^j}(\pmb{w}^{-i},w^i-\theta,z-\theta).
\end{eqnarray*}
The  last equation holds since $v_{x^i}^i+v_{y}^i = 0$ on $\overline{\mathcal{W}}_i\cap \overline{\mathcal{A}}_i^{+}$. Similarly, we have
\begin{eqnarray*}
v^i_{x^j x^j}(\pmb{p}) &=& \frac{\partial v^i_{x^j}}{\partial w_j}(\pmb{w}^{-i},w^i-\theta,z-\theta)\nonumber\\
&=& -v^i_{x^i x^j}(\pmb{w}^{-i},w^i-\theta,z-\theta)\frac{\partial \theta}{\partial w_j} - v^i_{x^j y}(\pmb{w}^{-i},w^i-\theta,z-\theta)\frac{\partial \theta}{\partial w_j}+ v_{x^j x^j}^i(\pmb{w}^{-i},w^i-\theta,z-\theta) \\
&=&v^i_{x^j x^j}(\pmb{w}^{-i},w^i-\theta,z-\theta)=v^i_{x^j x^j}(\pmb{q}).\label{eq:derivative}
\end{eqnarray*}
The second last equation holds since $v_{x^i x^j}^i+v_{x^j y}^i = 0$ on $\overline{\mathcal{W}}_i\cap \overline{\mathcal{A}}_i^{+}$.  }

{ Therefore when $\pmb{p}=(\pmb{w},z)\in \mathcal{A}_i^{+}\cap E^{+}_{i,1}$,}
\begin{eqnarray*}
&&-\alpha v^i(\pmb{p}) +h\left(\frac{N-1}{N}\tilde{p}^i\right) +\frac{1}{2}\sum_{j=1}^N v^i_{x^j x^j}(\pmb{p})  \\
&=& \big(-\alpha v^i(\pmb{q}) +h\left(\frac{N-1}{N}\tilde{q}^i\right) +\frac{1}{2}\sum_{j=1}^N v^i_{x^j x^j}(\pmb{q})\big) + h\left(\frac{N-1}{N}\tilde{p}^i\right) - h\left(\frac{N-1}{N}\tilde{q}^i\right)\\
&>&-\alpha v^i(\pmb{q}) +h\left(\frac{N-1}{N}\tilde{q}^i\right) +\frac{1}{2}\sum_{j=1}^N v^i_{x^j x^j}(\pmb{q}), 
\end{eqnarray*}
in which $\tilde{q}^i = q^i-\frac{\sum_{j=1,j\neq i}^N q^j}{N-1}$ and $\tilde{p}^i = p^i-\frac{\sum_{j=1,j\neq i}^N p^j}{N-1}=\tilde{w}^i$.
The last inequality holds since $\tilde{p}^i>\tilde{q}^i>0$ and $h$ is convex and symmetric to $0$. Now for $\pmb{q}\in \partial E_i^+$, we have $-\alpha v^i(\pmb{q}) +h\left(\frac{N-1}{N}\tilde{q}^i\right) +\frac{1}{2}\sum_{j=1}^N v^i_{x^j x^j}(\pmb{q})  = 0.$  
Therefore,
$
-\alpha v^i +h\left(\frac{N-1}{N}\tilde{x}^i\right) +\frac{1}{2}\sum_{j=1}^N v^i_{x^jx^j} \geq 0$
for $\pmb{p}:=(\pmb{w},z)\in \mathcal{W}_i \cap {E}_{i,1}^{+}$. {When $\pmb{p}:=(\pmb{w},z)\in \mathcal{A}_i^{+}\cap {E}_{i,2}^{+}$, we have $v^i(\pmb{p})=v^i(\pmb{q})$ where $\pmb{q}:=\left(\pmb{w}^{-i},w^i-z,0\right)=\Pi(\pmb{p})$  translates  $\pmb{p}$ to $\{(\pmb{x},y) \in \mathbb{R}^N \times \mathbb{R}_{+}\,\, \vert \,\,y=0\}$ along the direction $(0,0,\cdots,-1,0,\cdots,-1)\in \mathbb{R}^{N+1}$.  In this case,  $v^i(\pmb{p}) =  p_N(\widetilde{w}^i-z) + A_N(0) \cosh\left( (\widetilde{w}^i -z)\sqrt{\frac{2(N-1) \alpha}{N} }\right) $  by definition. Hence $-\alpha v^i(\pmb{p}) +h\left(\frac{N-1}{N}\tilde{p}^i\right) +\frac{1}{2}\sum_{j=1}^N v^i_{x^j x^j}(\pmb{p}) =0$ holds by straightforward calculation.} Similar analysis holds for $\pmb{p}:=(\pmb{w},z)\in \mathcal{A}_i^{-}$.\\

\noindent {\bf Step 2} is to show 
\begin{eqnarray}\label{gradient_term}
v^i_{x^i} + v^i_{y} \leq 0,\,\,
 \text{and}\,\, -v^i_{x^i} + v^i_{y} \leq 0, \,\,\text{for}\,\, (\pmb{x},y)\in \overline{\mathcal{W}_{-i}}, \ \  \mbox{and} 
\end{eqnarray}
\begin{eqnarray}\label{gradient_term_2}
\begin{cases}
 v_{x^i}^i+v_{y}^i = 0,\,\, \text{for}\,\, (\pmb{x},y) \in \mathcal{A}_i^+ \\
-v_{x^i}^i+v_{y}^i = 0,\,\, \text{for}\,\, (\pmb{x},y) \in\mathcal{A}_i^-.
\end{cases}
\end{eqnarray}
Let us first check \eqref{gradient_term_2}.
When $\pmb{p}:=(\pmb{w},z)\in \mathcal{A}_{i}^{+}\cap E_{i,1}^{+}$, denote $\pmb{q}:=\left(\pmb{w}^i,w_{+}^i+\frac{\sum_{k \neq i}w^k}{N-1},f_N(w_+^i)\right)=\Pi(\pmb{p})$ which translate $\pmb{p}$ to the boundary of ${E}_i^{+}$, i.e., $\partial {E}_i^{+} :=\{(\pmb{x},y)\,\,|\,\,y=f_N\left(\tilde{x}^i\right)\}$  along the direction $(0,0,\cdots,-1,0,\cdots,-1)\in \mathbb{R}^{N+1}$. Then by the definition of \eqref{eq:valueN},  $v^i(\pmb{p}) = v^i(\pmb{q}) = u^i(\pmb{q})$, $
v^i_{x^i}(\pmb{p}) = \frac{1}{1-f_N^{\prime}(w_+^i)}v^i_{x^i}(\pmb{q}) + \frac{f_N^{\prime}(w_+^i)}{1-f_N^{\prime}(w_+^i)} v^i_{y}(\pmb{q}),$
and
$v^i_{y}(\pmb{p}) = -\frac{1}{1-f_N^{\prime}(w_+^i)}v^i_{x^i}(\pmb{q}) - \frac{f_N^{\prime}(w_+^i)}{1-f_N^{\prime}(w_+^i)} v^i_{y}(\pmb{q})$. 
Therefore, $v^i_{x^i}(\pmb{p})+ v^i_{y}(\pmb{p}) =0.$ 
{ When $\pmb{p}:=(\pmb{w},z)\in \mathcal{A}_i^{+}\cap {E}_{i,2}^{+}$, we have $v^i(\pmb{p})=v^i(\pmb{q})$ where $\pmb{q}:=\left(\pmb{w}^{-i},w^i-z,0\right)=\Pi(\pmb{p})$  translates  $\pmb{p}$ to $\{(\pmb{x},y) \in \mathbb{R}^N \times \mathbb{R}_{+}\,\, \vert \,\,y=0\}$ along the direction $(0,0,\cdots,-1,0,\cdots,-1)\in \mathbb{R}^{N+1}$. 
In this case,  $v^i(\pmb{p}) =  p_N(\widetilde{w}^i-z) + A_N(0) \cosh\left( (\widetilde{w}^i -z)\sqrt{\frac{2(N-1) \alpha}{N} }\right) $  by definition. Then $v^i_{x^i}(\pmb{p})+ v^i_{y}(\pmb{p}) =0$ holds by straightforward calculations.  }
Similarly,  $-v_{x^i}^i+v_{y}^i = 0$ for $(\pmb{x},y) \in \mathcal{A}_{i}^-$. As for \eqref{gradient_term}, by symmetry it suffices  to check the first inequality for $0\le \tilde{x}^i\leq f^{-1}_N(y)$. In this case,
 \begin{eqnarray*}
 &&v_{y}^i +v_{x^i}^i =A^{\prime}_N(y)\cosh\left(\tilde{x}^i \sqrt{\frac{2(N-1)\alpha}{N}}\right) +p_N^{\prime} (\tilde{x}^i) +A_N(y)\sinh\left(\tilde{x}^i \sqrt{\frac{2(N-1)\alpha}{N}}\right)\sqrt{\frac{2(N-1)\alpha}{N}}\\
 & = &  p^{\prime}_N(\tilde{x}^i)\left( 1- \cosh\left(\left(f^{-1}_N (y) - \tilde{x}^i \right)\sqrt{\frac{2(N-1)\alpha}{N}}\right)\right) +p_N^{\prime \prime}(f^{-1}_N(y))\sqrt{\frac{N}{2(N-1)\alpha}} \times \\
 &&\left[ \sinh\left(\left(f^{-1}_N (y) - \tilde{x}^i \right)\sqrt{\frac{2(N-1)\alpha}{N}}\right)-\frac{p^{\prime}_N(f^{-1}_N(y))-p^{\prime}_N(\tilde{x}^i)}{p_N^{\prime \prime}(f^{-1}_N(y))\sqrt{\frac{N}{2(N-1)\alpha}}}\cosh\left(\left(f^{-1}_N (y) - \tilde{x}^i \right)\sqrt{\frac{2(N-1)\alpha}{N}}\right)\right]\\
 & \leq &  p^{\prime}_N(\tilde{x}^i)\left( 1- \cosh\left(\left(f^{-1}_N (y) - \tilde{x}^i \right)\sqrt{\frac{2(N-1)\alpha}{N}}\right)\right)\nonumber\\
 && +p_N^{\prime \prime}(f^{-1}_N(y))\sqrt{\frac{N}{2(N-1)\alpha}} \left[\sinh\left(\left(f^{-1}_N (y) - \tilde{x}^i \right)\sqrt{\frac{2(N-1)\alpha}{N}}\right)\right.\nonumber\\
 &&\qquad \left.- \left(\left(f^{-1}_N (y) - \tilde{x}^i \right)\sqrt{\frac{2(N-1)\alpha}{N}}\right)\cosh\left(\left(f^{-1}_N (y) - \tilde{x}^i \right)\sqrt{\frac{2(N-1)\alpha}{N}}\right)\right] \le 0.
 \end{eqnarray*}
{The second to the last inequality holds since $p^{\prime}_N$ is a concave function and $p^{\prime\prime}_N(f^{-1}_N(y))>0$. The last inequality holds since $p^{\prime}_N(\tilde{x}^i) \ge 0$, $|\tilde{x}^i|\leq f^{-1}_N(y)$, and $p^{\prime\prime}_N(f^{-1}_N(y))>0$.}

\noindent {\bf Step 3} is to check  
 \begin{eqnarray}
 \begin{cases}
  & -v^i_{y}-v^i_{x^j} = 0, 
           \hspace{190pt}   \mbox{for } (\pmb{x},y) \in \mathcal{A}^+_{j}, j \ne i, \\ 
& \displaystyle   -v^i_{y}+v^i_{x^j}= 0, 
           \hspace{190pt}   \mbox{for } (\pmb{x},y) \in \mathcal{A}^-_{j}, j \ne i.
 \end{cases}
 \end{eqnarray}
By symmetry it is sufficient to check the first gradient condition.
 When $\pmb{p}:=(\pmb{w},z)\in\mathcal{A}^+_{j}\cap E^+_{j,1}$, denote $\pmb{q}:=\left(\pmb{w}^j,w_{+}^j+\frac{\sum_{k \neq j}w^k}{N-1},f_N(w_+^j)\right)=\Pi(\pmb{p})$ which translates $\pmb{p}$ to the boundary of ${E}_j^{+}$, i.e.,  $\partial {E}_j^{+} :=\{(\pmb{x},y)\,\,|\,\,y=f_N^{-1}\left(\tilde{x}^j\right)\}$ along the direction $(0,0,\cdots,-1,0,\cdots,-1)\in \mathbb{R}^{N+1}$ with all components  zero except the $j$-th and $(N+1)$-th components being $-1$. Then by the definition of \eqref{eq:valueN}, we have $v^i(\pmb{p}) = v^i(\pmb{q})$, 
$v^i_{x^j}(\pmb{p}) = \frac{1}{1-f_N^{\prime}(\tilde{q}^j)}v^i_{x^j}(\pmb{q}) + \frac{f_N^{\prime}(\tilde{q}^j)}{1-f_N^{\prime}(\tilde{q}^j)} v^i_{y}(\pmb{q}),$
and $v^i_{y}(\pmb{p}) = -\frac{1}{1-f_N^{\prime}(\tilde{q}^j)}v^i_{x^j}(\pmb{q}) - \frac{f_N^{\prime}(\tilde{q}^j)}{1-f_N^{\prime}(\tilde{q}^j)} v^i_{y}(\pmb{q}) $ where  $\tilde{q}^i = q^i-\frac{\sum_{j=1,j\neq i}^N q^j}{N-1}$.
Therefore, $v^i_{x^j}(\pmb{p})+ v^i_{y}(\pmb{p}) =0.$ { When $\pmb{p}:=(\pmb{w},z)\in \mathcal{A}_j^{+}\cap {E}_{j,2}^{+}$, we have $v^i(\pmb{p})=v^i(\pmb{q})$ where $\pmb{q}:=\left(\pmb{w}^{-j},w^j-z,0\right)=\Pi(\pmb{p})$  translates  $\pmb{p}$ to $\{(\pmb{x},y) \in \mathbb{R}^N \times \mathbb{R}_{+}\,\, \vert \,\,y=0\}$  along the direction $(0,0,\cdots,-1,0,\cdots,-1)\in \mathbb{R}^{N+1}$.  In this case,  $v^i(\pmb{p}) =  p_N(\widetilde{w}^j-z) + A_N(0) \cosh\left( (\widetilde{w}^j -z)\sqrt{\frac{2(N-1) \alpha}{N} }\right) $ holds by definition, and $v^i_{x^j}(\pmb{p})+ v^i_{y}(\pmb{p}) =0$ by straightforward calculations.  }
}
\item[(iii)] By the construction of Case 1 and Case 2, when $(\pmb{x},y)\notin \overline{\mathcal{W}_{-i}}$, there is a push at time $0$ to move the joint position to some point $(\hat{\pmb{x}},\hat{y})\in \partial \overline{\mathcal{W}_{-i}}$ such that $\Delta Y_0 \leq y$. when $(\pmb{x},y)\in \overline{\mathcal{W}_{-i}}$, $(\pmb{\xi}^{-i*},\xi^i)$ forms a solution to the Skorokhod problem in $\cap_{j \neq i} (E_j^{-}\cup E_j^{+})^c$. It is easy to verify that $\cap_{j \neq i} (E_j^{-}\cup E_j^{+})^c \subset \mathcal{W}_{-i}$ and the Skorokhod problem with $\cap_{j \neq i} (E_j^{-}\cup E_j^{+})^c$ has a weak solution. {When the fuel is used up, the dynamics $\pmb{X}_t$ will become uncontrolled and move freely without control.} Therefore condition $(iii)$ is satisfied.

\item[(iv)] Solution \eqref{eq:valueN} satisfies the smooth-fit principle in Section \ref{s31}, therefore, $v^i \in \mathcal{C}^2(\overline{\mathcal{W}}_{-i})$. 
{ Let us define a two-dimensional auxiliary function $$\widetilde{v}(x,y) = p_N(x)+A_N(y)\cosh\left(x\sqrt{\frac{2(N-1)\alpha}{N}}\right).$$
We first show that $\widetilde{v}(x,y)$ is convex when $|x| \leq f^{-1}_N(y)$ and  then  show  that $v^i(\pmb{x},y) $ defined in \eqref{eq:valueN} is convex in $\overline{\mathcal{W}}_{-i}$. 

\noindent{\bf Step 1} is to show that $\widetilde{v}(x,y)$ is convex when $|x| \leq f^{-1}_N(y)$. By straightforward calculation,
$
\widetilde{v}_{xx}(x,y) = p^{\prime \prime}_N(x)+\frac{2(N-1)\alpha}{N}\,A_N(y)\cosh\left(x\sqrt{\frac{2(N-1)\alpha}{N}}\right)$, $
\widetilde{v}_{xy}(x,y) = \sqrt{\frac{2(N-1)\alpha}{N}}\,A^{\prime}_N(y)\sinh\left(x\sqrt{\frac{2(N-1)\alpha}{N}}\right)$, and $\widetilde{v}_{yy}(x,y) = A^{\prime\prime}_N(y)\cosh\left(x\sqrt{\frac{2(N-1)\alpha}{N}}\right).$
When $0\leq x<f_N^{-1}(y)$, plugging \eqref{smoothfit} into the formula for $\widetilde{v}_{xx}(x,y)$ we have, 
\begin{eqnarray*}
\widetilde{v}_{xx}(x,y)  &=&   p^{\prime \prime}_N(x)+\,p_N^{\prime}  (f_N^{-1}(y))\sqrt{\frac{2(N-1)\alpha}{N}} \sinh  \left( f_N^{-1}(y)\sqrt{\frac{2(N-1) \alpha}{N} } \right) \, \cosh\left(x\sqrt{\frac{2(N-1)\alpha}{N}}\right) \\
&& -  p_N^{\prime\prime} (f_N^{-1}(y)) \cosh \left( f_N^{-1}(y) \sqrt{\frac{2(N-1) \alpha}{N} } \right) \, \cosh\left(x\sqrt{\frac{2(N-1)\alpha}{N}}\right).
\end{eqnarray*}
{ Given Lemma \ref{lemma:pN}, $p_N^{\prime}(x)$ is concave when $x>0$. Therefore for $y \ge 0$, }
$$p_N^{\prime}  (f_N^{-1}(y)) \ge p_N^{\prime}  (0) +p_N^{\prime\prime}  (f_N^{-1}(y))  (f_N^{-1}(y)-0) = p_N^{\prime\prime} ( f_N^{-1}(y))  f_N^{-1}(y).$$
The last equality holds since $h^{\prime}(0)=0$ from Assumption {\bf H2'}.
Combining the fact that $\sinh(z) \ge 0$ and $\cosh(z) \ge 0$ when $z \ge 0$, we have

\begin{eqnarray}
\widetilde{v}_{xx}(x,y)  &\ge&   p_N^{\prime\prime}  (f_N^{-1}(y))  f_N^{-1}(y)\sqrt{\frac{2(N-1)\alpha}{N}} \sinh  \left( f_N^{-1}(y)\sqrt{\frac{2(N-1) \alpha}{N} } \right) \, \cosh\left(x\sqrt{\frac{2(N-1)\alpha}{N}}\right) \nonumber\\
&& + p^{\prime \prime}_N(x)-  p_N^{\prime\prime} (f_N^{-1}(y)) \cosh \left( f_N^{-1}(y) \sqrt{\frac{2(N-1) \alpha}{N} } \right) \, \cosh\left(x\sqrt{\frac{2(N-1)\alpha}{N}}\right) \nonumber\\
&\ge& p^{\prime \prime}_N(x)+ p_N^{\prime\prime}  (x) \cosh\left(x\sqrt{\frac{2(N-1)\alpha}{N}}\right) \times \nonumber \\
&&\quad \left( x\sqrt{\frac{2(N-1)\alpha}{N}}\sinh  \left( x\sqrt{\frac{2(N-1) \alpha}{N} } \right) -  \cosh \left(x\sqrt{\frac{2(N-1) \alpha}{N} } \right) \right)\label{eq:bound1}\\
&=&p^{\prime \prime}_N(x) \left[  1+ z\sinh(z)\cosh(z)-\cosh^2(z)\right]\Bigg|_{z = x\sqrt{\frac{2(N-1)\alpha}{N}}} \ge 0\label{eq:bound2}
\end{eqnarray}
\eqref{eq:bound1} holds since $p_N^{\prime\prime}$ is non-increasing {(Lemma \ref{lemma:pN})} and $g_1(z):= z\sinh(z)-\cosh(z)$ is non-decreasing when $z \ge 0$. \eqref{eq:bound2} holds since $g_2(z) :=1+ z\sinh(z)\cosh(z)-\cosh^2(z)$ is non-negative when $z \ge 0$. To see this, $g_2(0)=0$ and 
\begin{eqnarray*}
g_2^{\prime} (z) = \cosh(z)[z\cosh(z)-\sinh(z)] + z \sinh^2(z) \ge 0, \mbox{ when } z\ge 0.
\end{eqnarray*}
On the other hand, denote 
$ g_3(z) := - p_N^{\prime}(z) \cosh  \left(z \sqrt{\frac{2(N-1) \alpha}{N} } \right) + p_N^{\prime\prime} (z)\sqrt{\frac{N}{2(N-1) \alpha}} \sinh \left( z \sqrt{\frac{2(N-1) \alpha}{N} } \right)$, then
$g_3^{\prime}(z) = -  \sqrt{\frac{2(N-1) \alpha}{N} }p_N^{\prime}(z) \sinh  \left(z \sqrt{\frac{2(N-1) \alpha}{N} } \right) +  p_N^{\prime\prime\prime} (z)\sqrt{\frac{N}{2(N-1) \alpha}} \sinh \left( z \sqrt{\frac{2(N-1) \alpha}{N} } \right).$
{
From Lemma \ref{lemma:pN}, we have $p_N^{\prime}(z) \ge 0$ and $p_N^{\prime\prime\prime}(z) \leq 0$ when $z \ge 0$, and hence $g_3^{\prime}(z)\leq 0$ when $z \ge 0$. Along with the fact that $f^{'}_N(z)<0$ when $z>0$ from Lemma \ref{lemma:f},} we have 
$A_N^{\prime\prime}(y) = g_3^{\prime}(f_N^{-1}(y))  \frac{1}{f_N^{\prime}(f_N^{-1}(y))} \ge 0.$
Therefore $\widetilde{v}_{yy}(x,y)  \ge 0$. Finally we show that $\widetilde{v}_{xx} \, \widetilde{v}_{yy}  - (\widetilde{v}_{xy} )^2 \ge 0$ when $0 \leq x \leq f^{-1}_N(y)$. To see this, denote $z=f^{-1}_N(y)$,
\begin{eqnarray*}
&&\widetilde{v}_{xx} \, \widetilde{v}_{yy}  - (\widetilde{v}_{xy} )^2= \Big (p^{\prime \prime}_N(x)+\frac{2(N-1)\alpha}{N}\,A_N(y)\cosh\left(x\sqrt{\frac{2(N-1)\alpha}{N}}\right)\Big)
\Big(A^{\prime\prime}_N(y)\cosh\left(x\sqrt{\frac{2(N-1)\alpha}{N}}\right)\Big)\\
 &&- \Big(\sqrt{\frac{2(N-1)\alpha}{N}}\,A^{\prime}_N(y)\sinh\left(x\sqrt{\frac{2(N-1)\alpha}{N}}\right) \Big)^2\\
&=&\frac{2(N-1)\alpha}{N}\left(- p_N^{\prime} \cosh  \left(z \sqrt{\frac{2(N-1) \alpha}{N} } \right) + p_N^{\prime\prime} \sqrt{\frac{N}{2(N-1) \alpha}} \sinh \left( z \sqrt{\frac{2(N-1) \alpha}{N} } \right)\right)\times\\
&&\left(p_N^{\prime} \cosh  \left(x \sqrt{\frac{2(N-1) \alpha}{N} } \right)- p_N^{\prime} \cosh  \left(z \sqrt{\frac{2(N-1) \alpha}{N} } \right)  \right) \ge 0.
\end{eqnarray*}
Similar result holds when $-f_N^{-1}(y) \leq x <0$ by symmetry.

\noindent{\bf Step 2} is to  show  that $v^i(\pmb{x},y) $ defined in \eqref{eq:valueN} is convex in $\overline{\mathcal{W}_{-i}}$. We take player one as an example to show $v^1(\pmb{x},y)=\tilde{v}(\tilde{x}_1,y)$ is convex in $\overline{\mathcal{W}_{-1}}$ where $\tilde{x}_1 =x_1 -\frac{\sum_{k=2}^N x_k}{N-1}$. The convexity of other players' value functions can be verified similarly. When $(\pmb{x},y) \in \overline{\mathcal{W}_{-1}}$, we have $|\tilde{x}_1| \leq y$ hence $\widetilde{v}(\tilde{x}_1,y)$ is non-negative definite. By  chain rule,  for $2 \leq k \neq j \leq N$,
\begin{eqnarray*}
&v^1_{x_1 x_1} (\pmb{x},y) = \widetilde{v}_{x x} (\tilde{x}_1,y), \quad v_{x_1 x_k} (\pmb{x},y) = - \frac{1}{N-1}  \widetilde{v}_{x x} (\tilde{x}_1,y), \quad v_{x_1 y} (\pmb{x},y) =   \widetilde{v}_{x y} (\tilde{x}_1,y),\quad v_{y y} (\pmb{x},y) =   \widetilde{v}_{y y} (\tilde{x}_1,y),\\
&v^1_{x_k x_j} (\pmb{x},y) = \frac{1}{(N-1)^2} \widetilde{v}_{x x} (\tilde{x}_1,y), \quad v_{x_1 x_k} (\pmb{x},y) = - \frac{1}{N-1}  \widetilde{v}_{x x},\quad v_{x_k y} (\pmb{x},y) = -\frac{1}{N-1}  \widetilde{v}_{x y} (\tilde{x}_1,y).\\
\end{eqnarray*}

Denote $H(\pmb{x},y) := \nabla^2 v^1(\pmb{x},y) \in \mathbb{R}^{(N+1) \times (N+1)}$ as the Hessian matrix of $v^1$ at some point $(\pmb{x},y) \in \overline{\mathcal{W}_{-1}}$. Then for any $\pmb{d} = (b_1,\cdots,b_{N},c)\in  \mathbb{R}^{N+1}$, 
\begin{eqnarray*}
\pmb{d}^TH(\pmb{x},y) \pmb{d}
&=& \left(b_1-\frac{1}{N-1}\sum_{k=2}^Nb_k\right)^2 \widetilde{v}_{x x} +2 \left(b_1-\frac{1}{N-1}\sum_{k=2}^Nb_k\right)c \widetilde{v}_{x y} + c^2 \widetilde{v}_{y y} = \pmb{e}^T \tilde{H}(\tilde{x}_1,y)\pmb{e} \ge 0,
\end{eqnarray*}
where $\pmb{e} = \left(b_1-\frac{1}{N-1}\sum_{k=2}^Nb_k,c \right)$ and $ \tilde{H}(\tilde{x}_1,y) = \nabla^2 \tilde{v}(\tilde{x}_1,y)$. The last inequality follows from the convexity of $ \tilde{v}(\tilde{x}_1,y)$  when $|\tilde{x}_1| \leq y$. Therefore $v^1$ is convex in $\overline{\mathcal{W}_{-1}}$. 
}
\item[(v)] 
{
Denote $\mathcal{W}_{-i}(y) = \{(\pmb{x},z):(\pmb{x},z)\in \mathcal{W}_{-i}\,\, {\rm and } \,\,z \le y\}$.  $(\pmb{X}^{-i*}_t,X_t^i,Y_t)\in \overline{\mathcal{W}_{-i}(y)}$ holds a.s. when $(\xi_t^{-i*},\xi_t^i)\in \mathcal{S}_N(\pmb{x},y)$. This is because $0 \leq Y_t \leq y$ a.s.  $\forall t \ge 0$ under $(\xi_t^{-i*},\xi_t^i)\in \mathcal{S}_N(\pmb{x},y)$.
First, we show that $v_{x_j}^i$ is bounded for $(\pmb{x},z) \in E_{i,1}^{+}\cap \overline{\mathcal{W}_{-i}(y)} $, $(\pmb{x},z) \in E_{i,1}^{-}\cap \overline{\mathcal{W}_{-i}(y)} $ and  $(\pmb{x},z)\in B(y):=\overline{\mathcal{W}_{-i}(y)} \cap \{(\pmb{x},z): |\tilde{x}^i| \le f_N^{-1}(z)\}$. For $(\pmb{x},z)\in B(y)$, $|\tilde{x}^i| \leq f^{-1}_N(z) \leq  f^{-1}_N(y)<\infty$ since $f^{-1}_N$ is non-increasing. This implies that $\tilde{x}^i$ is bounded in $B(y)$. By the definition of  $A_N(z)$ in \eqref{eq:candidateA}, $A_N(z)$ is bounded in $B(y)$. Hence  $v_{x_k}^i$ is bounded on $B(y)$  $(k=1,2,\cdots,N)$. Following Step 2 in (ii), there exists $\pmb{q}\in \partial B(y)$ such that $v_{x^k}(\pmb{q}) = v_{x^k}(\pmb{x},z)$ ($k=1,2,\cdots,N$)  for $(\pmb{x},z) \in E_{i,1}^{+}\cap \overline{\mathcal{W}_{-i}(y)}$. Similar result holds for $(\pmb{x},z) \in E_{i,1}^{-}\cap \overline{\mathcal{W}_{-i}(y)}$.
Hence $v_{x_k}^i$ is bounded on $(\pmb{x},z) \in E_{i,1}^{+}\cap \overline{\mathcal{W}_{-i}(y)}$ and $(\pmb{x},z) \in E_{i,1}^{-}\cap \overline{\mathcal{W}_{-i}(y)}$.
Second,   $v^i(\pmb{x},0) = p_N(\tilde{x}^i)$ holds since $A_N(0)=0$ (Lemma \ref{lemma:f}).  By the  definition of $v^i$ and following  Step 2 in (ii), we have $v^i_{x^k}(\pmb{x},z)=v^i_{x^k}((\pmb{x}^{-i},x^i-z),0)$ ($k=1,2,\cdots,N$) and $0<\tilde{x}^i-z <\tilde{x}^i$ for $(\pmb{x},z)\in E_{i,2}^{+}\cap \overline{\mathcal{W}_{-i}(y)}$. From Lemma \ref{lemma:pN}, $0\leq p_N^{\prime}(\tilde{x}^i-z) \leq p_N^{\prime}(\tilde{x}^i)$. Hence $|v^i_{x^k}(\pmb{x},z)|\leq |p^{\prime}_N(\tilde{x}^i)|$ for $(\pmb{x},z)\in E_{i,1}^{+}\cap \overline{\mathcal{W}_{-i}(y)}$ and the same result holds for  $(\pmb{x},z)\in E_{i,2}^{-}\cap \overline{\mathcal{W}_{-i}(y)}$.
 Combine above analysis with Lemma \ref{lemma:pN}, there exists a constant $C(y)>0$ such that
 $|v_{x^j}^i(\pmb{x},z)| \leq C(y) + |p^{\prime}_N(\tilde{x}^i)|\leq C(y) +\frac{K}{\alpha} |\tilde{x}^i|$ for $(\pmb{x},z) \in \overline{\mathcal{W}_{-i}(y)}$. Hence  by Tonelli's Theorem, $\mathbb{E}\left[\int_0^Te^{-2\alpha t}(v_{x^j}^i\left(\pmb{X}_t^{-i*},X_t^i,Y_t)\right)^2dt\right]\leq C_0 \left(C^2(y) + (x^i-\frac{\sum_{j \neq i}x_j}{N-1})^2 + y^2 + T\right)<\infty$ for some $C_0>0$ and (v) is satisfied.
}
\item[(vi)] { Recall the definition of  ${\mathcal{W}}_{-i}(y)$ in (v) and  the fact that $(\pmb{X}^{-i*}_t,X_t^i,Y_t)\in \overline{\mathcal{W}_{-i}(y)}$ when $(\xi_t^{-i*},\xi_t^i)\in \mathcal{S}_N(\pmb{x},y)$.
Following the same argument as in (v), there exists $\widetilde{C}(y)>0$ such that $|v^i(\pmb{x},z)|\leq \widetilde{C}(y)$ for $(\pmb{x},z) \in E_{i,1}^{+}\cap \overline{\mathcal{W}_{-i}(y)} $, $(\pmb{x},z) \in E_{i,1}^{-}\cap \overline{\mathcal{W}_{-i}(y)} $ and  $(\pmb{x},z)\in B(y):=\overline{\mathcal{W}_{-i}(y)} \cap \{(\pmb{x},z): |\tilde{x}^i| \le f_N^{-1}(z)\}$.
In addition,  $v^i(\pmb{x},0) = p_N(\tilde{x}^i)$ holds since $A_N(0)=0$ (Lemma \ref{lemma:f}).  By the definition of $v^i$,  $v^i(\pmb{x},z)=v^i((\pmb{x}^{-i},x^i-z),0)$ and $0<\tilde{x}^i-z <\tilde{x}^i$ for $(\pmb{x},z)\in E_{i,2}^{+}\cap\overline{\mathcal{W}_{-i}(y)}$. From Lemma \ref{lemma:pN}, $0\leq p_N(\tilde{x}^i-z) \leq p_N(\tilde{x}^i)$. Hence $v^i(\pmb{x},z)\leq p_N(\tilde{x}^i)$ for $(\pmb{x},z)\in E_{i,2}^{+}\cap \overline{\mathcal{W}_{-i}(y)}$ and the same result holds for $(\pmb{x},z)\in E_{i,2}^{-}\cap \overline{\mathcal{W}_{-i}(y)}$.
Combine above analysis with Lemma \ref{lemma:pN},
$|v(\pmb{x},{y})|\leq p_N(\tilde{x}^i) + \widetilde{C}(y) \leq p_N(0) + \frac{K}{\alpha}(\tilde{x}^i)^2+ \widetilde{C}(y)$.
Given $(\pmb{\xi}^{-i*}, \xi^i) \in \mathcal{S}_N(\pmb{x},\pmb{y})$, $\sum_{j\neq i} \check{\xi}_T^{j*} +\check{\xi}_T^{i} \leq y$ holds a.s..
 Therefore
$ \mathbb{E}\left[\left(X_T^i-\frac{\sum_{j \neq i}X_T^{j*}}{N-1}\right)^2\right] \leq  \tilde{C}_0\left( \left(x_0^i-\frac{\sum_{j \neq i}x_j}{N-1}\right)^2 + y^2 +  T\right)$ for some $ \tilde{C}_0>0$. Hence $\underset{T \rightarrow \infty}{\lim \sup} \, e^{-\alpha T} \mathbb{E}\left[p_N\left(X_T^i-\frac{\sum_{j \neq i}X_T^{j*}}{N-1}\right)\right] = 0$ and the transversality condition (vi) holds.}
\item[(vii)] {This condition is satisfied by the property of the Skorokhod problem and  the initial jump described in Section \ref{s33}.}
\end{itemize}
\end{proof}


\section{Nash Equilibrium For  Game $\pmb{C_d}$}
\label{s4}
\label{section:gamedivide}
\quad In this section, we study the NEP of the $N$-player game $\pmb{C_d}$. That is $A=\pmb{I_N}\in \mathbb{R}^{N\times N}$, and  
\begin{equation}
\label{eq:YB}
Y^i_t = y^i -  \check{\xi}^i_t \quad \mbox{with} \quad Y^i_{0-} = y^i.
\end{equation}

Recall that the major difference between  game $\pmb{C_p}$ and  game $\pmb{C_d}$ is that, in the former all $N$ players share a fixed amount of the same resource, while in the latter each player has her own individual fixed resource constraint.
This difference is reflected in $(HJB-C_p)$ and $(HJB-C_d)$ in terms of  their dimensionality, and in each player's control based on the remaining resources. In particular, $(HJB-C_p)$ and the state space $(\pmb{x}, y)$ of $\pmb{C_p}$ are of dimension $N+1$, whereas $(HJB-C_d)$ and the state space $(\pmb{x}, \pmb{y})$ of $\pmb{C_d}$ are of dimension $2N$. Moreover, in game $\pmb{C_p}$, 
 the gradient constraint is $-v^i_{y}\pm v^i_{x^i}$ for player $i$. In contrast, in  game $\pmb{C_d}$, each player controls her own resource level, the gradient constraint becomes $-v^i_{y^i}\pm v^i_{x^i}$ for player $i$. 
So if $\mathcal{A}_i \cap \mathcal{A}_j = \emptyset $, the HJB equation for $v^i(\pmb{x}, \pmb{y})$ in game $\pmb{C}_d$ is as follows.
 \begin{eqnarray*}
    \textit{(HJB-$C_d$)} \left\{
                \begin{array}{ll}
                \displaystyle \min \left\{-\alpha v^i +h \left(\frac{N-1}{N} \widetilde{x}^i \right) + \frac{1}{2} \sum_{j=1}^N v^i_{x^jx^j}, -v^i_{y^i}+v^i_{x^i}, -v^i_{y^i}-v^i_{x^i}\right\} = 0, \\ [3 pt]
          \hspace{261pt}   \mbox{for } (\pmb{x},\pmb{y}) \in \mathcal{W}_{-i},  \\ [3 pt]
 \displaystyle  {-v^i_{y^j}-v^i_{x^j} = 0, 
           \hspace{190pt}   \mbox{for } (\pmb{x},y) \in \mathcal{A}^+_{j}, j \ne i, }\\  [3 pt]
 \displaystyle  { -v^i_{y^j}+v^i_{x^j}= 0, 
           \hspace{190pt}   \mbox{for } (\pmb{x},y) \in \mathcal{A}^-_{j}, j \ne i.}                \end{array}
              \right.
\end{eqnarray*}
Note that the control policy of the $i^{th}$ player only depends on $(\pmb{x},y^i)$ in $\mathcal{W}_{-i}$. 
As seen in Section \ref{section:gamepooling}, for the controlled process of type {\bf $\pmb{C_p}$},
upon hitting the boundary of the polyhedron, the polyhedron will expand in all directions.
While for the controlled process of type {\bf $\pmb{C_d}$}, only one direction of the the polyhedron will move once hit.

\quad To proceed, similar to Section \ref{section:gamepooling}, define the action region $\mathcal{A}_i {\in \mathbb{R}^N\times\mathbb{R}^N_{+} }$ and the waiting region $\mathcal{W}_i$ of the $i^{th}$ player by
\begin{equation}
\label{eq:AWB}
\mathcal{A}^+_i := E_i^{+}  \cap Q_i,\quad \mathcal{A}^-_i := E_i^{-} \cap Q_i \quad,\mathcal{A}_i = \mathcal{A}^+_i\cup \mathcal{A}^-_i, \mbox{and} \quad  \mathcal{W}_i:= {\mathbb{R}^{N} \times \mathbb{R}^{N}_+}\setminus \mathcal{A}_i,
\end{equation}
where
\begin{equation}
\label{eq:thresholdB}
E_i^{+} : = \left\{(\pmb{x}, \pmb{y}) \in \mathbb{R}^{N} \times (\mathbb{R}_{+}^*)^N: \widetilde{x}^i \ge f_N^{-1}(y^i) \right\}, \, E_i^{-} : = \left\{(\pmb{x}, \pmb{y}) \in \mathbb{R}^{N} \times (\mathbb{R}_{+}^*)^N: \widetilde{x}^i \le - f_N^{-1}(y^i) \right\},
\end{equation}
with {
\begin{eqnarray}
E_{i,1}^{+} &:=& \left\{(\pmb{x}, \pmb{y}) \in E_{i}^{+} : y^i \ge \tilde{x}^i+x_0 \right\},\quad E_{i,2}^{+} := \left\{(\pmb{x}, \pmb{y}) \in E_{i}^{+} : y^i <  \tilde{x}^i+x_0 \right\},\label{eq:thresholdB21}\\
E_{i,1}^{-} &:=& \left\{(\pmb{x}, \pmb{y}) \in E_{i}^{-} : y^i \ge - \tilde{x}^i-x_0 \right\},\quad E_{i,2}^{-} := \left\{(\pmb{x}, \pmb{y}) \in E_{i}^{+} : y^i < - \tilde{x}^i-x_0 \right\},\label{eq:thresholdB22}
\end{eqnarray}
and  $\{Q_i\}_{i=1}^N$  convex partitions of $\mathbb{R}^{N}\times \mathbb{R}_+$ 
 { such that $Q_i \cap Q_j = (E_i^+\cup E_i^-)\cap (E_j^+\cup E_j^-)\cap \partial \mathcal{W}_{NE}$ for $i\neq j$,}
$\cup_{i=1}^N Q_i = \mathbb{R}^{N}\times \mathbb{R}_+$, and $\alpha \pmb{p} + (1-\alpha) \pmb{q} \in Q_j$ for all $\alpha \in [0,1]$ if $ \pmb{p}\in Q_j$ and $\pmb{q}\in Q_j$ for some $j=1,2,\cdots,N$.  We can define the following mapping
\begin{eqnarray}\label{eq:mapping_Pi2}
\Pi(\pmb{x},\pmb{y}) = 
\begin{cases}
\left((\pmb{x}^{-i}, x^i_{+} + \frac{\sum_{k \neq i}x^k}{N-1}),\big (\pmb{y}^{-i},f_N(x_{+}^i)\big)\right), \quad  & {\rm if } \quad (\pmb{x},\pmb{y})\in Q_i \cap E_{i,1}^{+},\\
\left((\pmb{x}^{-i},x^i-y^i),(\pmb{y}^{-i},0)\right),
  & {\rm if } \quad (\pmb{x},\pmb{y})\in Q_i \cap E_{i,2}^{+},\\
\left((\pmb{x}^{-i},\frac{\sum_{k \neq i}x^k}{N-1}+x_{-}^i),(\pmb{y}^{-i},\tilde{f}_N(x_-^i))\right), & {\rm if } \quad (\pmb{x},\pmb{y})\in Q_i \cap E_{i,1}^{-},\\
\left((\pmb{x}^{-i},x^i+y^i),(\pmb{y}^{-i},0)\right), &  {\rm if } \quad (\pmb{x},\pmb{y})\in Q_i \cap E_{i,2}^{-},
\end{cases}
\end{eqnarray}
in which the threshold function $f_N(\cdot)$ is defined in \eqref{eq:fNder}-\eqref{eq:intercept}, $x_{+}^i$ is the unique positive root such that $z-f_N(z) = \tilde{x}^i-y^i$ and $x_{-}^i$ is the unique negative root such that $z+\tilde{f}_N(z) = \tilde{x}^i+y^i$.
Note that, $\Pi(\cdot)$ translates $(\pmb{x},\pmb{y})$ to the boundary of ${E}_{i,1}^{+}$, i.e., $\partial{E}_{i,1}^{+}:=\{(\pmb{x},\pmb{y})\in\mathbb{R}^N \times\mathbb{R}^N_{+} \,\,:\,\,y^i=f_N^{-1}\left(\tilde{x}^i\right), 0<\tilde{x}^i \le x_0\}$ when $(\pmb{x},\pmb{y})\in Q_i \cap E_{i,1}^{+}$, and translates $(\pmb{x},\pmb{y})$ to the ``zero-resource'' plane $\{(\pmb{x},\pmb{y})\in\mathbb{R}^N \times\mathbb{R}^N_{+} \,\,:\,\,y^i=0\}$ when $(\pmb{x},\pmb{y})\in Q_i \cap E_{i,2}^{+}$, both along the direction $(0,\cdots,-1,0,\cdots,-1,\cdots,0)\in \mathbb{R}^{2N}$ with 
nonzero $i$-th and $(N+i)$-th components. 
} Let
\begin{align}
\label{eq:WNE}
\mathcal{W}_{NE} :&= \{(\pmb{x},\pmb{y}) \in \mathbb{R}^{N}\times \mathbb{R}_{+}^{N}: |\widetilde{x}^i| < f_N^{-1}(y^i) \mbox{ for } 1 \le i \le N\}  {\, \cup\, \{(\pmb{x},\pmb{y}) \in \mathbb{R}^{N}\times \mathbb{R}_{+}^{N}: \pmb{y}=0\}, }
\end{align}
 and assume {$\{Q_i\}_{i=1}^N$ satisfies the following assumption:
\begin{enumerate}[font=\bfseries,leftmargin=2cm]
\item[H3-${\bf C_d}$.] For any $(\pmb{x},\pmb{y}) \in \cup_i \mathcal{A}_i$, \quad $\Pi (\pmb{x},\pmb{y}) \in \overline{\mathcal{W}_{NE}}.$
\end{enumerate}
 Condition {\bf H3-${\bf C_d}$} implies that if $(\pmb{x},\pmb{y})\in \mathcal{A}_i$, then the dynamics will be in region $\overline{\mathcal{W}_{NE}}$ after player $i$'s control. 
}

We now investigate   control  of player $i$ which only depends on $(\pmb{x},y^i)$ in $\mathcal{W}_{-i}$. That is, for $|\widetilde{x}^i| < f_N^{-1}(y^i)$,
\begin{equation}
\label{eq:candidateB}
v^i(\pmb{x},\pmb{y}) =  p_N(\widetilde{x}^i) + A_N(y^i) \cosh\left(\widetilde{x}^i \sqrt{\frac{2(N-1) \alpha}{N}} \right),
\end{equation}
is a solution to $(\mbox{HJB-$C_d$})$, where $p_N(\cdot)$ is defined by \eqref{eq:pN}, and $A_N(\cdot)$ defined by \eqref{eq:AN}.

\quad The next step is to construct the controlled process $(\pmb{X}, \pmb{Y})$ corresponding to the HJB solution \eqref{eq:candidateB}.

\quad  Note that $\mathcal{W}_{NE}$ is an unbounded domain in $\mathbb{R}^{2N}$ with $2N$ boundaries. For $i=1,2,\cdots,N$, define the $2N$ faces of $\mathcal{W}_{NE}$
\begin{eqnarray*}
F_i = \{(\pmb{x},\pmb{y}) \in \partial \mathcal{W}_{NE}\,\,\,\vert\,\,\,  (\pmb{x},\pmb{y}) \in \partial E_i^{+}\},\quad
F_{i+N} = \{(\pmb{x},\pmb{y}) \in \partial \mathcal{W}_{NE}\,\,\,\vert \,\,\, (\pmb{x},\pmb{y}) \in \partial E_i^{-}\}.
\end{eqnarray*}
The normal direction on each face is given by
\begin{eqnarray*}
\pmb{n}_i &=& c_i \left(\frac{1}{N-1},\cdots,\frac{1}{N-1}-1,\frac{1}{N-1}\cdots,\frac{1}{N-1};0,\cdots,0,(f_N^{-1})^{\prime}\left(y^i\right),0,\cdots,0\right),\\
\pmb{n}_{N+i} &=& c_{N+i}\left(-\frac{1}{N-1},\cdots,-\frac{1}{N-1},1,-\frac{1}{N-1},\cdots,-\frac{1}{N-1};0,\cdots,0,(f_N^{-1})^{\prime}\left(y^i\right),0,\cdots,0\right).\\
\end{eqnarray*}
with the $i^{th}$ component to be $\pm 1$ and the $(N+i)^{th}$ component to be $(f_N^{-1})^{\prime}(y^i)$. $c_i$ and $c_{N+i}$ are normalizing constants such that $\|\pmb{n}_i\|=\|\pmb{n}_{N+i}\|=1$.

Denote the reflection direction on each face as
\begin{eqnarray*}
\pmb{r}_i = c_i^{\prime} \left(0,\cdots,0,-1,0,\cdots 0; 0,\cdots,0,-1,0,\cdots,0\right),
\pmb{r}_{N+i} =c^{\prime}_{N+i}\left(0,\cdots,0,1,0,\cdots 0; 0,\cdots,0,-1,0,\cdots,0\right),
\end{eqnarray*}
with the $i^{th}$ component to be $\pm 1$ and the $(N+i)^{th}$ component to be $\pm 1$. $c_i^{\prime}$ and $c^{\prime}_{N+i}$ are normalizing constants such that $\|\pmb{r}_i\|=\|\pmb{r}_{N+i}\|=1$.
The NE strategy is defined as follows.

{\bf Case 1:} $(\pmb{X}_{0-},\pmb{Y}_{0-}) = (\pmb{x},\pmb{y}) \in  \overline{\mathcal{W}_{NE}}$. One can check that $\mathcal{W}_{NE}$ defined in \eqref{eq:WNE} and $\{\pmb{r}_i\}_{i=1}^{2N}$ defined above satisfies assumptions \textbf{A1}-\textbf{A5}. Therefore, there exists a weak solution to the Skorokhod problem with data $\left(\mathcal{W}_{NE}, \{\pmb{r}_i\}_{i=1}^{2N},\pmb{b},\pmb{\sigma},\pmb{x}\in \overline{\mathcal{W}_{NE}}\right)$. See Appendix A for the satisfiability of \textbf{A1}-\textbf{A5}.

{\bf Case 2:} $(\pmb{X}_{0-},\pmb{Y}_{0-}) = (\pmb{x},\pmb{y}) \notin  \overline{\mathcal{W}_{NE}}$. 
There exists $i \in \{1, \cdots, N\}$ such that $(\pmb{X}_{0-}, \pmb{Y}_{0-}) \in \mathcal{A}_i$. {(1) If $(\pmb{x},\pmb{y}) \in \mathcal{A}_i^{+}\cap E_{i,1}^+$, then player $i$ will move immediately from $X_{0-}^i=x^i$ to $X_{0}^i=x^i_{+}+ \frac{\sum_{k \neq i}x^k}{N-1}$ at time $0$, where $x^i_{+}$ is the unique positive root such that $z - f_N(z) = \tilde{x}^i - y^i$. This will reduce of player $i$'s resource from $Y^i_{0-}=y^i$ to $Y^i_{0} = f_N(x^i_{+})\ge 0$. Other players' dynamics and resources remain unchanged, i.e., $X_{0}^j = X_{0-}^j=x^j$ and $Y_{0}^j = Y_{0-}^j=y^j$ for $j\neq i $ and $1 \leq j \leq N$. By Assumption \textbf{H3-$\bf C_{d}$}, we have $(\pmb{X}_0,\pmb{Y}_{0})=\left((\pmb{x}^{-i}, x^i_{+} + \frac{\sum_{k \neq i}x^k}{N-1}),\big (\pmb{y}^{-i},f_N(x_{+}^i)\big)\right) =\Pi((\pmb{X}_{0-},\pmb{Y}_{0-}))\in  \overline{\mathcal{W}_{NE}}$. (2) If $(\pmb{x},\pmb{y}) \in \mathcal{A}_i^{+}\cap E_{i,2}^+$, then player $i$ will move immediately from $X_{0-}^i=x^i$ to $X_{0}^i=x^i-y^i$ and her resource changes from $Y^i_{0-}=y^i$ to $Y^i_{0} = 0$ at time $0$. Other players' positions and resources remain unchanged, i.e., $X_{0}^j = X_{0-}^j=x^j$ and $Y_{0}^j = Y_{0-}^j=y^j$ for $j\neq i $ and $1 \leq j \leq N$. By Assumption \textbf{H3-$\bf C_{d}$}, we have $(\pmb{X}_0,\pmb{Y}_{0})=\left((\pmb{x}^{-i}, x^i-y^i,\big (\pmb{y}^{-i},0\big)\right)=\Pi((\pmb{X}_{0-},\pmb{Y}_{0-})) \in  \overline{\mathcal{W}_{NE}}$.
(3) Similarly,  if $(\pmb{x},\pmb{y}) \in \mathcal{A}_i^{-}\cap E_{i,1}^-$, then player $i$ will move immediately from $X_{0-}^i=x^i$ to $X_{0}^i=x^i_{-}+ \frac{\sum_{k \neq i}x^k}{N-1}$ at time $0$, where $x^i_{-}$ is the unique negative root such that $z+\tilde{f}_N(z) = \tilde{x}^i+y^i$. This will reduce her resource from $Y^i_{0-}=y$ to $Y^i_{0} = f_N(x^i_{-}) \ge 0$. Other players' dynamics and resources remain unchanged, i.e., $X_{0}^j = X_{0-}^j=x^j$ and $Y_{0}^j = Y_{0-}^j=y^j$  for $j \neq i$ and $1 \leq j \leq N$. By Assumption \textbf{H3-$\bf C_{d}$}, we have $(\pmb{X}_0,\pmb{Y}_{0})=\left((\pmb{x}^{-i},\frac{\sum_{k \neq i}x^k}{N-1}+x_{-}^i),(\pmb{y}^{-i},\tilde{f}_N(x_-^i))\right) =\Pi((\pmb{X}_{0-},\pmb{Y}_{0-}))\in  \overline{\mathcal{W}_{NE}}.$ (4) If $(\pmb{x},\pmb{y}) \in \mathcal{A}_i^{-}\cap E_{i,2}^-$, then player $i$ will move immediately from $X_{0-}^i=x^i$ to $X_{0}^i=x^i+y^i$ and her resource reduces from $Y^i_{0-}=y^i$ to $Y^i_{0} =0$ at time $0$. Other players' dynamics and resources remain unchanged, i.e., $X_{0}^j = X_{0-}^j=x^j$ and $Y_{0}^j = Y_{0-}^j=y^j$  for $j \neq i$ and $1 \leq j \leq N$. By Assumption \textbf{H3-$\bf C_{d}$}, we have $(\pmb{X}_0,\pmb{Y}_{0})=\left((\pmb{x}^{-i},x^i+y^i),(\pmb{y}^{-i},0)\right) =\Pi((\pmb{X}_{0-},\pmb{Y}_{0-}))\in  \overline{\mathcal{W}_{NE}}.$ 
}

\quad In summary,  the NE for the $N$-player game \eqref{eq:JA_game} with constraint    $\pmb{C_d}$ is stated as follows.
\begin{theorem}[NE for the $N$-player game  $\pmb{C_d}$]
\label{thm:NB}
Assume \textbf{H1$'$}-\textbf{H2$'$} {and \textbf{H3-$\bf C_{d}$}}. Define $u^i \in\mathbb{R}^N \times \mathbb{R}_{+} \rightarrow \mathbb{R}$ as {
\begin{equation}
\label{eq:valueN2-aux}
u^i(\pmb{x},{y}) = 
\left\{ \begin{array}{cll}
 p_N(\widetilde{x}^i) + A_N(y) \cosh\left( \widetilde{x}^i \sqrt{\frac{2(N-1) \alpha}{N} }\right) & \mbox{if } |\tilde{x}^i|\le f^{-1}_N(y), \mbox{ and } y=0, \\  [3 pt]
 u^i\left(\pmb{x}^{-i}, x^i_{+} + \frac{\sum_{k \neq i}x^k}{N-1}, f_N(x^i_{+})\right)  & \mbox{if }  \tilde{x}^i>f^{-1}_N(y) \mbox{ and } y \ge \tilde{x}^i+x_0, \\ [3pt]
  u^i\left(\pmb{x}^{-i}, x^i - y, 0\right)  & \mbox{if }  \tilde{x}^i>f^{-1}_N(y) \mbox{ and } y < \tilde{x}^i+x_0, \\ [3pt]
 u^i\left(\pmb{x}^{-i}, \frac{\sum_{k \neq i}x^k}{N-1} + x^i_{-}, \tilde{f}_N(x^i_{-})\right)  & \mbox{if }  \tilde{x}^i<-f^{-1}_N(y) \mbox{ and } y \ge- \tilde{x}^i+x_0,
 \\ [3pt]
 u^i\left(\pmb{x}^{-i}, x^i + y, 0\right)  & \mbox{if }  \tilde{x}^i<-f^{-1}_N(y) \mbox{ and } y < -\tilde{x}^i+x_0,
\end{array}\right.
\end{equation}
and define $v^i: \mathbb{R}^N \times \mathbb{R}^N_{+} \rightarrow \mathbb{R}$ as
\begin{equation}
\label{eq:valueN2}
v^i(\pmb{x},\pmb{y}) = 
\left\{ \begin{array}{cll}
u^i(\pmb{x},{y}^i) & \mbox{if } (\pmb{x}, \pmb{y}) \in \overline{\mathcal{W}_{-i}} , \\ 
 v^i\left(\pmb{x}^{-j}, x^j_{+} + \frac{\sum_{k \neq j}x^k}{N-1},\big (\pmb{y}^{-j},f_N(x_{+}^j)\big)\right) & \mbox{if } (\pmb{x}, \pmb{y}) \in \mathcal{A}_j^+ \cap {E}_{j,1}^+  \mbox{ for } j \ne i, \\ [3 pt]
 v^i\left(\pmb{x}^{-j}, x^j -y^j,\big(\pmb{y}^{-j},0\big)\right) & \mbox{if } (\pmb{x}, \pmb{y}) \in \mathcal{A}_j^+ \cap {E}_{j,2}^+  \mbox{ for } j \ne i, \\ [3 pt]
v^i\left(\pmb{x}^{-j}, \frac{\sum_{k \neq j}x^k}{N-1} + x^j_{-},\big(\pmb{y}^{-j}, \tilde{f}_N(x_{-}^j)\big) \right) & \mbox{if } (\pmb{x}, \pmb{y}) \in \mathcal{A}_j^- \cap {E}_{j,1}^- \mbox{ for } j \ne i, \\ [3 pt]
v^i\left(\pmb{x}^{-j}, x^j+y^j,\big(\pmb{y}^{-j}, 0\big) \right) & \mbox{if } (\pmb{x}, \pmb{y}) \in \mathcal{A}_j^- \cap {E}_{j,2}^- \mbox{ for } j \ne i,
\end{array}\right.
\end{equation}}
where 
\begin{itemize}[itemsep = 3 pt]
\item
$\mathcal{A}_i$ and $\mathcal{W}_i$ are given in \eqref{eq:AWB}, $E^{\pm}_{i,1}$ and $E^{\pm}_{i,2}$  are given in \eqref{eq:thresholdB21}-\eqref{eq:thresholdB22} with $f_N(\cdot)$ defined by \eqref{eq:fNder}-\eqref{eq:intercept},and $\tilde{f}_N (x)= f_N(-x)$ for $x<0$.
\item
$\widetilde{x}^i$ is defined by \eqref{eq:tildex}, and $A_N(\cdot)$ is defined by \eqref{eq:AN}.
\item
$x^i_{+}$ in \eqref{eq:valueN2-aux} is the unique positive root of $z - f_N(z) = \widetilde{x}^i - y$ when $\tilde{x}^i \ge f_N^{-1}(y)$, and $x^i_{-}$ is the unique negative root of $z + \tilde{f}_N(z) = \widetilde{x}^i + y$ when  $\tilde{x}^i <  -f_N^{-1}(y)$.
\item
$x^j_{+}$ in \eqref{eq:valueN2} is the unique positive root of $z - f_N(z) = \widetilde{x}^j - y^j$ if $\tilde{x}^j \ge  f_N^{-1}(y^j)$, and $x^j_{-}$ is the unique negative root of $z + \tilde{f}_N(z) = \widetilde{x}^i + y^j$  if $\tilde{x}^j <- f_N^{-1}(y^j)$.
\end{itemize}
Then $v^i$ is the game value associated with an NEP $\pmb{\xi}^{*} = (\xi^{1*}, \cdots, \xi^{N*})$. That is,
$
v^{i}(\pmb{x}, \pmb{y}) = J_{C_d}^i(\pmb{x},\pmb{y}; \pmb{\xi}^{*}).
$
Moreover, the controlled process $(\pmb{X}^{*}, \pmb{Y}^{*})$ under $\pmb{\xi}^{*}$ is given in this section:
{\bf Case 1} if $(\pmb{x},\pmb{y}) \in \overline{\mathcal{W}_{NE}}$, and {\bf Case 2} if $(\pmb{x},\pmb{y})\notin \overline{\mathcal{W}_{NE}}$.
\end{theorem}
The proof of Theorem \ref{thm:NB} is similar to that of Theorem \ref{thm:NA} {and hence omitted}.

\section{Nash Equilibrium for game $\pmb{C}$} \label{s6}
\label{section:general}
\quad In the previous two sections, we have dealt with two special games $\pmb{C_p}$ and $\pmb{C_d}$. Analysis of these two games provides important insight into the solution structure of the general game $\pmb{C}$. Namely, the NE strategy depends on the positions of players and their remaining resource levels.
With these two special cases in mind, now recall that in game $\pmb{C}$, 
\begin{equation}
\label{eq:YB}
d Y^j_t = -  \sum_{i=1}^N \frac{a_{ij}Y^j_{t-}}{\sum_{k=1}^M a_{ik} Y^k_{t-}} d\check{\xi}^i_t \quad \mbox{and} \quad Y^j_{0-} = y^j{\ge0}.
\end{equation}

 \quad For the HJB equation $(HJB-C)$,  the gradient constraint is more complicated than the two special cases $\pmb{C_p}$ and $\pmb{C_d}$. When $\mathcal{A}_i \cap \mathcal{A}_j = \emptyset$,
\small{
 \begin{eqnarray*}
    \textit{(HJB-$C$)} \left\{
                \begin{array}{ll}
                \displaystyle \min \Bigg\{-\alpha v^i +h + \frac{1}{2} \sum_{j=1}^N v^i_{x^jx^j}, \Gamma_iv^i +v^i_{x^i}, -\Gamma_iv^i-v^i_{x^i}\Bigg\} = 0, \\ [3 pt]
          \hspace{265pt}   \mbox{for } (\pmb{x},\pmb{y}) \in \mathcal{W}_{-i},  \\  [3 pt]
 \displaystyle  {-\Gamma_jv^i-v^i_{x^j} = 0, 
           \hspace{190pt}   \mbox{for } (\pmb{x},y) \in \mathcal{A}^+_{j}, j \ne i, }\\  [3 pt]
 \displaystyle  { -\Gamma_jv^i+v^i_{x^j}= 0, 
           \hspace{190pt}   \mbox{for } (\pmb{x},y) \in \mathcal{A}^-_{j}, j \ne i.}
                \end{array}
              \right.
\end{eqnarray*}
}
In particular, if $\pmb{A}=[1,1,\cdots,1]^T \in \mathbb{R}^{N \times 1}$, then $(HJB-C)$ becomes $(HJB-C_p)$; and if $\pmb{A}=\pmb{I_N}$, then it is $(HJB-C_d)$.

\quad Similar to Section \ref{section:gamepooling}, define the action region $\mathcal{A}_i {\in \mathbb{R}^N\times\mathbb{R}^M_{+} }$ and the waiting region $\mathcal{W}_i$ of the $i^{th}$ player by
\begin{equation}
\label{eq:AWB-general}
\mathcal{A}^+_i := E_i^{+}  \cap Q_i,\quad \mathcal{A}^-_i := E_i^{-} \cap Q_i \quad,\mathcal{A}_i = \mathcal{A}^+_i\cup \mathcal{A}^-_i,\quad \mbox{and} \quad  \mathcal{W}_i:= {\mathbb{R}^{N} \times \mathbb{R}^{M}_+}\setminus \mathcal{A}_i,
\end{equation}
where
\begin{equation}
\label{eq:thresholdB-general}
E_i^{+} : = \left\{(\pmb{x}, \pmb{y}) \in \mathbb{R}^{N} \times (\mathbb{R}_{+}^*)^M: \widetilde{x}^i \ge f_N^{-1}\left(\sum_{j=1}^M a_{ij}y^j \right) \right\}, \, E_i^{-} : = \left\{(\pmb{x}, \pmb{y}) \in \mathbb{R}^{N} \times (\mathbb{R}_{+}^*)^M: \widetilde{x}^i \le - f_N^{-1} \left(\sum_{j=1}^M a_{ij}y^j \right) \right\},
\end{equation}
{ with
\begin{eqnarray}
E_{i,1}^{+} &:=& \left\{(\pmb{x}, y) \in E_{i}^{+} : \sum_{j=1}^M a_{ij}y^j \ge \tilde{x}^i+x_0 \right\},\quad E_{i,2}^{+} := \left\{(\pmb{x}, y) \in E_{i}^{+} :\sum_{j=1}^M a_{ij}y^j <  \tilde{x}^i+x_0 \right\},\label{eq:thresholdB21-general}\\
E_{i,1}^{-} &:=& \left\{(\pmb{x}, y) \in E_{i}^{-} : \sum_{j=1}^M a_{ij}y^j  \ge - \tilde{x}^i-x_0 \right\},\quad E_{i,2}^{-} := \left\{(\pmb{x}, y) \in E_{i}^{-} : \sum_{j=1}^M a_{ij}y^j  < - \tilde{x}^i-x_0 \right\},\label{eq:thresholdB22-general}
\end{eqnarray}
and  $\{Q_i\}_{i=1}^N$ are   convex partitions  { such that $Q_i \cap Q_j = (E_i^+\cup E_i^-)\cap (E_j^+\cup E_j^-)\cap \partial \mathcal{W}_{NE}$ for $i\neq j$.}
We then define 
\begin{eqnarray}\label{eq:mapping_Pi3}
\Pi(\pmb{x},\pmb{y}) = 
\begin{cases}
\left((\pmb{x}^{-i}, x^i_{+} + \frac{\sum_{k \neq i}x^k}{N-1}), \pmb{y}^1_{+}\right) , \quad  & {\rm if } \quad (\pmb{x},\pmb{y})\in Q_i \cap E_{i,1}^{+},\\
\left((\pmb{x}^{-i}, x^i -\sum_{q=1}^M a_{iq}y^q , \pmb{y}^2_{+}\right),
  & {\rm if } \quad (\pmb{x},\pmb{y})\in Q_i \cap E_{i,2}^{+},\\
\left((\pmb{x}^{-i},\frac{\sum_{k \neq i}x^k}{N-1}+x_{-}^i),\pmb{y}^1_{-}\right), & {\rm if } \quad (\pmb{x},\pmb{y})\in Q_i \cap E_{i,1}^{-},\\
\left((\pmb{x}^{-i},x^i+\sum_{q=1}^M a_{iq}y^q),\pmb{y}^2_{-}\right), &{\rm if } \quad (\pmb{x},\pmb{y}) \in Q_i \cap E_{i,2}^{-},
\end{cases}
\end{eqnarray}
in which the threshold function $f_N(\cdot)$ is defined in \eqref{eq:fNder}-\eqref{eq:intercept}, $x_{+}^i$ is the unique positive root such that $z-f_N(z) = \tilde{x}^i-y^i$ when $\tilde{x}^i \ge f_N^{-1}(y^i)$, $x_{-}^i$ is the unique negative root such that $z+\tilde{f}_N(z) = \tilde{x}^i+y^i$ when $\tilde{x}^i \le -f_N^{-1}(y^i)$. Here $ \pmb{y}^1_{+}\in \mathbb{R}^M_{+}$ with  the $j$-th component being $(\pmb{y}^1_{+})_j = y^j - \frac{a_{ij}y^j}{\sum_{q=1}^M a_{iq}y^q} \left(\sum_{q=1}^M a_{iq}y^q-f_N(x_{+}^i) \right)$, $ \pmb{y}^2_{+}\in \mathbb{R}^M_{+}$ with  the $j$-th component being $(\pmb{y}^2_{+})_j = y^j - a_{ij}y^j$,   $ \pmb{y}^1_{-}\in \mathbb{R}^M_{+}$ with the $k$-th component being $(\pmb{y}^1_{-})_j = y^j - \frac{a_{ij}y^j}{\sum_{q=1}^M a_{iq}y^q} \left(\sum_{q=1}^M a_{iq}y^q-\tilde{f}_N(x_{-}^i) \right)$, $ \pmb{y}^2_{-}\in \mathbb{R}^M_{+}$ with the $j$-th component being $(\pmb{y}^2_{-})_j = y^j - {a_{ij}y^j}$. 
Note that, $\Pi (\cdot)$ translates  $(\pmb{x},\pmb{y})$ to the boundary of ${E}_{i,1}^{+}$, i.e., $\partial {E}_{i,1}^{+} :=\{(\pmb{x},\pmb{y})\,\,:\,\,\sum_{j=1}^M a_{ij}y^j=f_N^{-1}\left(\tilde{x}^i\right), 0 < \tilde{x}_i \leq x_0 \}$  when $(\pmb{x},\pmb{y})\in Q_i \cap E_{i,1}^{+}$, and to $\{ (\pmb{x},\pmb{y})\,\,:\,\,a_{ij}y^j=0, \forall j=1,2,\cdots,M\}$ when $(\pmb{x},\pmb{y})\in Q_i \cap E_{i,2}^{+}$, both along the direction $$\left(0,\cdots,-1,\cdots 0;-\frac{a_{i1}y^1}{\sum_{j=1}^M a_{ij}y^j},\cdots,-\frac{a_{iM}y^M}{\sum_{j=1}^M a_{ij}y^j}\right)\in \mathbb{R}^{N}\times \mathbb{R}^{M}_{+}$$ with the  $i$-th component being $-1$. Denote
\begin{align}
\label{eq:WNE3}
\mathcal{W}_{NE} := \left\{(\pmb{x},\pmb{y}) \in \mathbb{R}^{N}\times \mathbb{R}_{+}^{M}: |\widetilde{x}^i| < f_N^{-1}\left(\sum_{j=1}^M a_{ij}y^j\right), \, 1 \le i \le N\right\}  { \cup \{(\pmb{x},\pmb{y}) \in \mathbb{R}^{N}\times \mathbb{R}_{+}^{M}: \pmb{y}=0\}, }
\end{align}
and assume the partition $\{Q_i\}_{i=1}^N$ satisfies following assumption:
\begin{enumerate}[font=\bfseries,leftmargin=2cm]
\item[H3-${\bf C}$.] For any $(\pmb{x},\pmb{y}) \in \cup_i \mathcal{A}_i$, \quad $\Pi(\pmb{x},\pmb{y}) \in \overline{\mathcal{W}_{NE}}.$
\end{enumerate}
 Condition {\bf H3-${\bf C}$} implies that if $(\pmb{x},\pmb{y})\in \mathcal{A}_i$, then the dynamics will be in region $\overline{\mathcal{W}_{NE}}$ after player $i$'s control. 
}

\quad From the analysis in Sections \ref{section:gamepooling} and \ref{section:gamedivide}, and the ``guess'' that the control policy of player $i$ only depends on $(\pmb{x},\sum_{j=1}^M a_{ij}y^j)$ when in $\mathcal{W}_{-i}$, we get for $|\widetilde{x}^i| < f_N^{-1}(\sum_{j=1}^M a_{ij}y^j)$,
\begin{equation}
\label{eq:candidateB-general}
v^i(\pmb{x},\pmb{y}) =  p_N(\widetilde{x}^i) + A_N \left(\sum_{j=1}^M a_{ij}y^j\right) \cosh\left(\widetilde{x}^i \sqrt{\frac{2(N-1) \alpha}{N}} \right),
\end{equation}
is a solution to $(\mbox{HJB-$C$})$, where $p_N(\cdot)$ is defined by \eqref{eq:pN}, and $A_N(\cdot)$ defined by \eqref{eq:AN}.

\quad The next step is to construct the controlled process $(\pmb{X}, \pmb{Y})$ corresponding to the HJB solution \eqref{eq:candidateB-general}.


\quad Note that $\mathcal{W}_{NE}$ is an unbounded domain in $\mathbb{R}^{2N}$ with $2N$ boundaries. For $i=1,2,\cdots,N$, define the $2N$ faces of $\mathcal{W}_{NE}$
\begin{eqnarray*}
F_i = \{(\pmb{x},\pmb{y}) \in \partial \mathcal{W}_{NE}\,\,\,\vert\,\,\,  (\pmb{x},\pmb{y}) \in \partial E_i^{+}\},\quad
F_{i+N} = \{(\pmb{x},\pmb{y}) \in \partial \mathcal{W}_{NE}\,\,\,\vert \,\,\, (\pmb{x},\pmb{y}) \in \partial E_i^{-}\}.
\end{eqnarray*}
The normal direction on each face is given by
\begin{eqnarray*}
\pmb{n}_i &=& c_i \left(\frac{1}{N-1},\cdots,-1,\cdots,\frac{1}{N-1};(f_N^{-1})^{\prime}\left(\sum_{j=1}^M a_{ij}y^j\right)a_{i1},\cdots,(f_N^{-1})^{\prime}\left(\sum_{j=1}^M a_{ij}y^j\right)a_{iM}\right),\\
\pmb{n}_{N+i} &=& c_{N+i}\left(-\frac{1}{N-1},\cdots,1,\cdots,-\frac{1}{N-1};(f_N^{-1})^{\prime}\left(\sum_{j=1}^M a_{ij}y^j\right)a_{i1},\cdots,(f_N^{-1})^{\prime}\left(\sum_{j=1}^M a_{ij}y^j\right)a_{iM}\right),\\
\end{eqnarray*}
with the $i^{th}$ component being $\pm 1$, and $c_i$ and $c_{N+i}$ the normalizing constants such that $\|\pmb{n}_i\|=\|\pmb{n}_{N+i}\|=1$.

Denote the reflection direction on each face as
\begin{eqnarray*}
\pmb{r}_i &=& c_i^{\prime} \left(0,\cdots,-1,\cdots 0;-\frac{a_{i1}y^1}{\sum_{j=1}^M a_{ij}y^j},\cdots,-\frac{a_{iM}y^M}{\sum_{j=1}^M a_{ij}y^j}\right),\\
\pmb{r}_{N+i} &=& c^{\prime}_{N+i}\left(0,\cdots,1,\cdots 0;-\frac{a_{i1}y^1}{\sum_{j=1}^M a_{ij}y^j},\cdots,-\frac{a_{iM}y^M}{\sum_{j=1}^M a_{ij}y^j}\right),\\
\end{eqnarray*}
with the $i^{th}$ component to be $\pm 1$. $c_i^{\prime}$ and $c^{\prime}_{N+i}$ are normalizing constants such that $\|\pmb{r}_i\|=\|\pmb{r}_{N+i}\|=1$.
The NE strategy is defined as follows.

{\bf Case 1:} $(\pmb{X}_{0-},\pmb{Y}_{0-}) = (\pmb{x},\pmb{y}) \in  \overline{\mathcal{W}_{NE}}$. One can check that $\mathcal{W}_{NE}$ defined in \eqref{eq:WNE3} and $\{\pmb{r}_i\}_{i=1}^{2N}$ defined above satisfies assumptions \textbf{A1}-\textbf{A5}. Therefore, there exists a weak solution to the Skorokhod problem with data $\left(\mathcal{W}_{NE}, \{\pmb{r}_i\}_{i=1}^{2N},\pmb{b},\pmb{\sigma},\pmb{x}\in \overline{\mathcal{W}_{NE}}\right)$. See Appendix A for the satisfiability of \textbf{A1}-\textbf{A5}.

{\bf Case 2:} $(\pmb{X}_{0-},\pmb{Y}_{0-}) = (\pmb{x},\pmb{y}) \notin  \overline{\mathcal{W}_{NE}}$. There exists $i \in \{1, \cdots, N\}$ such that $(\pmb{X}_{0-}, \pmb{Y}_{0-}) \in \mathcal{A}_i$. 
 {(1) If $(\pmb{x},\pmb{y}) \in \mathcal{A}^i_{+}\cap E_{i,1}^+$, then player $i$ will move immediately from $X_{0-}^i=x^i$ to $X_{0}^i=x^i_{+}+ \frac{\sum_{k \neq i}x^k}{N-1}$ at time $0$, where $x^i_{+}$ is the unique positive root such that $z - f_N(z) = \tilde{x}^i - (\sum_{q=1}^M a_{iq}y^q)$. This will reduce the resources from $\pmb{Y}_{0-}=\pmb{y}$ to $\pmb{Y}_{0}=\pmb{y}_{+}$ with  the $j$-th component of $\pmb{y}_{+}$ is $(\pmb{y}_{+})_j = y^j - \frac{a_{ij}y^j}{\sum_{q=1}^M a_{iq}y^q} \left(\sum_{q=1}^M a_{iq}y^q-f_N(x_{+}^i) \right) \ge 0$. Other players' dynamics remain unchanged, i.e., $X_{0}^k = X_{0-}^k=x^k$  for $k\neq i $ and $1 \leq k \leq N$. By Assumption \textbf{H3-$\bf C$}, we have $(\pmb{X}_0,\pmb{Y}_{0})=\left((\pmb{x}^{-i}, x^i_{+} + \frac{\sum_{k \neq i}x^k}{N-1}),\pmb{y}_{+}\right)=\Pi(\pmb{X}_{0-},\pmb{Y}_{0-})  \in  \overline{\mathcal{W}_{NE}}$. (2) If $(\pmb{x},\pmb{y}) \in \mathcal{A}_i^{+}\cap E_{i,2}^+$, then player $i$ will move immediately from $X_{0-}^i=x^i$ to $X_{0}^i=x^i-\sum_{q=1}^M a_{iq}y^q$  and resource $j$ is changed from $Y^j_{0-}=y^j$ to $Y^j_{0} = y^j - a_{ij}y^j$ at time $0$. Other players' dynamics  remain unchanged, i.e., $X_{0}^k = X_{0-}^k=x^k$ for $k\neq i $ and $1 \leq k \leq N$. Under Assumption \textbf{H3-$\bf C$}, we have $(\pmb{X}_0,\pmb{Y}_{0})=\Pi(\pmb{X}_{0-},\pmb{Y}_{0-})  \in  \overline{\mathcal{W}_{NE}}$. (3) Similarly,  if $(\pmb{x},\pmb{y}) \in \mathcal{A}_i^{-}\cap E_{i,1}^-$, then player $i$ will move immediately from $X_{0-}^i=x^i$ to $X_{0}^i=x^i_{-}+ \frac{\sum_{k \neq i}x^k}{N-1}$ at time $0$, where $x^i_{-}$ is the unique negative root such that $z+\tilde{f}_N(z) = \tilde{x}^i+ (\sum_{q=1}^M a_{iq}y^q)$. This changes the resources from $\pmb{Y}_{0-}=\pmb{y}$ to $\pmb{Y}_{0} =\pmb{y}_{-}$ where $j$-th component of $\pmb{y}_{-}$ is $(\pmb{y}_{-})_j = y^j - \frac{a_{ij}y^j}{\sum_{q=1}^M a_{iq}y^q} \left(\sum_{q=1}^M a_{iq}y^q-\tilde{f}_N(x_{-}^i) \right)\ge 0$. Other players' dynamics remain unchanged at time $0$, i.e., $X_{0}^k = X_{0-}^k=x^k$   for $k \neq i$ and $1 \leq k \leq N$. By Assumption \textbf{H3-$\bf C$}, we have $(\pmb{X}_0,\pmb{Y}_{0})=\left((\pmb{x}^{-i},\frac{\sum_{k \neq i}x^k}{N-1}+x_{-}^i),\pmb{y}_{-}\right) =\Pi(\pmb{X}_{0-},\pmb{Y}_{0-}) \in  \overline{\mathcal{W}_{NE}}.$
(4) If $(\pmb{x},\pmb{y}) \in \mathcal{A}_i^{+}\cap E_{i,2}^+$, then player $i$ will move immediately from $X_{0-}^i=x^i$ to $X_{0}^i=x^i+\sum_{q=1}^M a_{iq}y^q$ and resource $j$ is reduced from $Y^j_{0-}=y^j$ to $Y^j_{0} = y^j - a_{ij}y^j$ at time $0$. Other players' dynamics  remain unchanged at time $0$, i.e., $X_{0}^k = X_{0-}^k=x^k$ for $k\neq i $ and $1 \leq k \leq N$. By Assumption \textbf{H3-$\bf C$}, we have $(\pmb{X}_0,\pmb{Y}_{0}) =\Pi(\pmb{X}_{0-},\pmb{Y}_{0-}) \in  \overline{\mathcal{W}_{NE}}$. 
}

\quad The NE for the $N$-player game \eqref{eq:JA_game} with constraint $\pmb{C}$ is stated as follows. 
\begin{theorem}[NE for the $N$-player game  $\pmb{C}$]
\label{thm:ND} 
Assume \textbf{H1$'$}-\textbf{H2$'$} and {\textbf{H3-${\bf C}$}}. Define  $u^i \in\mathbb{R}^N \times \mathbb{R}_{+} \rightarrow \mathbb{R}$ as {
\begin{equation}
\label{eq:valueN3-aux}
u^i(\pmb{x},{y}) = 
\left\{ \begin{array}{cll}
 p_N(\widetilde{x}^i) + A_N(y) \cosh\left( \widetilde{x}^i \sqrt{\frac{2(N-1) \alpha}{N} }\right) & \mbox{if } |\tilde{x}^i|\le f^{-1}_N(y), \mbox{ and } y=0, \\  [3 pt]
 u^i\left(\pmb{x}^{-i}, x^i_{+} + \frac{\sum_{k \neq i}x^k}{N-1}, f_N(x^i_{+})\right)  & \mbox{if }  \tilde{x}^i>f^{-1}_N(y) \mbox{ and } y \ge \tilde{x}^i+x_0, \\ [3pt]
  u^i\left(\pmb{x}^{-i}, x^i - y, 0\right)  & \mbox{if }  \tilde{x}^i>f^{-1}_N(y) \mbox{ and } y < \tilde{x}^i+x_0, \\ [3pt]
 u^i\left(\pmb{x}^{-i}, \frac{\sum_{k \neq i}x^k}{N-1} + x^i_{-}, \tilde{f}_N(x^i_{-})\right)  & \mbox{if }  \tilde{x}^i<-f^{-1}_N(y) \mbox{ and } y \ge- \tilde{x}^i-x_0,
 \\ [3pt]
 u^i\left(\pmb{x}^{-i}, x^i + y, 0\right)  & \mbox{if }  \tilde{x}^i<-f^{-1}_N(y) \mbox{ and } y < -\tilde{x}^i-x_0,
\end{array}\right.
\end{equation} 
and define $v^i: \mathbb{R}^N \times \mathbb{R}^M_{+} \rightarrow \mathbb{R}$ as
\begin{equation}
\label{eq:valueND}
v^i(\pmb{x},\pmb{y}) = 
\left\{ \begin{array}{cll}
u^i\left(\pmb{x},\sum_{j=1}^M a_{ij}y^j\right) & \mbox{if } (\pmb{x}, \pmb{y}) \in \overline{\mathcal{W}_{-i}} , \\ 
 v^i\left(\pmb{x}^{-j}, x^j_{+} + \frac{\sum_{k \neq j}x^k}{N-1},\pmb{y}^1_{+} \right) & \mbox{if } (\pmb{x}, \pmb{y}) \in \mathcal{A}_j^+\cap E_{j,1}^+\mbox{ for } j \ne i, \\ [3 pt]
  v^i\left(\pmb{x}^{-j}, x^j-(\sum_{q=1}^M a_{jq}y^q),\pmb{y}^2_{+}\right) & \mbox{if } (\pmb{x}, \pmb{y}) \in \mathcal{A}_j^+\cap E_{j,2}^+\mbox{ for } j \ne i, \\ [3 pt]
v^i\left(\pmb{x}^{-j}, \frac{\sum_{k \neq j}x^k}{N-1} + x^j_{-}, \pmb{y}^1_{-} \right) & \mbox{if } (\pmb{x}, \pmb{y}) \in \mathcal{A}_j^- \cap E_{j,1}^-\mbox{ for } j \ne i,\\ [3 pt]
v^i\left((\pmb{x}^{-j}, x^j+(\sum_{q=1}^M a_{jq}y^q), \pmb{y}^2_{-} \right) & \mbox{if } (\pmb{x}, \pmb{y}) \in \mathcal{A}_j^- \cap E_{j,2}^-\mbox{ for } j \ne i,
\end{array}\right.
\end{equation}}
where 
\begin{itemize}[itemsep = 3 pt]
\item
$\mathcal{A}_i$ and $\mathcal{W}_i$ are given in \eqref{eq:AWB-general},  $E^{\pm}_{i,1}$ and  $E^{\pm}_{i,2}$ are given in \eqref{eq:thresholdB21-general}-\eqref{eq:thresholdB22-general} with $f_N(\cdot)$ defined by \eqref{eq:fNder}-\eqref{eq:intercept},   {and $\tilde{f}_N (x)= f_N(-x)$ for $x<0$.}
\item
$\widetilde{x}^i$ is defined by \eqref{eq:tildex}, and $A_N(\cdot)$ defined by \eqref{eq:AN}.
\item
$x^i_{+}$ in \eqref{eq:valueN3-aux} is the unique positive root of $z - f_N(z) = \widetilde{x}^i - y$ if $ \tilde{x}^i \ge f^{-1}_N(y) $, and $x^i_{-}$ is the unique negative root of $z + \tilde{f}_N(z) = \widetilde{x}^i + y$ if $ \tilde{x}^i<-f^{-1}_N(y) $.
\item 
$x^j_{+}$ in \eqref{eq:valueND} is the unique positive root of $z - f_N(z) = \widetilde{x}^j - \sum_{k=1}^M a_{jk}y^k$ if $ \tilde{x}^j \ge f^{-1}_N(\sum_{q=1}^M a_{jq}y^q) $, and $x^j_{-}$ is the unique negative root of $z + \tilde{f}_N(z) = \widetilde{x}^j + \sum_{k=1}^M a_{jk}y^k$ if $ \tilde{x}^j<-f^{-1}_N(\sum_{q=1}^M a_{jq}y^q) $.
\item { The $k$-th component of $\pmb{y}^1_{+}$ in \eqref{eq:valueND} is
$(\pmb{y}_{+}^1)_k = y^k - \frac{a_{jk}y^k}{\sum_{q=1}^M a_{jq}y^q} \left(\sum_{q=1}^M a_{jq}y^q-f_N(x_{+}^j) \right),$
and  the $k$-th component of $\pmb{y}^1_{-}$ is
$(\pmb{y}^1_{-})_k = y^k - \frac{a_{jk}y^k}{\sum_{q=1}^M a_{jq}y^q} \left(\sum_{q=1}^M a_{jq}y^q-\tilde{f}_N(x_{-}^j) \right).$
\item The $k$-th component of $\pmb{y}^2_{+}$ in \eqref{eq:valueND} is
$(\pmb{y}_{+}^2)_k = y^k - a_{jk}y^k,$
and  the $k$-th component of $\pmb{y}^2_{-}$ is $(\pmb{y}^2_{-})_k = y^k - a_{jk}y^k.$}
\end{itemize}
Then $v^i$ is the value associated with a NEP $\pmb{\xi}^{*} = (\xi^{1*}, \cdots, \xi^{N*})$. That is,
$
v^i(\pmb{x}, \pmb{y}) = J_{C}^i(\pmb{x},\pmb{y}; \pmb{\xi}^{*}).
$
Moreover, the controlled process $(\pmb{X}^{*},\pmb{Y}^{*})$ under $\pmb{\xi}^{*}$ is a solution to a Skorokhod problem as described in {\bf Case 1} if $(\pmb{x},\pmb{y}) \in \overline{\mathcal{W}_{NE}}$, and described as {\bf Case 2} if $(\pmb{x},\pmb{y}) \notin \overline{\mathcal{W}_{NE}}$.
\end{theorem}
 The proof of Theorem \ref{thm:ND} is similar to that of Theorem \ref{thm:NA} and hence omitted. {To demonstrate the similarity, we provide the proof for the convexity of the value function $v^i$ in $\overline{\mathcal{W}_{-i}}$ here.
 \begin{proof}
 We take player one as an example to show $v^1(\pmb{x},\pmb{y})=\tilde{v}(\tilde{x}_1,\sum_{j=1}^M  a_{1j}y^j)$ is convex in $\overline{\mathcal{W}_{-1}}$. Other players' value functions follow similarly.
Recall $\tilde{x}_i =x_i -\frac{\sum_{k\neq i}^N x_k}{N-1}$. Similarly we define $\tilde{y}_i = \sum_{k=1}^M a_{ij}y^j$. When $(\pmb{x},\pmb{y}) \in \overline{\mathcal{W}}_{-1}$, we have $|\tilde{x}_1| \leq \tilde{y}_1$ or $\tilde{y}_1=0$. Hence $\widetilde{v}\left(\tilde{x}_1,\tilde{y}_1\right)$ is positive semi-definite. By  the chain rule,  for $2 \leq k \neq j \leq N$ and $1 \leq q \neq p \leq M$,
\begin{eqnarray*}
&v^1_{x_1 x_1} (\pmb{x},\pmb{y}) = \widetilde{v}_{x x} (\tilde{x}_1,\tilde{y}_1), \quad v_{x_1 x_k}  (\pmb{x},\pmb{y}) = - \frac{1}{N-1}  \widetilde{v}_{x x} (\tilde{x}_1,\tilde{y}_1), \quad v_{x_1 y_q}  (\pmb{x},\pmb{y})  =  a_{1q}\, \widetilde{v}_{x y} (\tilde{x}_1,\tilde{y}_1),\\
&v_{y_p y_q} (\pmb{x},\pmb{y}) =  a_{1p}q_{1q} \widetilde{v}_{y y} (\tilde{x}_1,\tilde{y}_1), v^1_{x_k x_j}  (\pmb{x},\pmb{y}) = \frac{1}{(N-1)^2} \widetilde{v}_{x x} (\tilde{x}_1,\tilde{y}_1), \quad v_{x_k y_q} (\pmb{x},\pmb{y})  = -\frac{1}{N-1}a_{1q}  \widetilde{v}_{x y} (\tilde{x}_1,\tilde{y}_1)\\
\end{eqnarray*}

Denote $H(\pmb{x},\pmb{y}) := \nabla^2 v^1(\pmb{x},\pmb{y}) \in \mathbb{R}^{(N+M) \times (N+M)}$ as the Hessian matrix of $v^1$ at some point $(\pmb{x},\pmb{y}) \in \overline{\mathcal{W}_{-1}}$. Then for any $\pmb{d} = (b_1,\cdots,b_{N},c_1,\cdots,c_M)\in  \mathbb{R}^{N+M}$, 
\begin{eqnarray*}
\pmb{d}^TH(\pmb{x},y) \pmb{d} &=&b_1^2  \widetilde{v}_{xx} -\frac{2}{N-1} \sum_{k=2}^N b_1 b_k  \widetilde{v}_{xx} +\frac{1}{(N-1)^2}\sum_{k=2}^N b_k^2 +\frac{2}{(N-1)^2} \sum_{2 \leq j \neq k \leq N} b_jb_k  \widetilde{v}_{xx} \\
&& \,\, + 2\sum_{q=1}^M b_1\, c_q a_{1q} \widetilde{v}_{x y} - \frac{2}{N-1}\sum_{k=2}^N\sum_{q=1}^M b_k c_q a_{1q}\, \widetilde{v}_{x y} + (\sum_{q=1}^M a_{1q}c_q)^2\widetilde{v}_{y y} \\
&=& \left(b_1-\frac{1}{N-1}\sum_{k=2}^Nb_k\right)^2 \widetilde{v}_{x x} +2 \left(b_1-\frac{1}{N-1}\sum_{k=2}^Nb_k\right)\left( \sum_{q=1}^M a_{1q}c_q\right) \widetilde{v}_{x y} + \left( \sum_{q=1}^M a_{1q}c_q\right)^2 \widetilde{v}_{y y} \\
&=& \pmb{e}^T \tilde{H}(\tilde{x}_1,y)\pmb{e} \ge 0,
\end{eqnarray*}
in which $\pmb{e} = \left(b_1-\frac{1}{N-1}\sum_{k=2}^Nb_k,\sum_{q=1}^M a_{1q}c_q\right)$ and $ \tilde{H}(\tilde{x}_1,y) = \nabla^2 \tilde{v}(\tilde{x}_1,y)$. The last inequality holds since $ \tilde{v}(\tilde{x}_1,y)$ is convex when $|\tilde{x}_1| \leq y$ which is a result in the proof of Theorem 4.3 (Step 1 of (iv)). Therefore $v^1$ is convex in $\overline{\mathcal{W}_{-1}}$. 
 \end{proof}

 }

\section{Comparing Games  {\bf $C_p$},  {\bf $C_d$} and {\bf $C$}}
\label{s7}
\label{section:relation}
\quad  In this section, we compare the games $\pmb{C_p}$, $\pmb{C_d}$ and  $\pmb{C}$. 
We will first compare their game values and discuss their economic implications.  
We will then discuss their difference in terms of the NEP.
Finally, we discuss their perspective NEs in the framework of controlled rank-dependent SDEs. 

\quad To make the games comparable, let us assume 
$y=\sum_{j=1}^N y^j$. Let us also consider a special sharing game $\pmb{C_s}$ which  can be connected with both $\pmb{C_d}$ and $\pmb{C_p}$:
\begin{itemize}[font=\bfseries]
\item[$\pmb{C_s}$:] $M=N$ and $a_{ii}=1$ for $i=1,2,\cdots,N$.
\end{itemize}

\subsection{Pooling, Dividing, and Sharing}
 Denote the game value and waiting region for each player $i$ as $v^i_{C_p}$ and  $\mathcal{W}^{C_p}_{i}$ respectively for game {$\pmb{C_p}$}. Similar notations are defined for $\pmb{C_d}$ and $\pmb{C_s}$.
 \begin{proposition}[Game values comparison] 
Assume \textbf{H1$'$}-\textbf{H2$'$}. 
 For each $ (\pmb{x},\pmb{y})\in \mathbb{R}^N \times \mathbb{R}^N_{+}$, denote $y=\sum_{i=1}^N y^i$. If $(\pmb{x},y)\in \mathcal{W}^{C_p}_i$, and $(\pmb{x},\pmb{y})\in \mathcal{W}^{C_d}_i \cap \mathcal{W}^{C_s}_i$, then,
 \[
 v^i_{C_p}(\pmb{x},y)\leq v^i_{C_s}(\pmb{x},y) \leq v^i_{C_d}(\pmb{x},\pmb{y}), \,\,i=1,2,\cdots,N.
 \]

 \end{proposition}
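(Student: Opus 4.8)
The plan is to reduce the three-way inequality to a single monotonicity property of the amplitude function $A_N$, and then to establish that property by a direct computation resting on the ODE characterizing $p_N$. Throughout write $\omega := \sqrt{2(N-1)\alpha/N}$.

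First I would note that, on the stated regions, all three value functions share the waiting-region form from Theorems \ref{thm:NA}, \ref{thm:NB} and \ref{thm:ND}:
$$v^i(\pmb{x},\,r) = p_N(\widetilde{x}^i) + A_N(r)\cosh(\omega\,\widetilde{x}^i),$$
where $r$ is the resource available to player $i$, equal to $y$ for $\pmb{C_p}$, to $\sum_{j=1}^N a_{ij}y^j$ for $\pmb{C_s}$, and to $y^i$ for $\pmb{C_d}$. Since $\pmb{x}$, and hence $\widetilde{x}^i$, $p_N(\widetilde{x}^i)$ and the strictly positive factor $\cosh(\omega\,\widetilde{x}^i)$, is common to the three games, and since $y=\sum_j y^j$ together with $a_{ij}\in\{0,1\}$ and $a_{ii}=1$ yields the resource ordering $y^i \le \sum_j a_{ij}y^j \le y$, the proposition reduces to the single claim that $A_N$ is non-increasing. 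Indeed, that monotonicity gives $A_N(y) \le A_N(\sum_j a_{ij}y^j) \le A_N(y^i)$, and multiplying by $\cosh(\omega\,\widetilde{x}^i) > 0$ and adding $p_N(\widetilde{x}^i)$ produces exactly $v^i_{C_p} \le v^i_{C_s} \le v^i_{C_d}$.

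Second, to prove $A_N$ non-increasing I would differentiate $A_N(y) = g(f_N^{-1}(y))$ with $g(x) = \omega^{-1}p_N'(x)\sinh(\omega x) - \omega^{-2}p_N''(x)\cosh(\omega x)$ (this is \eqref{eq:AN}). A one-line computation gives $g'(x) = \cosh(\omega x)\bigl[p_N'(x) - \omega^{-2}p_N'''(x)\bigr]$, while \eqref{eq:fNder} reads $f_N'(x) = \bigl[p_N' - \omega^{-2}p_N'''\bigr]\big/\bigl[\omega^{-1}p_N''\tanh(\omega x) - p_N'\bigr]$. Using $A_N'(y) = g'(x)/f_N'(x)$ at $x=f_N^{-1}(y)$, the common factor $p_N' - \omega^{-2}p_N'''$ cancels and leaves the clean expression
$$A_N'(y) = \frac{1}{\omega}\,p_N''(x)\sinh(\omega x) - p_N'(x)\cosh(\omega x), \qquad x = f_N^{-1}(y) \ge 0.$$

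Finally I would show this is negative for $x>0$. Put $\phi(x) := p_N'(x)\cosh(\omega x) - \omega^{-1}p_N''(x)\sinh(\omega x) = -A_N'(y)$; since $p_N$ is even we have $p_N'(0)=0$, so $\phi(0)=0$, and differentiating (the $p_N''\cosh$ terms cancel) gives $\phi'(x) = \omega^{-1}\sinh(\omega x)\bigl[\omega^2 p_N'(x) - p_N'''(x)\bigr]$. The waiting-region ODE $\tfrac12\tfrac{N}{N-1}p_N'' = \alpha p_N - h(\tfrac{N-1}{N}x)$ differentiates to $p_N''' = \omega^2 p_N' - \tfrac{2(N-1)^2}{N^2}h'(\tfrac{N-1}{N}x)$, so $\omega^2 p_N' - p_N''' = \tfrac{2(N-1)^2}{N^2}h'(\tfrac{N-1}{N}x) > 0$ for $x>0$, using that under \textbf{H2$'$} the even convex function $h$ satisfies $h'>0$ on $(0,\infty)$. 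Hence $\phi'>0$ and $\phi(0)=0$ give $\phi>0$, i.e.\ $A_N'<0$, on $(0,\infty)$, which is the required monotonicity.

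The step I expect to be the main obstacle is precisely this monotonicity lemma for $A_N$: a soft ``more fuel cannot hurt'' argument is unreliable here because $v^i_{C_p}, v^i_{C_s}, v^i_{C_d}$ are equilibrium values of three genuinely different games rather than optima over nested admissible sets, so I would lean on the explicit cancellation above rather than a coupling argument. The remaining points are routine: checking that the reduction holds exactly on the stated regions (where each $v^i$ is given by its waiting-region formula), and noting that $\omega^2 p_N' - p_N''' > 0$ also guarantees the denominator factor $p_N' - \omega^{-2}p_N'''$ is nonzero on $(0,\infty)$, so the division defining $A_N'$ is legitimate.
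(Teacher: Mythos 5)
Your proposal is correct and follows essentially the same route as the paper: both compare the explicit waiting-region expressions $p_N(\widetilde{x}^i)+A_N(r)\cosh(\omega\widetilde{x}^i)$ for the three games, reduce the claim to the monotonicity $A_N'<0$ together with the resource ordering $y^i\le\sum_j a_{ij}y^j\le y$ coming from $a_{ij}\in\{0,1\}$ and $a_{ii}=1$. The only difference is that the paper dismisses $A_N'<0$ as ``elementary calculations,'' whereas you actually carry out the computation (recovering the smooth-fit identity $A_N'=\omega^{-1}p_N''\sinh(\omega x)-p_N'\cosh(\omega x)$ and closing via the ODE for $p_N$ and $h'>0$), which is a correct and welcome completion of that step.
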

 
 \begin{proof}

 The comparison is by direct computation. Indeed,
 recall that in case $\pmb{C_p}$, when $(\pmb{x},y)\in \mathcal{W}^{C_p}_i$,
$
 v_{{C_p}}^i(\pmb{x},y)=p_N(\widetilde{x}^i) + A_N(y) \cosh\left( \widetilde{x}^i \sqrt{\frac{2(N-1) \alpha}{N} }\right) ,
$
 for $i=1,2,\cdots,N$, where $\widetilde{x}^i$ is defined in \eqref{eq:tildex} and $A_N$ is defined in \eqref{eq:AN}.
 Similarly, in case {$\pmb{C_d}$}, when $(\pmb{x},\pmb{y})\in \mathcal{W}^{C_d}_i$,
$
 v_{C_d}^i(\pmb{x},\pmb{y})= p_N(\widetilde{x}^i) + A_N(y^i) \cosh\left( \widetilde{x}^i \sqrt{\frac{2(N-1) \alpha}{N} }\right),
$
  for each $i=1,2,\cdots,N$.
  And in case {$\pmb{C_s}$}, when $(\pmb{x},\pmb{y})\in \mathcal{W}^{C_s}_i$,
$
 v_{C_s}^i(\pmb{x},\pmb{y})= p_N(\widetilde{x}^i) + A_N \left(\sum_{j=1}^N a_{ij}y^j \right) \cosh\left( \widetilde{x}^i \sqrt{\frac{2(N-1) \alpha}{N} }\right),
$
  for each $i=1,2,\cdots,N$.
   By elementary calculations, 
$
 A_N^{\prime}(y)<0.
$
Therefore, when $y=\sum_{j=1}^N y^j$, $(\pmb{x},y)\in \mathcal{W}^{C_p}_i$, and $(\pmb{x},\pmb{y})\in \mathcal{W}^{C_d}_i \cap \mathcal{W}^{C_s}_i$,
 \[
 v^i_{C_p}(\pmb{x},y)\leq v^i_{C_s}(\pmb{x},y) \leq v^i_{C_d}(\pmb{x},\pmb{y}).
 \]
 
The first inequality holds because $y =\sum_{i=1}^N y^i\geq \sum_{i=1}^N a_{ij}y^j$ and the equality holds if and only if $a_{ij}=1$ for each $j=1,2,\cdots,N$. The second inequality holds because $a_{ii}=1$ and the equality holds if and only if $a_{ij}=0$ for each $j \neq i$.
 \end{proof}
\quad This result has a clear economic interpretation. 
In a stochastic game where players have the options to share resources, versus the possibility to divide  resources in advance,    sharing  will have lower cost than dividing. Pooling yields the lowest cost for each player. 
\vskip 3 pt


 \begin{figure}
    \centering 
\begin{subfigure}{0.25\textwidth}
\centering
 \includegraphics[width=0.7\linewidth]{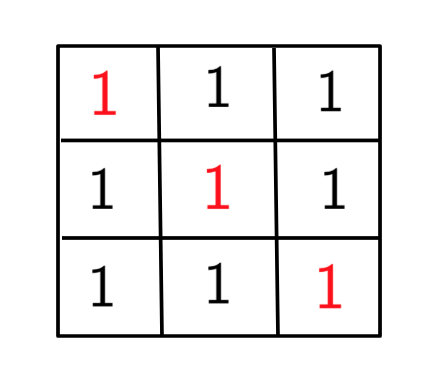}
\end{subfigure}\hfil 
\begin{subfigure}{0.25\textwidth}
\centering
  \includegraphics[width=0.7\linewidth]{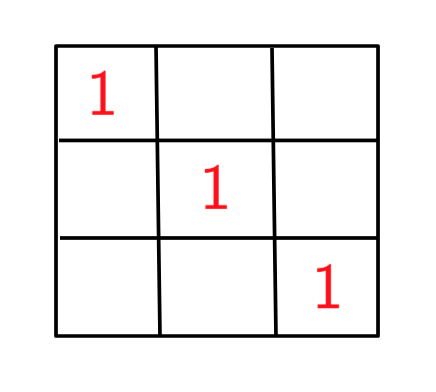}
\end{subfigure}\hfil 
\begin{subfigure}{0.25\textwidth}
\centering
  \includegraphics[width=0.7\linewidth]{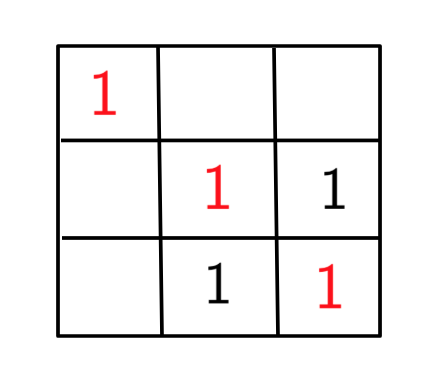}
\end{subfigure}
    \centering 
\begin{subfigure}{0.25\textwidth}
 \includegraphics[width=\linewidth]{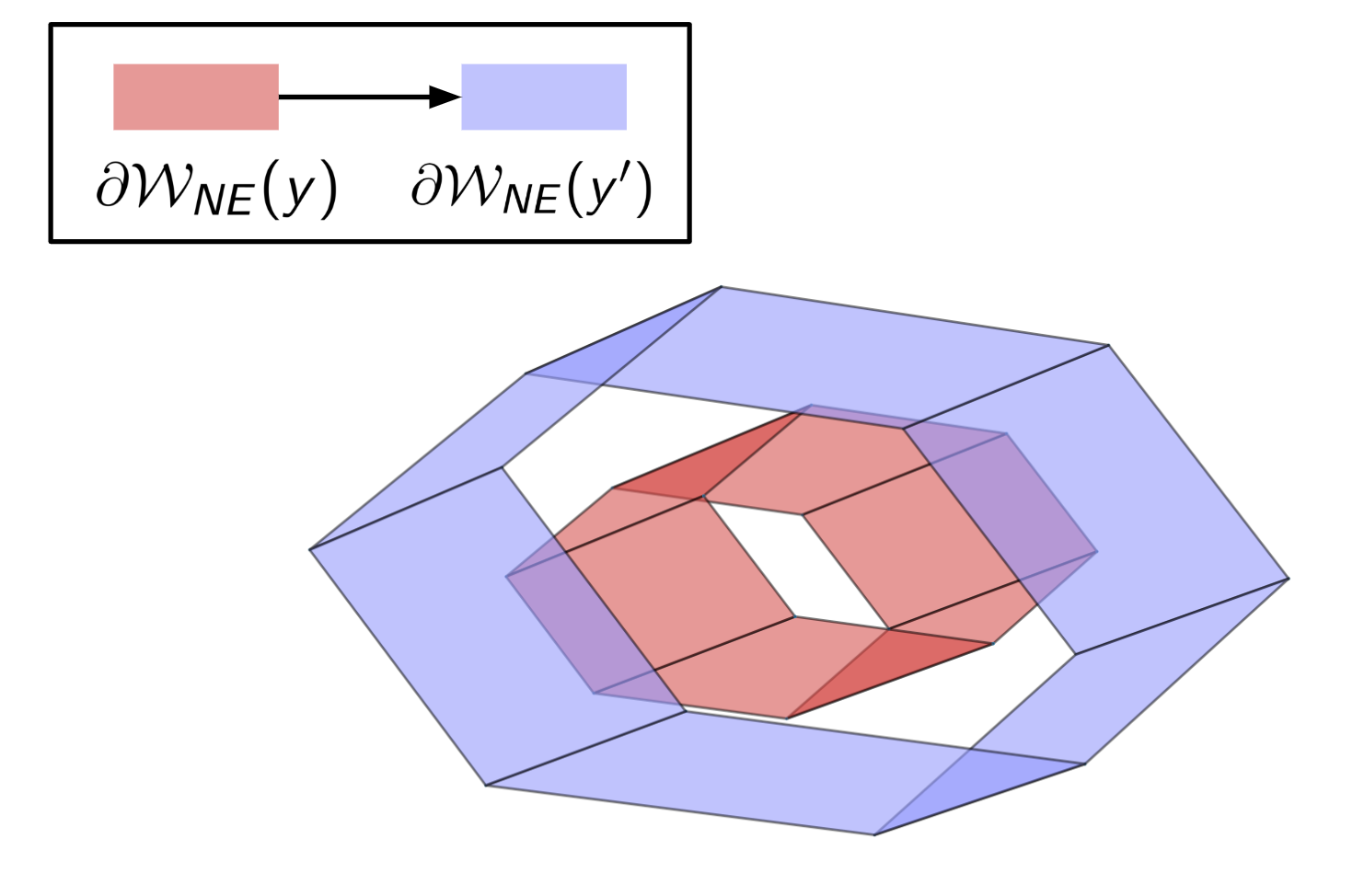}
  \caption{$\pmb{C_p}$}\label{fig:pooling}
\end{subfigure}\hfil 
\begin{subfigure}{0.25\textwidth}
  \includegraphics[width=\linewidth]{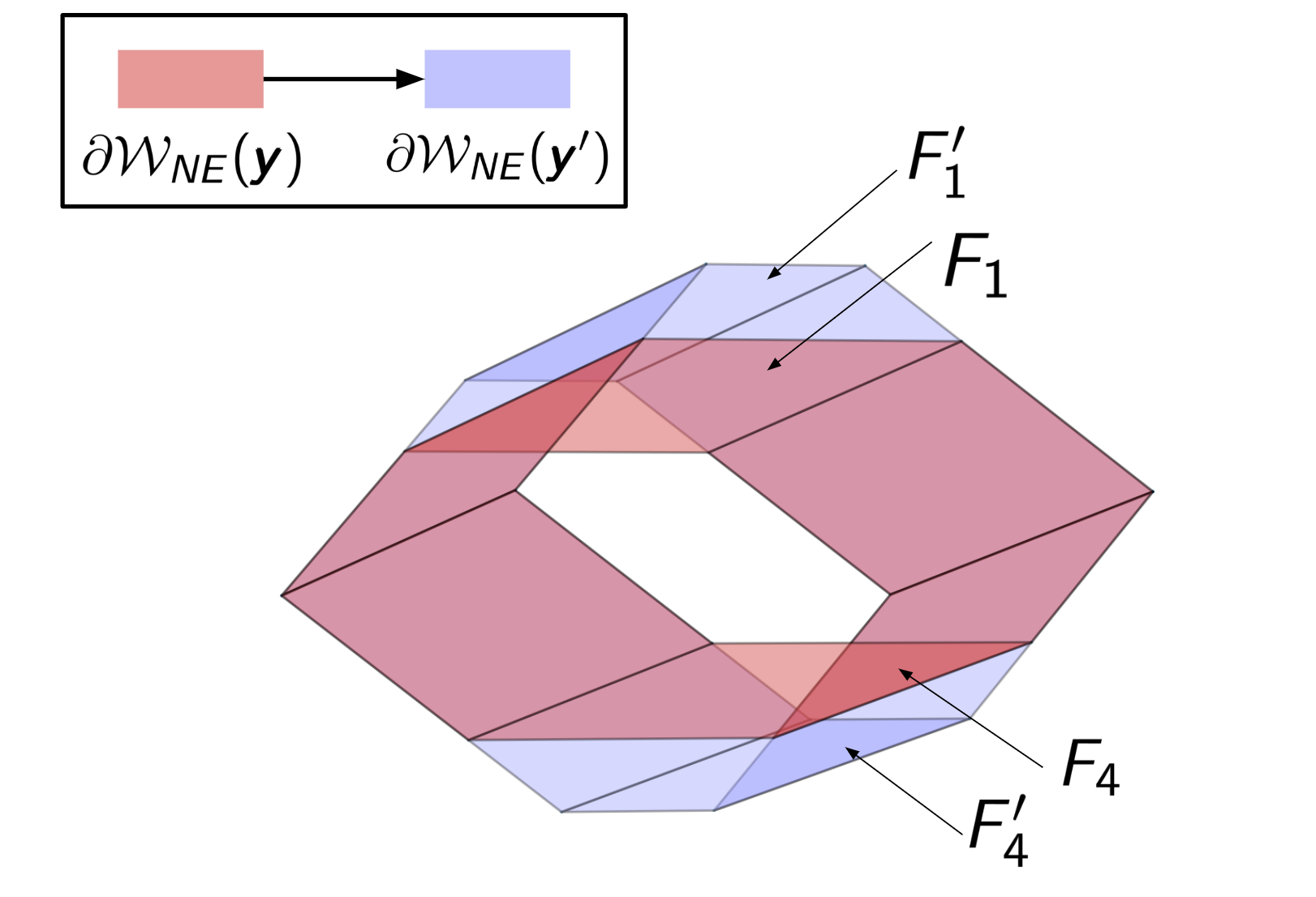}
    \caption{$\pmb{C_d}$}  \label{fig:dividing}
\end{subfigure}\hfil 
\begin{subfigure}{0.25\textwidth}
  \includegraphics[width=\linewidth]{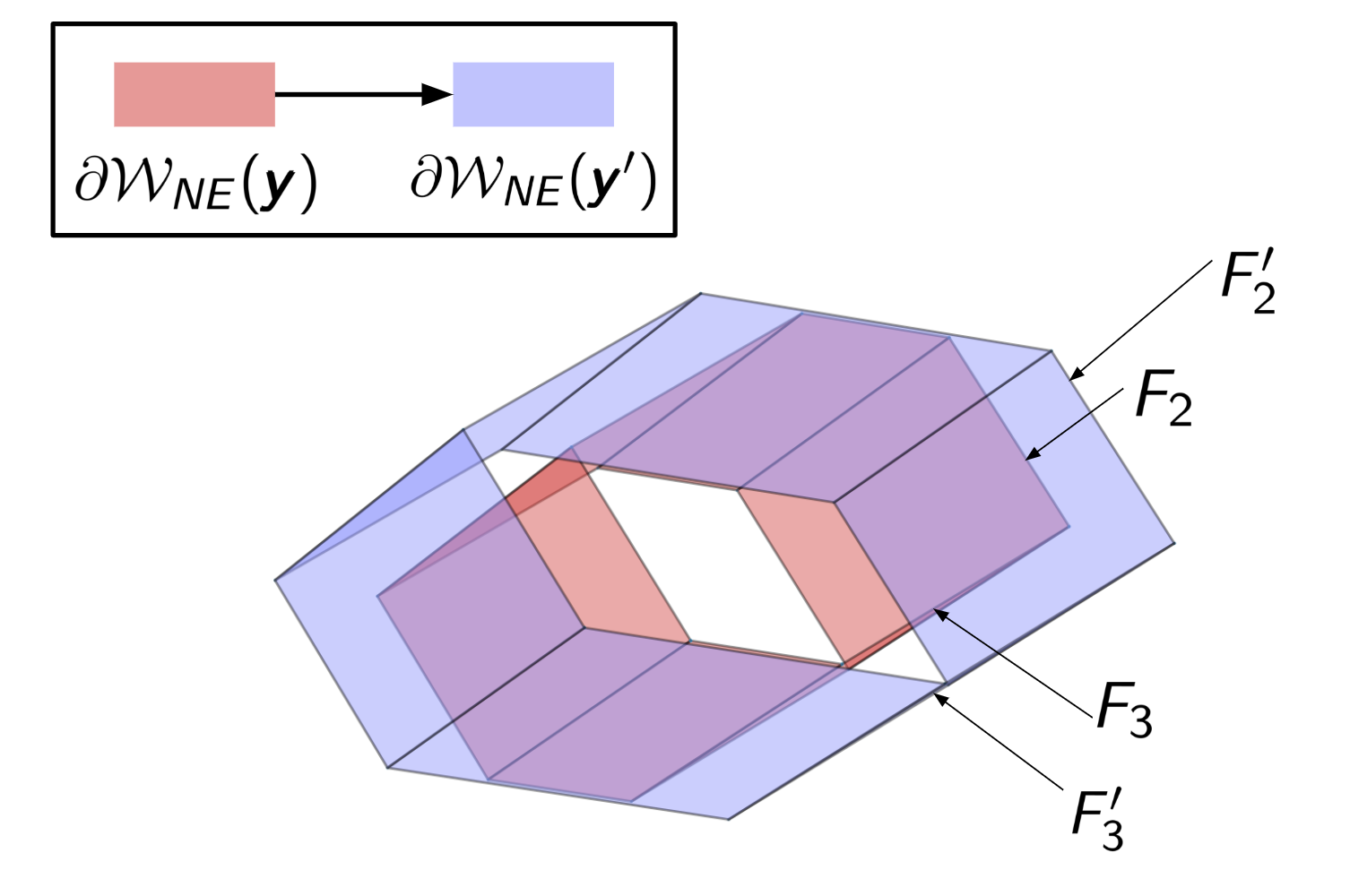}  
    \caption{$\pmb{C}$}  \label{fig:sharing}
\end{subfigure}\hfil 
\caption{Comparison of projected evolving boundaries for $\pmb{C_p}$,
$\pmb{C_d}$, $\pmb{C}$ when $N=3$.}\label{strategy_comparison}
\end{figure}

\quad 
Define the projected common waiting region 
\begin{equation*}
\mathcal{W}_{NE}(\pmb{y}) := \left\{(\pmb{x},\pmb{y}) \in \mathbb{R}^N\times (\mathbb{R}_{+}^*)^M: |\widetilde{x}^i| < f_N^{-1}\left(\sum_{j=1}^M a_{ij}y^j \right) \mbox{ for } 1 \le i \le N \right\} {\cup \{\pmb{y} = 0\}}.
\end{equation*}
for any fixed resource level $\pmb{y}$. Then $\mathcal{W}_{NE}(\pmb{y}) $ is a polyhedron with $2N$ boundary faces. 
Figure \ref{fig:pooling} shows a pooling game $\pmb{C_p}$.  After one player exercises controls, all the faces of the boundary move.
Figure~\ref{fig:dividing} corresponds to a dividing game $\pmb{C_d}$. After player $i$ exercises controls, her faces of $F_i$ and $F_{i+N}$ move. Here $i=1, N=3$.
For a sharing game $\pmb{C}$,  shown in Figure~\ref{fig:sharing}, after one player exercises her controls, the faces of the players who are connected with her will move, while the faces for other players remain unchanged. Here $i=2$ and player $2$ and $3$ are connected.
%
%

%
%
%
%
%
%
%
%
%
%
%
%

\subsection{NEs for the games and controlled rank-dependent SDEs}
\label{section:discussion}

In the previous sections, the controlled dynamics is constructed directly via the reflected Brownian motion.
This class of SDEs can also be cast in the framework of rank-dependent SDEs. 
 Indeed, the controlled dynamics of NE in the action regions of the $N$-player can be written as a {\em controlled rank-dependent SDEs}:
\begin{eqnarray*}
    dX_t^i &=& \sum_{j = 1}^N 1_{F^i(\pmb{X}_t, \pmb{Y}_t)= F^{(j)} (\pmb{X}_t, \pmb{Y}_t)}\left(\delta_{j} dt + \sigma_j dB^j_t + d\xi^{j, +}_t - d\xi^{j, -}_t\right),\quad
d Y_t^j =- \sum_{i = 1}^N  \frac{a_{ij}Y_{s-}^j}{\sum_{k=1}^M a_{ik}Y_{s-}^k}d \check{\xi}^i_s,
\end{eqnarray*}
with $(\xi^{i,+},\xi^{i,-}) $ the controls, 
 $F^i :  \mathbb{R}^{N} \times \mathbb{R}^M_{+} \rightarrow \mathbb{R}$ a rank function depending on both  $\pmb{X}$ and $\pmb{Y}$,  $F^{(1)} \le \cdots \le F^{(N)}$  the order statistics of $(F^i)_{1 \le i \le N}$, 
and $\delta_i \in \mathbb{R}$, $\sigma_i \ge 0$.
\quad  In game  {$\pmb{C_p}$}, the controlled dynamics in the action regions satisfies the SDEs with 
$F^i_{C_p}(\pmb{x}, \pmb{y}) = |x^i - \frac{\sum_{j\ne i}x^j}{N-1} |$, $\delta_i = 0$ and  $\sigma_i = 0$ for each $i = 1, \cdots N$, and
\begin{equation*}
   \xi^{i,\pm} = 0 \quad \mbox{for each } i = 1, \cdots, N-1 \quad \mbox{and} \quad  \xi^{N, \pm} \ne 0.
\end{equation*}
In game $\pmb{C_d}$,
$F^i_{C_d}(\pmb{x}, \pmb{y}) = \Bigg|x^i - \frac{\sum_{j\ne i}x^j}{N-1} - f^{-1}_{N}(y^i)\Bigg|.$
For the general game $\pmb{C}$, the controlled process in the action regions is governed by the rank-dependent dynamics with $F^i_{C}(\pmb{x}, \pmb{y}) = |x^i - \frac{\sum_{j\ne i}x^j}{N-1} - f^{-1}_{N}(\sum_{j=1}^M a_{ij}y^j)|$ with $f_N$ a threshold function defined in \eqref{eq:fNder}-\eqref{eq:intercept} 
and $\delta_i$, $\sigma_i$ and $\xi^{i, \pm}$ satisfying the same condition as before.

\quad Note that the special case without controls, i.e., $F^i(\pmb{x}, \pmb{y}) = x^i$ and $\xi^{i, \pm} = 0$, corresponds to the {\em rank-dependent SDEs}.
In particular, the rank-dependent SDEs with  $\delta_1 = 1$, $\delta_2 = \cdots \delta_N = 0$ is known as the {\em Atlas model}. 
To the best of our knowledge, rank-dependent SDEs with additional controls or a general rank function $F^i$ has not been 
studied before. There are various aspects including uniqueness and sample path properties that await  further investigation and we leave them to interested readers.

\bibliographystyle{abbrv}
\bibliography{FF}

\section*{Appendix A}
Take $n=N$, $m=M$ and $I=2N$, and $\mathcal{I}=\{1,2,\cdots,2N\}$ in Definition \ref{CSRBM}. We then check the satisfiability for Assumptions \textbf{A1}-\textbf{A5} for game $\pmb{C}$. $\pmb{C_p}$ and $\pmb{C_d}$ are two special cases. 

{\bf A1 } Assumption \textbf{A1} is trivially satisfied by definition. We write 
$$
G = \cap_{j=1}^{2N} G_j,
$$
where 
$G_i =\left\{(\pmb{x},\pmb{y})\in \mathbb{R}^{N+M}\right\vert \tilde{x}^i \leq f^{-1}_N \left(\sum_{j=1}^M a_{ij}y^j )\right\}$ and  
$G_{N+i} =\left\{\left.(\pmb{x},\pmb{y})\in \mathbb{R}^{N+M} \right\vert \tilde{x}^i \geq -f^{-1}_N \left(\sum_{j=1}^M a_{ij}y^j \right)\right\}$
for $i=1,2,\cdots,N$. The boundary of $G_i$ is smooth since $f^{-1}_N$ is smooth.

{\bf A2 }
Assumption \textbf{A2} is satisfied since $f^{-1}_N$ is smooth and decreasing. It satisfies the uniform exterior cone condition. At any boundary point $(\pmb{x}_0,\pmb{y}_0) \in \partial G_j$, we can put a truncated closed right circular cone $V_{(\pmb{x}_0,\pmb{y}_0)}$ satisfying $V_{(\pmb{x}_0,\pmb{y}_0)} \cap \bar{G}=\{(\pmb{x}_0,\pmb{y}_0)\}$.

{\bf A3 }
Assumption \textbf{A3} can be shown by contradiction. The proof is inspired from that of \cite[Lemma ({A.2})]{KW2007} which is for bounded region with tightness argument. We modify the proof via a shifting argument.

Suppose that Assumption \textbf{A3} does not hold. {Since there are only finite many subsets $\mathcal{I}_0 \subseteq \mathcal{I}=\{1,2,\cdots,2N\}$ such that $\mathcal{I}_0 \neq \emptyset$,} there is an $\epsilon>0$, a nonempty set $\mathcal{I}_0 \subseteq \mathcal{I}$, a sequence $\{\epsilon_n\} \subset (0,\infty)$ with $\epsilon_n \rightarrow 0$ as $n \rightarrow \infty$, a sequence $\{(\pmb{x}_n,\pmb{y}_n)\} \subset \mathbb{R}^{N+M}$ such that for each $n$, $(\pmb{x}_n,\pmb{y}_n) \in \cap_{j \in \mathcal{I}_0} U_{\epsilon_n}(\partial G_j \cap \partial G)$ and $\text{dist}((\pmb{x}_n,\pmb{y}_n),\cap_{j \in \mathcal{I}_0}( \partial G_j \cap \partial G)) \geq \epsilon$. Note that $\text{dist}((\pmb{x},\pmb{y}),\cap_{j \in \mathcal{I}_0}( \partial G_j \cap \partial G)) =\text{dist}((\pmb{x}-a\textbf{1},\pmb{y}),\cap_{j \in \mathcal{I}_0}( \partial G_j \cap \partial G)) $ for any $a \in \mathbb{R}$ and $(\pmb{x},\pmb{y}) \in \mathbb{R}^{N+M}$. Here $\textbf{1} \in \mathbb{R}^{N}$ is a vector with all ones. Intuitively, this is because for any fixed $\pmb{y}$, the projection of $G$ onto $\pmb{x}$-space is a polyhedron unbounded along the directions of $\pm\bf{1}\in \mathbb{R}^N$.
This is consistent with the model where we only look at the relative distance between positions. {Mathematically speaking,  recall that 
\begin{eqnarray*}
\partial G_i &=&\left\{(\pmb{x},\pmb{y})\in \mathbb{R}^{N+M}\Bigg\vert \tilde{x}^i = f^{-1}_N \left(\sum_{j=1}^M a_{ij}y^j \right)\right\}, \\
\partial G_{N+i} &=&\left\{(\pmb{x},\pmb{y})\in \mathbb{R}^{N+M} \Bigg\vert \tilde{x}^i = -f^{-1}_N \left(\sum_{j=1}^M a_{ij}y^j \right)\right\}.
\end{eqnarray*}
For a given point $\pmb{p} = (\pmb{x},\pmb{y})$, denote $d_k := {\rm dist} ( (\pmb{x},y),\partial G_k) $ for $k=1,2,\cdots,N$. Then there exists a point $\pmb{q}= (\pmb{w},\pmb{z})$  such that
\begin{eqnarray*}
\tilde{w}^i &=& f^{-1}_N \left(\sum_{j=1}^M a_{ij}z^j\right), \quad i.e., \pmb{q} \in \partial G_k\\
\pmb{q} - \pmb{p} &=& d_k \pmb{n}_k (\pmb{q}), \mbox{ or }\quad \pmb{q} - \pmb{p} = -d_k \pmb{n}_k (\pmb{q}).
\end{eqnarray*}
where $\pmb{n}_k (\pmb{q})$ is the normal direction of surface $\partial G_k$ at point $\pmb{q}$:
\begin{eqnarray*}
\pmb{n}_k(\pmb{q}) &=& c_k \left(\frac{1}{N-1},\cdots,-1,\cdots,\frac{1}{N-1};(f_N^{-1})^{\prime}\left(\sum_{j=1}^M a_{ij}z^j\right)a_{i1},\cdots,(f_N^{-1})^{\prime}\left(\sum_{j=1}^M a_{ij}z^j\right)a_{iM}\right).
\end{eqnarray*}

Denote $\tilde{\pmb{p}} =  (\pmb{x}-a\pmb{1},\pmb{y})$ and  $\tilde{\pmb{q}} =  (\pmb{w}-a\pmb{1},\pmb{z})$. Then it is easy to check that 
\begin{eqnarray}
&&\tilde{\pmb{q}} \in \partial G_k, \label{condition1}\\
&&\pmb{n}_k(\pmb{q}) = \pmb{n}_k(\tilde{\pmb{q}}),\label{condition2}\\ 
&&\tilde{\pmb{q}}-\tilde{ \pmb{p}} = \pmb{q} - \pmb{p} = d_k \pmb{n}_k (\pmb{q}) = d_k  \pmb{n}_k(\tilde{\pmb{q}}).\label{condition3}
\end{eqnarray}
\eqref{condition1} holds since $(w^i-a)-\frac{\sum_{j=1,j \neq i}^N (w^j-a)}{N-1} = x^i-\frac{\sum_{j=1,j \neq i}^N w^j}{N-1}$,  
\eqref{condition2} holds since the last M elements, representing the resource levels, are the same for $\pmb{q}$ and $\tilde{\pmb{q}}$, and  \eqref{condition3} holds by definition and \eqref{condition2}.

By \eqref{condition3} we conclude that $ {\rm dist} ( (\pmb{x}-a\pmb{1},y),\partial G_k)=d_k$. Similar results hold for $k=N+1,\cdots,2N$. Therefore we have 
\begin{eqnarray*}
{\rm dist} ((\pmb{x},\pmb{y}),\cap_{j\in \mathcal{I}_0} \partial G_j \cap \partial G) = {\rm dist} ((\pmb{x}-a\pmb{1},\pmb{y}),\cap_{j\in \mathcal{I}_0} \partial G_j \cap \partial G).
\end{eqnarray*}
}

Therefore, for each $(\pmb{x}_n,\pmb{y}_n)$, there exists $a_n \in \mathbb{R}$ such that $\|\pmb{x}_n-a_n\textbf{1} \| \leq 1$. Denote $\tilde{\pmb{x}}_n = \pmb{x}_n-a_n\textbf{1}$. Hence $(\tilde{\pmb{x}}_n,\pmb{y}_n)$ is a bounded sequence in $\mathbb{R}^{N+M}$ and $\text{dist}((\tilde{\pmb{x}}_n,\pmb{y}_n),\cap_{j \in \mathcal{I}_0}( \partial G_j \cap \partial G)) \geq \epsilon$. WLOG, we may assume that $(\tilde{\pmb{x}}_n,\pmb{y}_n) \rightarrow (\pmb{x},\pmb{y})$
as $n \rightarrow \infty$ for some $(\pmb{x},\pmb{y})\in \mathbb{R}^{N+M}$. It follows that $ (\pmb{x},\pmb{y}) \in \cap_{j \in \mathcal{I}_0}( \partial G_j \cap \partial G)$, since for each $j \in \mathcal{I}_0$,
\[
\text{dist}( (\pmb{x},\pmb{y}),\partial G_j \cap \partial G) \leq \|(\tilde{\pmb{x}}_n,\pmb{y}_n)-(\pmb{x},\pmb{y})\| +\text{dist}((\tilde{\pmb{x}}_n,\pmb{y}_n),\partial G_j \cap \partial G) \leq  \|(\tilde{\pmb{x}}_n,\pmb{y}_n)-(\pmb{x},\pmb{y})\| + \epsilon_n \rightarrow 0,
\]
as $n \rightarrow \infty$. This contradicts with the fact that $(\tilde{\pmb{x}}_n,\pmb{y}_n) \rightarrow (\pmb{x},\pmb{y})$ and $\text{dist}((\tilde{\pmb{x}}_n,\pmb{y}_n),\cap_{j \in \mathcal{I}_0}( \partial G_j \cap \partial G)) \geq \epsilon$.

{\bf A4 }
On each face $j=1,2,\cdots,2N$, $\pmb{r}_j$ is a function of $\pmb{y}$, which is bounded. Moreover, $\pmb{r}_j$ is smooth and $D_{\pmb{y}}\pmb{r}_j$ is bounded. Therefore, $\pmb{r}_j(\cdot)$ is uniformly Lipschitz continuous function. Note that when the adjacent matrix $A=\{a_{kj}\}_{1\leq k,j \leq N}$ is an identity matrix or matrix with all ones, $\pmb{r}_i$ is constant on $\partial G_i$ for all $i\in l$.

{\bf A5 }
Denote $g := f^{-1}_N$. First we show that $g$ is a non-negative decreasing function on $[0,y_{\tiny \mbox{total}}]$ where  $y_{\tiny \mbox{total}}:=\sum_{j=1}^M y^j$ is the total resource. 
{ We have proved in Lemma \ref{lemma:f} that $f^{\prime}_N(z) < 0$ for $z \geq 0$.} 
So there exists $0<\tilde{k}(y_{\tiny \mbox{total}})< \tilde{K}(y_{\tiny \mbox{total}})<\infty$ such that $-\infty<-\tilde{K}(y_{\tiny \mbox{total}})<f^{\prime}_N(z)<-\tilde{k}(y_{\tiny \mbox{total}})<0$ when $z \in [\underline{x},\overline{x}]$. Here $\underline{x}=g(y_{\tiny \mbox{total}})>0$ and $\overline{x}=g(0)$. Note that $g^{\prime}(\cdot) = \frac{1}{f^{\prime}(f^{-1}(\cdot))}$, therefore $-\frac{1}{\tilde{k}(y_{\tiny \mbox{total}})} \leq g^{\prime}(w) \leq -\frac{1}{\tilde{K}(y_{\tiny \mbox{total}})}$ when $w\in[0,y_{\tiny \mbox{total}}]$. 
Now let  $k(y_{\tiny \mbox{total}}) := \frac{1}{\tilde{K}(y_{\tiny \mbox{total}})}$ and $K(y_{\tiny \mbox{total}}) := \frac{1}{\tilde{k}(y_{\tiny \mbox{total}})}$.
%
%
%
%
%
%



\quad It is straightforward that all the latter $M$ components in $\pmb{n}_{j}$ and $\pmb{r}_j$ are non-positive ($1\leq j \leq 2N$).
By simple calculation, we have $\frac{1}{\sqrt{\frac{N}{N-1}+K^2(y_{\rm total})N}} \leq c_j \leq \frac{1}{\sqrt{\frac{N}{N-1}+k^2(y_{\rm total})}}$ and $ \sqrt{\frac{N}{N+1}}\leq c_{j}^{\prime} \leq \frac{1}{\sqrt{2}}$for all $1 \leq j \leq N$.
Similar to the definition of $\pmb{r}_{j}^{+}$ and $\pmb{r}_{j}^{-}$, denote $\pmb{n}_j^{+}$ as the first $N$ components in $\pmb{n}^j$ and $\pmb{n}_j^{-}$ as the latter $M$ components in $\pmb{n}^j$.
Since face $i$ and $N+i$ are parallel to each other ($i=1,2,\cdots,N$), there are at most $N$ faces intersecting with each other. It suffices to consider $(\pmb{x},\pmb{y})$ such that $|\mathcal{I}((\pmb{x},\pmb{y}))|=N$. For these points, consider $c_i=\frac{1}{N}$ and $d_i=\frac{1}{N}$ $(i=1,2,\cdots,N)$. Therefore, for $i^* \in \{i,N+i\}$ with $i=1,2,\cdots,N$,
\begin{eqnarray*}
\left\langle \frac{\sum_{i=1}^N {\pmb{n}}_{i*}}{N},{\pmb{r}_{i*}} \right\rangle 
\geq \frac{1}{N}\langle \pmb{n}_{i*}^{-} ,{\pmb{r}_{i*}^{-}}\rangle= \frac{1}{N}c_{i^*}^{\prime}c_{i^*}\langle \pmb{n}_{i*}^{-} ,{\pmb{r}_{i*}^{-}}\rangle = -c_{i^*}^{\prime}c_{i^*}g^{\prime}\left(\sum_{j=1}^M a_{ij}y^j \right) \geq \frac{1}{\sqrt{\frac{N+1}{N-1}+(N+1)K(y_{\tiny \mbox{total}})}}k(y_{\tiny \mbox{total}}).
\end{eqnarray*}
\quad Similarly,  for $i^* \in \{i,N+i\}$ with $i=1,2,\cdots,N$,
\begin{eqnarray*}
\left\langle \frac{\sum_{i=1}^N {\pmb{r}}_{i*}}{N},\pmb{n}_{i*} \right\rangle 
\geq \frac{1}{N}\langle \pmb{n}_{i*}^{-} ,{\pmb{r}_{i*}^{-}}\rangle= \frac{1}{N}\langle \pmb{n}_{i*}^{-} ,{\pmb{r}_{i*}^{-}}\rangle = -c_{i^*}^{\prime}c_{i^*}g^{\prime}\left(\sum_{j=1}^M a_{ij}y^j \right) \geq \frac{1}{\sqrt{\frac{N+1}{N-1}+(N+1)K(y_{\tiny \mbox{total}})}}k(y_{\tiny \mbox{total}}).
\end{eqnarray*}


\section*{Appendix B}
\subsection*{Verification of H3-${\bf C_p}$ when $N=2$.}
When $N=2$, we have $E_{1}^{+} = E_{2}^{-}$, $E_{2}^{+} = E_{1}^{-}$ and $\mathcal{W}_{NE} = \{(x^1,x^2,y)\,|\, |x^1-x^2|\leq f^{-1}_N(y)\}\cup \{y=0\}$. We set $Q_1 = \{(x_1,x_2,y)\in \mathbb{R}^2\times \mathbb{R}_{+}\,|\, x_1-x_2 \ge 0\}$ and $Q_2 = \{(x_1,x_2,y)\in \mathbb{R}^2\times \mathbb{R}_{+}\,|\, x_2-x_1 >0\}$. In this case, $\mathcal{A}_1 = E_{1}^{+}$ and $\mathcal{A}_2 =E_{2}^{+}$. When $(\pmb{x},y)\in \mathcal{A}_1$, there are two possibilities: either $(\pmb{x},y)\in \mathcal{A}_1\cap E_{1,1}^{+}$ or $(\pmb{x},y)\in \mathcal{A}_1\cap E_{1,2}^{+}$. If $(\pmb{x},y)\in \mathcal{A}_1\cap E_{1,1}^{+}$, then $\pmb{q} = (x^2 + x^1_{+},x^2,f(x^1_{+}))$ with $x_{+}^1$ the unique positive root such that $z-f_N(z) = x^1-x^2 -y$. Then it is easy to check that $\pmb{q}\in \partial \mathcal{W}_{NE}$. To see this, $(x^2 + x^1_{+} )- x^2 = x^1_{+}  = f_N^{-1}(f_N(x_{+}^1)) $. If  $(\pmb{x},y)\in \mathcal{A}_1\cap E_{1,2}^{+}$, then $\pmb{q} = (x^1 -y,x^2,0)$. Then $\pmb{q}\in \partial \mathcal{W}_{NE}$ since $y=0$. Similar analysis holds for $(\pmb{x},y)\in \mathcal{A}_2$ by symmetry.

\end{document}